\documentclass[acmsmall]{acmart}

\AtBeginDocument{%
  }

\usepackage{graphicx}
\usepackage{amsmath}
\usepackage{amsthm}
\usepackage{url}
\usepackage{xspace}
\usepackage{balance}
\usepackage{makecell}
\usepackage{multirow}
\usepackage{subfig}
\usepackage{setspace}
\usepackage{hyperref}
\usepackage{enumitem}
\usepackage{xcolor}
\usepackage{booktabs}
\usepackage[linesnumbered,ruled,noend]{algorithm2e}

\newcolumntype{M}[1]{>{\centering\arraybackslash}m{#1}}

\graphicspath{{images/}}

\SetKwProg{Fn}{Function}{}{}	
\SetKwFor{ForEach}{for each}{do}{endfch}
\SetKwInput{KwInput}{Input}
\SetKwInput{KwOutput}{Output}    
\DontPrintSemicolon

\setlist{nolistsep,leftmargin=*}

\newtheorem{thm}{\textbf{Theorem}}
\newtheorem{lem}{\textbf{Lemma}}
\newtheorem{prob}{\textbf{Problem}}

\newcommand{\figwidth}{0.12\textwidth}
\newcommand{\examplewidth}{0.22\textwidth}
\newcommand{\gc}{\textsc{ConfiGS}\xspace}
\newcommand{\cn}{\name}
\newcommand{\name}{CGS\xspace}
\newcommand{\iu}{\textsc{\name-E}\xspace}		% 0
\newcommand{\inter}{\textsc{\name-I}\xspace}	% -1
\newcommand{\uni}{\textsc{\name-U}\xspace}		% +1

\newcommand{\insertkey}{Insert-Key\xspace}
\newcommand{\exmax}{Extract-Max\xspace}
\newcommand{\deckey}{Decrease-Key\xspace}
\newcommand{\inckey}{Increase-Key\xspace}
\newcommand{\eps}{\ensuremath{\delta}\xspace}
\newcommand{\dist}{\ensuremath{d}\xspace}

\newcommand{\rl}{\ensuremath{nl}\xspace}

\newcommand{\nlt}{\ensuremath{\Delta}\xspace}

\newcommand{\maxd}{\ensuremath{{d^*}}\xspace}
\newcommand{\ic}{\ensuremath{ic}\xspace}
\newcommand{\uc}{\ensuremath{uc}\xspace}
\newcommand{\fp}{\ensuremath{\operatorname{\emph{var}}}\xspace}
\newcommand{\ls}{\fp}

\newcommand{\greedycn}{\cn\xspace}
\newcommand{\issafemerge}{IsSafeMerge\xspace}
\newcommand{\updatedegree}{UpdateDegree\xspace}
\newcommand{\updateheap}{UpdateCG\xspace}
\newcommand{\merge}{MergeNodes\xspace}
\newcommand{\decomp}{Reconstruction\xspace}

\newcommand{\snmap}{Anc\xspace}

\newcommand{\desc}{Desc\xspace}
\newcommand{\child}{Child\xspace}
\newcommand{\nq}{NeighborhoodQuery\xspace}
\newcommand{\computecg}{ComputeCG\xspace}
\newcommand{\buildheap}{DiscoverCN\xspace}
\newcommand{\nic}{ComputeNIC\xspace}

\newcommand{\true}{\emph{true}\xspace}
\newcommand{\false}{\emph{false}\xspace}

\newcommand{\cutfigabove}{\vspace*{-6mm}}
\newcommand{\cutfigbelow}{\vspace*{-4mm}}
\newcommand{\cuttabbelow}{\vspace*{-5mm}}

\setcopyright{cc}
\setcctype{by}
\copyrightyear{2025}
\acmYear{2025}
\acmDOI{10.1145/3786788}

\acmJournal{TKDD}
\acmVolume{0}
\acmNumber{0}
\acmArticle{0}
\acmMonth{0}

\begin{document}

\title[CGS: Configurable Graph Summarization with Bounded Neighborhood Loss and Query Support]{CGS: Configurable Graph Summarization with Bounded Neighborhood Loss and Query Support}

\author{Shubhadip Mitra}
\orcid{0000-0002-0444-9652}
\affiliation{%
  \institution{Blue Yonder India Pvt. Ltd.}
  \city{Bengaluru}
  \country{India}
}
\email{shubhadip.mitra@blueyonder.com}

\author{Sona Elza Simon}
\orcid{0000-0001-5136-7273}
\affiliation{%
  \institution{Centre for Machine Intelligence and Data Science, Indian Institute of Technology Bombay}
  \city{Mumbai}
  \country{India}
}
\email{sonasimonp@gmail.com}

\author{C Oswald}
\orcid{0000-0002-1251-1495}
\affiliation{%
  \institution{Dept. of Computer Science and Engineering, National Institute of Technology Tiruchirappalli}
  \city{Tiruchirapalli}
  \country{India}
}
\email{oswald@nitt.edu}

\author{Arnab Bhattacharya}
\orcid{0000-0001-7331-0788}
\affiliation{%
  \institution{Dept. of Computer Science and Engineering, Indian Institute of Technology Kanpur}
  \city{Kanpur}
  \country{India}
}
\email{arnabb@cse.iitk.ac.in}

\author{Arindam Pal}
\orcid{0000-0001-5710-7464}
\affiliation{%
  \institution{TechSoftX Corporation}
  \city{Sydney}
  \state{New South Wales}
  \country{Australia}
}
\email{arindamp@gmail.com}

\renewcommand{\shortauthors}{Shubhadip Mitra, Sona Elza Simon, C Oswald, Arnab Bhattacharya, and Arindam Pal}

\begin{abstract}
	Given a large graph, how to generate a compact summary graph that is
	configurable by the user and supports multiple graph queries with either
	\emph{no loss} or with \emph{high accuracy}? The ever growing size of graph
	datasets makes the above question on graph summarization very pertinent.
	Although, there are several approaches, there does not exist a
	\emph{configurable} graph summarization method that offers high compression
	along with support for multiple graph queries on the summary graph with
	high accuracy, and allows the user to configure the summarization based on:
	(1)~lossless or lossy summarization, (2)~amount of tolerable neighborhood
	loss, (3)~the type of loss it can tolerate, in terms of \emph{false
	positive edges} (i.e., extra edges), \emph{false negative edges} (i.e.,
	missing edges), or neither, in both the (a)~reconstructed graph and the
	(b)~query answers.  To overcome these limitations, we propose a novel graph
	summarization framework \emph{\cn (Configurable Graph Summarizer)} that
	builds upon the idea of aggregating nodes with \emph{common neighborhoods}.
	The \cn framework consists of three summarization variants, \iu, \inter and
	\uni. While \iu is a lossless scheme, \inter and \uni are lossy  schemes
	that allow reconstruction of the input graph with \emph{no false positive}
	edges and \emph{no false negative} edges, respectively. To bound the graph
	reconstruction loss, we introduce a user-specified parameter
	\emph{neighborhood loss tolerance threshold}, that limits the maximum loss
	allowed in the	neighborhood of each node. This allows graph reconstruction
	and neighborhood query evaluation with either \emph{no loss} or with
	\emph{bounded loss guarantees}. This, in turn, enables retrieval of
	multiple graph queries such as shortest path and reachability queries with
	either no loss or with fairly high accuracy. Empirical evaluation on
	several synthetic and real-world graphs shows that \cn offers superior
	summarization than the state-of-the-art methods, and can answer graph
	queries with fairly high accuracy and efficiency. The implementation code
	and the datasets are available at
	\url{https://github.com/sonaelzasimon/CGS_Configurable_Graph_Summarization}.
\end{abstract}

\begin{CCSXML}
<ccs2012>
   <concept>
       <concept_id>10002951.10002952.10002971.10003451.10002975</concept_id>
       <concept_desc>Information systems~Data compression</concept_desc>
       <concept_significance>500</concept_significance>
       </concept>
   <concept>
       <concept_id>10002951.10002952.10002953.10010146</concept_id>
       <concept_desc>Information systems~Graph-based database models</concept_desc>
       <concept_significance>500</concept_significance>
       </concept>
 </ccs2012>
\end{CCSXML}

\ccsdesc[500]{Information systems~Data compression}
\ccsdesc[500]{Information systems~Graph-based database models}

\keywords{Graph Summarization, Graph Compression, Web Graphs, Social Networks, Common Neighborhoods}

\received{7th February, 2025}
\received[Revised]{1st September, 2025}
\received[Accepted]{24th October, 2025}

\maketitle

\section {Introduction}
\label{sec:intro}

One of the key challenges in graph data mining is the ever increasing
size of the graphs that prohibits its scalability in terms of storage,
communication and analysis. To address this challenge, several graph
summarization schemes have been proposed
\cite{navlakha2008graph,khan2017summarizing,lee2022slugger,
lai2023optimized} that generate a compact summary of a graph that allows
reconstruction of the original graph with minimal or no loss. A summary
graph is typically produced by aggregation of nodes and edges of the
input graph. Due to its numerous applications \cite{liu2018graph}, graph
summarization has remained an area of active research for the past two
decades.

\subsection{Background and Motivation}

A graph summarization scheme is
defined as follows.  An input graph $G$ is \emph{summarized} to a graph $G_s$
that is later \emph{decompressed} or \emph{expanded} to produce a graph $G_r$, which is
referred to as the \emph{reconstructed graph}. A graph summarization scheme (along with
its corresponding reconstruction) is said to be \emph{lossless} if and only if
$G_r=G$; else, it is \emph{lossy}.

Though there are several graph summarization works, most of them lack
\emph{configurability} as required by the user with \emph{no} or
\emph{limited } support for queries. They are either lossless
\cite{lee2022slugger,khan2015set,fan2012query, lai2023optimized,
berberidis2022summarizing, ramezani2022graphguess}, or lossy
\cite{nejad2021graph,li2019graph,lee2020ssumm, anagnostopoulos2024general}. There are limited works that allow both
lossless as well as lossy summarization
\cite{shin2019sweg,navlakha2008graph}. Moreover, majority of the lossy
summarization schemes do not allow the user to control the loss
\cite{nejad2021graph,li2019graph,yong2021efficient, anagnostopoulos2024general, lai2023optimized}. Hence, these works fail
to provide bounded quality guarantees on queries.  Further, none of the
existing works allow the flexibility to choose the type of loss in terms
of either \emph{false positive} edges (i.e., extra edges), or
\emph{false negative edges} (i.e., missing edges) in the reconstructed
graph. Finally, there are many works that offer good summarization but
provide no support for queries
\cite{navlakha2008graph,lee2022slugger,yong2021efficient,lee2020ssumm, lai2023optimized, berberidis2022summarizing}. 

In particular, through an extensive review of the prior works (Sec.~\ref{sec:related}), we observe that there does not exist a general purpose \emph{configurable} graph summarization framework that offers high compression along with \emph{all} the following desirable capabilities, where the user is allowed to:
\begin{enumerate}[label=(\arabic*)]
	\item \emph{choose} between \emph{lossless} and \emph{lossy} summarization,
	\item \emph{control} the amount of \emph{neighborhood loss} that it can tolerate,
	\item \emph{run graph queries} on the summary graph (without decompressing it) including \emph{neighborhood} queries, \emph{reachability} queries and \emph{shortest path} queries, and
	\item \emph{specify} the \emph{type of loss} it can tolerate---\emph{false positive edges} (i.e., extra edges) or \emph{false negative edges} (i.e., missing edges)---in the reconstructed graph or query answers.
\end{enumerate} 

In this work, we introduce the \gc (Configurable Graph Summarization)
problem that achieves all the above desirable properties and
proposes a solution framework, \textbf{\cn} \emph{(Configurable Graph
Summarizer)}.  The framework offers three summarization variants:
\begin{enumerate}[label=(\arabic*)]
	\item \textbf{\iu} (\name-Exact): a \emph{lossless} scheme that
		allows neither false positive edges nor false negative edges in
		the reconstructed graph
	\item \textbf{\inter} (\name-Intersection): a lossy scheme that
		allows only false negative edges, but no false positive edges
	\item \textbf{\uni} (\name-Union): a lossy scheme that allows only
		false positive edges, but no false negative edges
\end{enumerate}

Thus, the two lossy schemes in our framework have only \emph{one-sided
errors}. Hence, an application can judiciously select whatever fits its
requirements the best. For example, a friend recommendation application
may tolerate false positive answers, but not false negative answers,
while a route navigation application may tolerate false negative edges,
but not false positive edges in its recommended route.

The \name framework leverages the fact that several nodes in a graph \emph{share
common neighbors}.  Hence, nodes that share common neighbors can be
aggregated into a single supernode.  The exact details of the schemes are
described in Sec.~\ref{sec:framework}. In contrast to many prior works \cite{navlakha2008graph,khan2015set,shin2019sweg,ko2020incremental,yong2021efficient}, this work does not use correction edges to control the reconstruction loss, which avoids the space overhead to store the correction edges. This in turn, leads to superior compression ratio, as shown in Sec.~\ref{sec:results}.  

\subsection{Major Contributions}

The key contributions of this work are as follows:

\begin{enumerate}[label=(\arabic*)]

		\item {\bf Configurable Graph Summarization Framework:} We introduce
			the problem of configurable graph summarization, called \gc,
			for summarizing large, undirected, unweighted graphs, and
			propose a solution framework called \cn,  based on the discovery
			of common neighbors.  The \cn framework offers three
			summarization variants, namely \iu, \inter and \uni, While \iu
			is a lossless summarization scheme, \inter and \uni are lossy
			summarization schemes that allow reconstruction with \emph{no
			false positive} edges and \emph{no false negative} edges, respectively.

		\item \textbf{Support for Multiple Graph Queries:} The \cn framework
			can handle multiple graph queries including neighborhood
			queries, shortest path queries and reachability queries. Without
			compromising on query accuracy, the \cn framework can answer
			queries on the summary graph without reconstructing the input
			graph. To this end, it uses the concept of local decompression,
			explained in Sec.~\ref{sec:query}.  
	
		\item \textbf{Bounded Neighborhood Loss:} The lossy variants of \cn,
			namely, \inter and \uni, offer parameterized control to bound
			the neighborhood loss. More specifically, we introduce a user
			specified parameter, the \emph{neighborhood loss tolerance
			threshold}, that quantifies the maximum loss allowed in the
			neighborhood of each node in the graph. This allows graph reconstruction
			and neighborhood query evaluation with either \emph{no loss} or
			with \emph{bounded loss guarantees}. This, in turn, enables
			retrieval of other graph queries such as shortest path and
			reachability queries with either no loss or with fairly
			high accuracy.
	
		\item {\bf Extensive Benchmarking:} Empirical evaluation on multiple
			synthetic and real-world graphs demonstrates that \cn offers superior
			summarization than the state-of-the-art methods. Moreover, it also showcases the ability of \cn to
			answer a wide variety of graph queries with high accuracy and
			efficiency.

\end{enumerate}

\subsection{Organization of the Article}

The rest of the article is organized as follows. Sec.~\ref{sec:related} discusses the related works. Sec.~\ref{sec:formulation} states the preliminary concepts and formally states the configurable graph summarization problem. The \cn framework is presented in Sec.~\ref{sec:framework} and Sec.~\ref{sec:greedycn}.  Sec.~\ref{sec:query} describes the query processing through the proposed framework. Sec.~\ref{sec:cgs_correctness} establishes the correctness of \cn.
In addition, the graph properties that are necessary for summarization through the \cn framework are stated in Sec.~\ref{sec:properties}.
The complexity of \cn is analyzed in Sec.~\ref{sec:complexity}.
The empirical findings are presented in Sec.~\ref{sec:results}. The conclusions and future works are discussed in Sec.~\ref{sec:conc}.

\section{Related Work}
\label{sec:related}

\begin{table*}[!thp]
\centering
	\resizebox{\textwidth}{!}{
  \begin{tabular}{ccccccc}
		\toprule
		\multirow{2}{*}{\textbf{Algorithm}} & \multirow{2}{*}{\textbf{Type of Graph}} & \textbf{Summarization} & \textbf{Bound on} & \textbf{Loss Type} & \textbf{Query Support} & \multirow{2}{*}{\textbf{Key Metrics}} \\ 
		& & \textbf{Type} & \textbf{Neighborhood Loss} & \textbf{(FP/FN)} & \textbf{(NQ/SPQ/RQ)} & \\
		\midrule
		\cn (this work) & Undirected, Unweighted & Both & Yes & Either FP or FN or none & NQ, SPQ, RQ & Compression ratio $cr = \frac{|G_s|}{|G|}$ \\
		\midrule	
		SWeG \cite{shin2019sweg} & Undirected, Unweighted & Both & Yes & Both & NQ & Size of output $= |P^*| + |C^+| + |C^-|$ \\
		APXMDL \cite{navlakha2008graph} & Undirected, Unweighted & Both & Yes & Not stated & None & Compression ratio $cr$ \\
		GraphZip \cite{rossi2018graphzip} & Undirected, Unweighted & Both & No & None & NQ, SPQ, RQ & Space savings $ = 1 - \frac{|G_s|}{|G|}$ \\
		GSQG \cite{riondato2017graph} & Undirected, Unweighted & Both & No & None & NQ & Compression ratio $cr$\\
		SLUGGER \cite{lee2022slugger} & Undirected, Unweighted & Lossless & Not applicable & Not applicable & None & Encoding cost $= |P^+| + |P^-| + |H|$ \\
		MoSSo \cite{ko2020incremental} & Undirected, Unweighted & Lossless & Not applicable & Not applicable & NQ & Size of output $= |P_t| + |C_t^+| + |C_t^-|$ \\
		SAGS \cite{khan2015set} & Undirected, Unweighted & Lossless & Not applicable & Not applicable & None & Compression ratio $cr$ \\
		compressR \cite{fan2012query} & Directed, Weighted & Lossless & Not applicable & Not applicable & RQ & Compression ratio $cr$ \\
		optGS \cite{lai2023optimized} & Undirected, Unweighted & Lossless & No & False Positive & None & Compactness = $\frac{|P|+|C^+|+|C^-|}{|E|}$ \\
		LM-GSUM \cite{berberidis2022summarizing} & Directed, Weighted & Lossless & Not applicable & Not applicable & None & Compression Ratio = $\frac{Bits\_Before - Bits\_After}{Bits\_Before}$ \\
		NQFCSN \cite{maserrat2010neighbor} &Directed, Unweighted & Lossless & Not applicable & Not applicable & NQ & bits per edge rate \\
		Traverse+K2T \cite{nejad2021graph} & Undirected, Unweighted & Lossy & No & Both & NQ, RQ & Compression ratio $cr$ \\
		StarZip \cite{li2019graph} & Undirected, Unweighted & Lossy & No & Not stated & SPQ &  $vcr = \frac{|V|}{|V_s|}$, $ecr = \frac{|E|}{|E_s|}$ \\
		GraSS \cite{lefevre2010grass} & Undirected, Unweighted & Lossy & No & Not stated & NQ & Reconstruction error \\
		LDME \cite{yong2021efficient} & Undirected, Unweighted & Lossy & No & Not stated & None & Compression $c = 1 - \frac{|P|+|C^+|+|C^-|}{|E|}$ \\
		SSumM \cite{lee2020ssumm} & Undirected, Weighted & Lossy & Yes & Not stated & None & Reconstruction error \\
		SSAG \cite{ali2024ssag} & Undirected, Unweighted & Lossy & No & Not stated & None & Reconstruction error \\
		GPQPS \cite{anagnostopoulos2024general} & Undirected, Weighted & Lossy & No & Not stated & SPQ & Relative error \\
		GRAPHGUESS \cite{ramezani2022graphguess} & Undirected, Unweighted & Not stated & Not applicable& Not applicable & SPQ & --- \\
		QMG \cite{nabti2017querying} & Undirected, Unweighted & Not stated & Not applicable & Not applicable & No specific query &  Compression Rate = $\frac{|E(C(G))|}{|E(G)|}$  \\
		\bottomrule
	\end{tabular}
	}
	\caption{Comparison of key related works in terms of their various properties (FP = False Positive, FN = False Negative, NQ = Neighborhood Query, SPQ =  Shortest Path Query, RQ = Reachability Query)}
	\label{tab:related}
\end{table*}

The related works can be broadly classified into two categories:
\begin{enumerate}
	\item \emph{Graph compression methods}: \cite{boldi2004webgraph,grabowski2010tight,liakos2014effect, besta2019slim, besta2018survey, li2019graph, maneth2015survey, nejad2021graph}
	\item \emph{Graph summarization methods}: \cite{beg2018scalable, li2019graph, kang2022personalized, ko2020incremental, navlakha2008graph, ali2024ssag, lai2023optimized, berberidis2022summarizing, maserrat2010neighbor}
\end{enumerate}

While
the former methods attempt to produce a compressed representation of the input graph by optimizing the minimum description length (MDL) using encoding techniques to compress frequently
occurring subgraphs in a graph, the latter produces a summary graph of the input graph while still retaining its key 
structural properties, either exactly or approximately. The former methods usually do not allow querying on the compressed graph without fully decompressing it \cite{boldi2004webgraph,grabowski2010tight,liakos2014effect}. On the other hand, the summary graphs produced by some of the latter methods may support queries \cite{riondato2017graph,shin2019sweg, anagnostopoulos2024general}. Given that graph compression methods are complementary to this work that falls in the category of graph summarization methods, the related work discussion primarily focuses on graph summarization.

\subsection{Comparison of Key Related Work}
\label{sec: comparison of related work}

Table~\ref{tab:related} presents a comparison of the key related works based on
	their configurable capabilities and query support. More specifically, the
	works have been categorized based on the type of input graph, type of
	summarization, i.e., lossy, lossless or both, whether the neighborhood loss
	is bounded, type of loss, i.e., false positive edges, false negative edges
	or both, support for queries on the summary graph such as neighborhood
	queries (NQ), reachability queries (RQ) and shortest path queries (SPQ),
	and the key metrics evaluated. In the table, $G = (V,E)$ refers to the
	input graph, $G_s = (V_s,E_s)$ refers to the summary graph, $P$ is the set
	of super edges of the summary graph $G_s$, $C^+$ is the set of edges to be
	inserted, $C^-$ is the set of edges to be deleted, and $H$ is the set of
	hierarchy edges between the supernodes. For further details on the key
	metrics stated in the table, please refer to the respective works.  The
	algorithms are ordered based on their similarity to \cn.

SWeG \cite{shin2019sweg} is a parallel graph summarization algorithm that is  designed for both shared-memory and MapReduce settings. SWeG repeats two steps: (1)~dividing the input graph into small subgraphs, and (2)~processing the subgraphs in parallel without having to load the entire graph in main memory. SWeG is closely related to \cn and it offers both lossy as well as lossless summarization variants, in addition to bounded loss of neighborhood and support for neighborhood queries. However, the reconstructed graph can contain both false positive as well as false negative edges. In particular, it does not allow the user to configure the type of loss in terms of its tolerance to false positive edges or false negative edges in the reconstructed graph and the neighborhood queries.

Navlakha et al. \cite{navlakha2008graph} considered the problem of computing the Minimum Description Length (MDL) representation of a graph. Although this method offers bounded neighborhood loss, it does not allow
any control over false positive and false negative edges in the
reconstructed graph. Moreover, it does not support query processing. 
GraphZip \cite{rossi2018graphzip} is based on decomposing a graph into a number of cliques of varying
sizes. It does not offer any bound on the neighborhood loss. Further, evaluating on just
two datasets, the authors claimed that their approach competes favorably with the
Layered Label Propagation (LLP) technique \cite{boldi2011layered}.
GSQG \cite{riondato2017graph} is based on partitioning the original set of
vertices into a small number of supernodes connected by super edges to form a
complete weighted graph. To quantify the dissimilarity between the original
graph and a summary, the authors optimized the reconstruction error and the
cut-norm error. They developed the first polynomial-time approximation
algorithms to compute the best possible summary of a certain size under both
measures. However, compared with \cn, this method neither offers direct control
on the neighborhood loss, nor it allows any choice of either false positive
or false negative edges in the reconstructed graph and neighborhood
queries. 

SLUGGER \cite{lee2022slugger} is a lossless hierarchical summarization technique that extends the work of \cite{navlakha2008graph}.  Although this method is effective in summarizing massive graphs, it \emph{does not} support query processing. 
Ko et al. \cite{ko2020incremental} presented MoSSo for lossless
graph summarization of fully dynamic graphs. With each edge insertion or
deletion, this method moves nodes among the supernodes in the summary graph.  
Experimental results (Sec.~\ref{sec:results}) show that the lossless scheme
of \cn, i.e., \iu offers considerably \emph{better summarization} than both, SLUGGER and  
MoSSo. 
SAGS \cite{khan2015set} is a set-based summarization approach that summarizes naturally occurring sets of similar nodes that are identified using locality sensitive hashing. This method, however, does not support queries.
CompressR \cite{fan2012query} is a lossless  compression method designed  for directed and weighted graphs. However, their approach is restricted to only two specific classes of queries, reachability and graph pattern. 

Traverse+K2T \cite{nejad2021graph} is based on the transitivity property often found in social networks and
web graphs.  However, it has neither any bound on the neighborhood loss nor any support for answering queries.
StarZip \cite{li2019graph} is a lossy compression scheme for streaming graphs. It repeatedly identifies dominating sets in a
graph, and represents each such dominating set by a supernode. However,
there is no control over false positive and false negative edges, or
any guarantee on the neighborhood loss. GraSS \cite{lefevre2010grass} is based on a random world model which does not offer any bound on the neighborhood loss. Further, this method seeks to minimize the reconstruction error rather than compression ratio.
LDME \cite{yong2021efficient} is a correction set based graph summarization algorithm that is based on weighted locality sensitive hashing. The algorithm allows trading compression for running time. However, there is no support to answer queries and there is no guarantee on the quality of the reconstructed graph.
SSumM \cite{lee2020ssumm} is a sparse summarization method based on the MDL principle. This method, however lacks support for queries and focuses on reducing the reconstruction error rather than the compression ratio.

The technique by Maserrat et al. \cite{maserrat2010neighbor} proposes an Eulerian data structure with multi-position linearizations to compress social networks while allowing sublinear-time in-neighbor and out-neighbor queries without decompression. Although it is interesting to see that the algorithm provides a high compression ratio along with supporting both neighbor query types and provides theoretical bounds with empirical validation, it addresses only lossless versions and focuses on neighbor queries alone. It may not generalize well to other query types, and involves complex preprocessing. On the other hand, our proposed approach \cn focuses on Neighborhood query, Reachability query and Shortest-path query. Nabti et al. \cite{nabti2017querying} uses modular decomposition to compress both query and data graphs, enabling subgraph isomorphism search directly on compressed graphs to reduce search space. It reduces storage and computation to some extent and is scalable to massive graphs, and avoids decompression. It is best suited with a restriction to labeled graphs with structural regularities only. Moreover, the compression quality affects query accuracy, and is not optimized for highly dynamic graphs.
The work by Sarwan Ali et al. SsAG, a scalable lossy summarization method for attributed graphs that merges nodes into supernodes while minimizing reconstruction error and maximizing attribute homogeneity, followed by sparsification to reduce storage \cite{ali2024ssag}. Though it handles large-scale graphs incorporating both topology and attributes, summaries may have slightly higher storage cost and minor quality degradation. Moreover, they consider only the lossy type with no support for query processing. optGS \cite{lai2023optimized} proposes an optimized correction set–based lossless graph summarization method that improves grouping, merging, and computation efficiency through computation-oriented supernode splitting. Although it is interesting to observe that optGS maintains exact reconstruction and scalable to large graphs, it is complex and may not be suitable for scenarios tolerating lossy compression and fails to address query support whereas \cn addresses various query processing mechanisms along with a bound on reconstruction loss. LMGSUM \cite{berberidis2022summarizing} introduces methods for summarizing labeled multi-graphs by grouping nodes and aggregating edges while preserving label and multiplicity information. The algorithm is good in supporting multiple edge labels and parallel edges but the summaries may still be large for highly heterogeneous graphs. Moreover, the complexity increases with the number of labels and may require domain-specific tuning for optimal grouping. The technique GPQPS proposed by Aris Anagnostopoulos et al. introduces a framework to process general-purpose queries directly on summary graphs by introducing a query translation mechanism that maps queries to the summarized structure without full decompression \cite{anagnostopoulos2024general}. In some terms, though, it reduces storage and computation, it handles only shortest-path queries for which the results are not encouraging and tackles only lossy version whereas \cn handles three different types of queries namely neighborhood query, shortest-path and reachability query and addresses lossless version as well. GRAPHGUESS is a runtime adaptive approximation model for graph processing that minimizes preprocessing, deactivates low-influence edges, and uses periodic super steps for adaptive correction. Although it has minimal accuracy loss, applicable to various algorithms and avoids heavy preprocessing, it is not clearly stated as to whether it is lossless/lossy and could handle only shortest-path queries  \cite{ramezani2022graphguess}. 

Koutra et al. \cite{koutra2014vog} constructed a vocabulary of subgraph types
that occur frequently in real graphs (e.g., stars, cliques, chains). It builds
a succinct description of a graph from a set of subgraphs using this vocabulary
and the MDL principle.  Utility driven graph
summarization \cite{kumar2018utility} aims to summarize graphs such that the
utility does not drop below a given threshold. Besta et al.
\cite{besta2019slim} proposed a programming framework for lossy graph
compression based on a novel abstraction of compression kernels.  Kang et al.
\cite{kang2022personalized} proposed personalized graph summarization that
focuses on connections closer to a given set of target nodes.  Berberidis et
al. \cite{berberidis2022summarizing} proposed a summarization framework that
can handle different graph characteristics like node labels, directed edges,
edge multiplicities, and self-loops. Liang et al. \cite{liang2020reachability}
proposed two summarization schemes to answer reachability queries on dynamic
graphs. Detailed surveys on various graph summarization and compression methods
are available in \cite{maneth2015survey,khan2017summarizing,besta2018survey,
awesome_graph_reduction}. 

\subsection{Difference with Existing Compression Algorithms based on Merging of Nodes}
\label{sec: difference with existing methods}

We next briefly review the merging approaches employed by the existing methods how they are different from the merging approached used in the \cn framework.

Navlakha et al. \cite{navlakha2008graph} used a \emph{greedy algorithm}
to compress a graph. The basic idea is that, any two nodes that share common
neighbors can give a cost reduction. The more the number of common neighbors,
the higher is the cost reduction. Suppose, $v$ is a supernode, and the cost
$c_v$ of supernode $v$ is the sum of the costs of all the super edges $(v, x)$ to
its neighbors $x \in N(v)$, i.e., $c_v = \sum_{x \in N(v)} c(v, x)$. Suppose,
two supernodes $u, v$ are merged into a new supernode $w$.  The authors proposed
a fractional cost reduction metric: $s(u,v) = \frac{c_u + c_v - c_w}{c_u +
c_v}$. The pair $(u, v)$ of supernodes for which $s(u,v)$ is maximized is chosen
for merging.

Kumar and Efstathopoulos \cite{kumar2018utility} tried to prioritize the
merging of two nodes by considering edge importance. Their goal is to pick an
edge $e = (u, v)$ with the lowest importance and merge the nodes $u$ and $v$ so
as to form a supernode $w$. At each step, they pick the node pair $(u, v)$ with
the lowest importance score $f(nodeIS[u])+ f(nodeIS[v])$. Here, $f()$ is a
square function, as it helps in further delaying the merging of important nodes
with relatively important nodes.

Lee et al. \cite{lee2022slugger} developed an algorithm called
Slugger, which greedily merges a root node pair among those sampled
within each candidate set, which is obtained in the previous step.
Simultaneously, Slugger updates $p$-edges and $n$-edges incident to the
merged nodes and/or their 1-level descendants by exploiting the hierarchy
between supernodes. Slugger accelerates this update through memoization.

Shin et al. \cite{shin2019sweg} developed an algorithm named SWeG, which
merges some supernodes within each group in a greedy manner. The merging step is
quite involved, and is described in detail in
the paper.

The merging strategy adopted in the \cn framework (detailed in Sec.~\ref{sec:framework}) is different from the above approaches in multiple ways. Firstly, it carefully discovers \emph{all} eligible pairs of nodes that offer positive compression gain. This compression gain is different for each of the three variants of the \cn framework (Sec.~\ref{sec:framework}). This compression gain not only factors the number of common neighbors but also the number of exclusive neighbors. Then among all the eligible pairs to be merged, it identifies the pair that offers the maximum compression gain. Then, it is checked whether the given pair is safe to be merged, i.e., whether the merging would violate the neighborhood loss constraint stated in Sec.~\ref{sec:formulation}. Once it is confirmed to be safe, the merge is executed. Subsequently, the merging is applied to the remaining pairs of nodes and super-nodes that are formed in the process.

\section{Configurable Graph Summarization, \gc}
\label{sec:formulation}

We consider an undirected, unweighted, simple graph $G=(V,E)$, where $V$ and
$E$ denote the vertex set and edge set, respectively.  Assuming the graph $G$ to
be stored using the adjacency list representation, the space required to
store $G$, denoted by $|G|$, is
\begin{align}
	\label{eq:sizeg}
	|G| = |V| + 2 \cdot |E|
\end{align}

\subsection{Graph Summarization}

Given an undirected, unweighted, simple graph $G=(V,E)$, the goal of graph summarization
is to produce a \emph{summary graph} $G_s=(V_s,E_s)$ (where $V_s$ and $E_s$
denote the vertex set and edge set of $G_s$ respectively) with two objectives:
(1)~$|G_s| \le |G|$, i.e., the summary graph requires lesser space than the
original graph, and (2)~it is possible to reconstruct the original graph $G$
from $G_s$ with no or limited loss of information. It is important to note that
$|G_s|$ must take into account the total space required to store $G_s$ along with the space overhead of additional data structures (if any) that are necessary to reconstruct $G$ from $G_s$. Thus, it may be the case that $|G_s| > |V_s|+2|E_s|$. This point will be elaborated later.

Some prior works \cite{li2019graph}, including those that offer summarization without graph reconstruction \cite{lee2020ssumm}, do not factor this point and compute compression ratios simply based on $|V_s|$ and/or $|E_s|$ (refer to the key metrics column in Table~\ref{tab:related}).

\subsubsection*{\bf Lossy and Lossless Summarization}
Let $\langle S,R \rangle$ denote a \emph{graph summarization scheme} where $S:G
\rightarrow G_s$ is a graph \emph{summarization algorithm} that summarizes an
undirected, unweighted graph $G=(V,E)$ to a \emph{summary graph}
$G_s=(V_s,E_s)$, and $R:G_s \rightarrow G_r$ is a \emph{reconstruction
algorithm} that takes the summary graph $G_s=(V_s,E_s)$ as input and returns
a \emph{reconstructed graph} $G_r=(V_r,E_r)$.
Here, reconstruction (also referred to as decompression) refers to the process of
reproducing the original graph $G$ from $G_s$. The overall goal is that $G_r$
should be a good approximation of $G$ with respect to certain graph properties.
If $R$ is able to retrieve the original graph $G$
from the summary graph $G_s$ exactly, i.e., $G_r=G$, then the summarization
scheme $\langle S, R \rangle$ is said to be \emph{lossless}. Otherwise, the
scheme is \emph{lossy}. In a lossy scheme, the set of edges $E_r \setminus E$ are referred to as \emph{false positive} edges, while $E \setminus E_r$ are referred to as \emph{false negative} edges.
Typically, the vertex set is reconstructed without loss, i.e., $V_r = V$.

\subsubsection*{\bf Summarization Objective}
Our graph summarization scheme seeks to produce a summary graph $G_s$
of a given graph $G$ that reduces the space overhead. In other words, the goal
is to minimize the \emph{compression ratio}, $cr=|G_s|/|G|$
\cite{nejad2021graph}. 

\subsection{Configuring the Summarization Variants}

Our proposed \gc framework allows three summarization variants, namely, \iu,
\inter, and \uni\footnote{As we will explain later, the variant-letters stand for \emph{exact}, \emph{intersection}, and \emph{union}, respectively.}, that are characterized by the presence or absence of false
positive and false negative edges in the edge set $E_r$ of the
reconstructed graph $G_r$.  If false positives need to be avoided, i.e., $E_r
\subseteq E$, then \inter can be used, while to have no false
negatives, i.e., $E_r \supseteq E$, then \uni can be used.  The variant \iu
avoids both false positives and false negatives, i.e., $E_r = E$ and, thus,
incurs no loss.
These variants are identified using a parameter $\ls$ that assumes three values, $E$, $I$ and $U$:
\begin{align} \label{eq:variants}
\fp = \begin{cases} 
	E \Rightarrow \iu: & E_r = E, \text{i.e., } \forall u \in G, N_{G_r}(u) = N_G(u) \\
	I \Rightarrow \inter: & E_r \subseteq E, \text{i.e., } \forall u \in G, N_{G_r}(u) \subseteq N_G(u) \\
	U \Rightarrow \uni: & E_r \supseteq E, \text{i.e., } \forall u \in G, N_{G_r}(u) \supseteq N_G(u) \\
\end{cases}
\end{align}
where $N_G(u)$ and $N_{G_r}(u)$ denote the 1-hop \emph{neighborhood} vertex set of
node $u$ in $G$ and $G_r$ respectively.

\subsection{Configuring the Neighborhood Loss}

The \emph{neighborhood loss} $\rl(u)$ of a node $u$ is the
loss of neighborhood information as a consequence of summarization and
reconstruction. It is measured as a fraction of the neighborhood set $N_G(u)$
that is lost after reconstruction \cite{navlakha2008graph}:
\begin{align}
	\small
	\rl(u) =  \frac{ |N_G(u) \setminus N_{G_r}(u)| + |N_{G_r}(u) \setminus N_{G}(u)| }{|N_G(u)|}
\end{align} 

The above equation is applicable for only those
summarization schemes that do not permit loss/gain of nodes, i.e., $V_r=V$. The
\cn framework satisfies this property and,
hence, uses this neighborhood loss model. For the different variants,
it follows that
\begin{align}
\label{eq:varrl}
\fp = \begin{cases} 
	E \Rightarrow & \rl(u) = 0 \\
	I \Rightarrow & \rl(u) = |N_G(u) \setminus N_{G_r}(u)| / |N_G(u)| \\
	U \Rightarrow & \rl(u) = |N_{G_r}(u) \setminus N_{G}(u)| / |N_G(u)| \\
\end{cases}
\end{align}

The neighborhood loss for \inter and \uni as defined in Eq.~\eqref{eq:varrl} is unbounded.
To bound it, we define a \emph{neighborhood loss tolerance threshold} parameter $\eps_u$ for each vertex $u$, where $0 \leq \eps_u \leq 1$.
The threshold parameter $\eps_u$ limits the maximum neighborhood loss:
\begin{align}
	\label{eq:loss-constraint}
	\forall u \in G, \ \rl(u) \le \eps_u
\end{align}

The \emph{neighborhood loss threshold set} parameter, $\nlt$, is defined as the set
of neighborhood loss tolerance threshold parameters for all vertices $u$ in $G$:
\begin{align}
	\label{eq:nlt}
	\nlt = \{ \eps_u| u \in G \}
\end{align}

The above definition allows the user to control the maximum neighborhood loss
for each node $u$ in an independent manner. 
This feature may be useful for applications that require nodes of different priorities to tolerate different levels of neighborhood loss.
Hence, this
flexibility to control the neighborhood loss makes this model powerful and
unique.
Setting $\eps_u$
to be the same value for all vertices $u$ sets
a uniform loss threshold.

\subsection{Query Processing}

When the graph size is large, retrieving many graph queries becomes a challenge \cite{nejad2021graph}. Hence, it is desirable to build a graph summarization scheme that can support graph queries on the summary graph without necessarily reconstructing the graph \cite{shin2019sweg}. The queries may be answered exactly or approximately, as preferred by the user. If the user is tolerant to loss in the query answers, it is desirable to answer the queries with bounded quality guarantees or with high accuracy.

\subsection{Problem Statement}

We next formally define the problem statement.
\begin{prob}[{\bf Configurable Graph Summarization, \gc}]
	\label{prob:gc}
	Given an undirected, unweighted, simple graph $G=(V,E)$, summarization
	variant parameter $\fp \in \{I,E,U\}$, and a neighborhood loss threshold
	$\nlt$, design a graph summarization scheme $\langle S, R \rangle$
	to return a summary graph $G_s$ that \emph{minimizes} the compression
	ratio $cr=|G_s|/|G|$ such that the reconstructed graph $G_r$ satisfies the
	neighborhood loss constraint, i.e., $\forall {u \in G}, \ \rl(u) \le
	\eps_u$, and $G_s$ supports multiple graph queries such as neighborhood queries, reachability queries and shortest path queries.
\end{prob}

\noindent
The \cn framework addresses the above \gc problem.
 
For evaluating any graph query on an undirected unweighted graph, it is necessary  to be able to answer the neighborhood queries. This is because one can always reconstruct (part or whole of) the input graph using neighborhood queries, and then retrieve the desired query. This is why \cn is designed to answer neighborhood queries with  either no loss or with bounded neighborhood loss, as desired by the user.  Consequently, this enables answering other graph queries such as reachability queries and shortest path queries exactly or with high
accuracy.

The \emph{neighborhood query} for a vertex $u$
in $G$ returns the 1-hop neighborhood $N_G(u)$, i.e., all the immediate
neighbors of $u$. 
 Given an ordered pair of nodes,  $(u,v)\in G$, the \emph{reachability query}, $R_G(u,v)$, determines the existence of a path from $u$ to $v$. 
We denote the set of nodes that are \emph{reachable} from $u$ via any path in
$G$ by $R_G(u)$. Thus, $R_G(u,v)=\true$ if and only if $v \in R_G(u)$. Given an ordered pair of nodes,  $(u,v)\in G$, the \emph{shortest path} query returns the shortest path from the source
node $u$ to the destination node $v$ along with its distance, denoted by $\dist_G(u,v)$, i.e., the number of edges on the shortest path from $u$ to $v$.  If $v \notin R_G(u)$, $\dist(u,v)=\infty$.  If
$G$ is undirected, these queries are symmetric.  In other words, $u \in
N_G(v)$ if and only if $v \in N_G(u)$,  $R_G(u,v)=R_G(v,u)$ and $\dist_G(u,v)
= \dist_G(v,u)$.

The query characteristics of the three summarization variants for these three
queries are summarized in Table~\ref{tab:query properties}.  These
characteristics are verified in Sec.~\ref{sec:query}.

\begin{table*}[t]
	\centering
	\resizebox{\textwidth}{!}
	{
  	\begin{tabular}{clcc}
		\toprule
		\bf \name Variant & \multicolumn{1}{c}{\bf Neighborhood query} & \multicolumn{1}{c}{\bf Reachability query} & \multicolumn{1}{c}{\bf Shortest Path query} \\
		\midrule
		\iu     & $N_{G_r}(u) = N_G(u)$    &  $R_{G_r}(u) = R_G(u)$    & $\dist_{G_r}(u,v) = \dist_G(u,v) $ \\
		\inter    & $N_{G_r}(u) \subseteq N_G(u)$, $(1-\eps_u)|N_G(u)| \le |N_{G_r}(u)| \le |N_G(u)|$     &   $R_{G_r}(u) \subseteq R_G(u)$ & $ \dist_{G_r}(u,v) \ge \dist_G(u,v)$ \\		
		\uni     & $N_{G_r}(u) \supseteq N_G(u)$, $|N_G(u)| \le |N_{G_r}(u)| \le (1+\eps_u)|N_G(u)|$    &  $R_{G_r}(u) = R_G(u)$ & $ \dist_{G_r}(u,v) \le \dist_G(u,v)$ \\
		\bottomrule  
	\end{tabular}
	}
	\caption{Query characteristics of the different summarization variants of \cn} \label{tab:query properties}
	\cuttabbelow
	\vspace*{1mm}
\end{table*}

\section {The \cn Framework}
\label{sec:framework}

\begin{figure}[t]
	\centering
	\includegraphics[width=0.80\columnwidth]{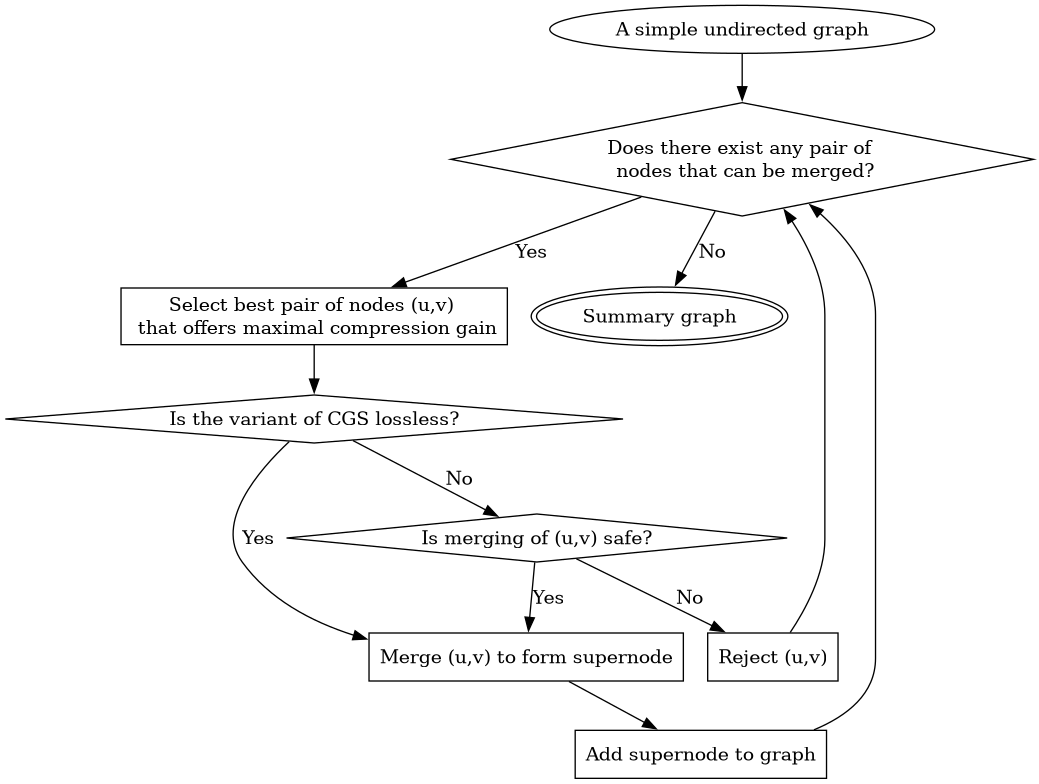}
	\Description[This figure demonstrates the overall algorithm of our proposed \cn framework.]{The algorithm proceeds in iterations and continues till there is a pair of nodes whose merge results in a compression and is safe according to the \cn variant.}
	\caption{Flowchart of the \cn framework. The algorithm proceeds in iterations and continues till there is a pair of nodes whose merge results in a compression and is safe according to the \cn variant.} \label{fig:flowchart}
\end{figure}

In this section, we describe in detail our \cn framework. We first
describe the merging procedure, and then the entire compression algorithm for
\cn, and decompression.

\subsection{Overview of \cn}

The \cn framework is based on \emph{merging} a pair of nodes that share a
common neighborhood.  Fig.~\ref{fig:flowchart} shows the high level
flowchart of this framework.  The algorithm runs in iterations. In every iteration, it finds all
pairs of nodes in $G$ that are eligible to be merged. If there exists such an
eligible pair, it determines the pair that offers the maximum compression
gain.  If the variant is lossy, i.e., for \inter and \uni, it checks whether merging of the given pair is
\emph{safe}, i.e., it would not violate the neighborhood loss constraint
stated in Eq.~\ref{eq:loss-constraint}.  
If the merge is found to be safe, the given
pair is merged; else, it is rejected.
For the lossless variant \iu, it is guaranteed to
be safe (we prove this later).
This process terminates when there is
no other eligible pair left that can be safely merged. The final
summary graph $G_s$ is then returned.

At any stage of the algorithm, consider the graph $G'_s=(V'_s,E'_s)$.
Initially, $G'_s = G$.  A pair of nodes $(u,v)$ (where $u, v \in G'_s$ such
that $v \in N^2_{G'_s}(u)$, i.e., $v$ is in the 2-hop neighborhood of $u$) is
merged if this leads to \emph{compression gain}, i.e., space savings.  Suppose
the graph resulting at the end of the above merge operation is referred to as
$G_s^* =(V_s^*,E_s^*)$.  The compression gain achieved by merging the pair
$(u,v)$ is $|G'_s|-|G_s^*|$. A merge operation is performed only if the
compression gain is positive, i.e., $|G_s^*| < |G'_s|$.

The following section describes the \emph{merging} procedure in detail.

\subsection{Merging Nodes with Common Neighborhoods}
\label{sec:merge}

Consider the graph $G'_s$ at any stage.
Before merging, $G_s^*$ is initialized to $G'_s$.  If the nodes $u$ and $v$
are merged, a \emph{supernode} $s = sn(u,v)$ is created and added to the vertex
set $V_s^*$ of $G_s^*$. The pair of nodes $\{u,v\}$ form the \emph{parents} of
the supernode $s$.  The supernode $s$ is referred to as the \emph{child} of
both $u$ and $v$.  Based on the variant of \cn, the \emph{neighborhood} of the
merged node $s$ is set according to the \emph{intersection} or the \emph{union} of
the neighborhoods of $u$ and $v$. The details are as follows.

First, consider the \inter variant. For each common neighbor $w$, i.e., when the
pair of edges, $(u,w), (v,w) \in E'_s$, only a single edge $(s,w)$ is
added to $E_s^*$. The nodes $u$ and $v$ are removed from $G_s^*$ along with all
their incident edges.  This, thus, corresponds to the \emph{intersection} of
the neighborhoods of $u$ and $v$.

Next, consider the \uni variant.  Consider merging of $u$ and $v$ to form a supernode
$s = sn(u,v)$.  If $(u,v) \in E'_s$, it is retained in $E_s^*$ along with its incident nodes.  For every other
edge of the form $(u,w) \in E'_s$ (or $(v,w) \in E'_s$), the edge is removed from
$E_s^*$, and an edge $(s,w)$ is added to $E_s^*$.  If $(u,v) \notin E'_s$, then the nodes $u$ and $v$ are removed from $G_s^*$.
This operation
corresponds to the \emph{union} of the neighborhoods of $u$ and $v$.

Finally, consider the \iu variant. For each pair of edges, $(u,w), (v,w) \in E'_s$, a single edge $(s,w)$ is added to $E_s^*$ and the
former pair of edges is removed from $E_s^*$. However,
edges corresponding to the difference of the neighborhood sets are not
removed but retained. Formally, for each edge $(u,w) \in E'_s$, such
that $(v,w) \notin E'_s$, the edge $(u,w)$ is still retained in $E_s^*$ along with
its incident nodes.  Similarly, every edge $(v,w) \in E'_s$, where $(u,w) \notin
E'_s$, is retained in $E_s^*$ along with its incident nodes.  

Notably, the above merge operation is \emph{not} limited to only a pair of
nodes, but can be performed between any node and a supernode, or between any
pair of supernodes as well.

We assume that the graphs are stored using adjacency lists. Thus,
$|G|=|V|+2\cdot|E|$. If there are $k$ merge operations, each producing a supernode
having exactly two parents, the space overhead of $G_s$ is
$|G_s|=|V_s|+2\cdot|E_s|+2k$. Thus, the compression gain
achieved by merging the pair $u,v\in G'_s$ is
\begin{align}
	\label{eq:cg}
	cg(u,v )=(|V'_s|-|V_s^*|)+2(|E'_s|-|E_s^*|)-2 
\end{align}

\begin{table*}[t]
		\centering
	\begin{tabular}{c|M{\examplewidth}|M{\examplewidth}M{\examplewidth}|M{\examplewidth}}
		\toprule
		\bf \fp & \bf Input graph $G$ & \multicolumn{2}{c|}{\bf Subsequent merge operations in $G_s$} & \bf Reconstructed graph $G_r$ \\
		\midrule
		\multirow{2}{*}{$I$} & \includegraphics[width=\figwidth]{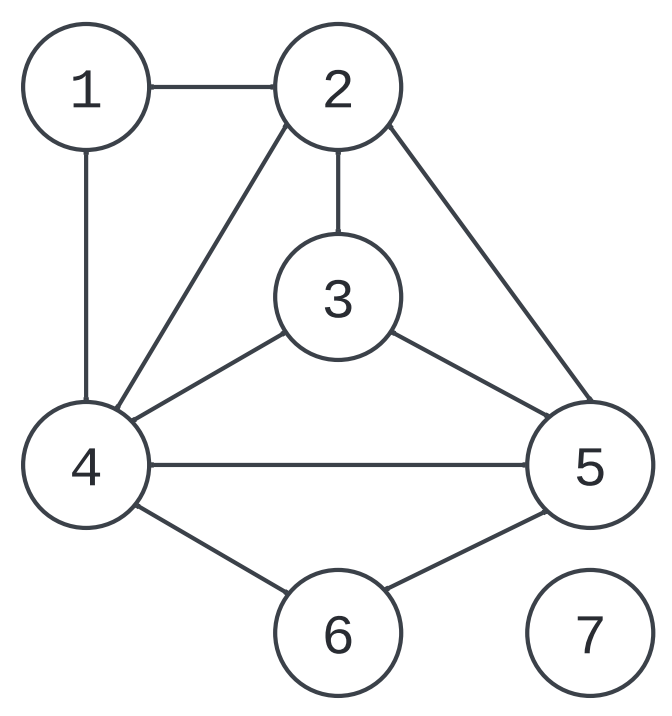} &\includegraphics[width=\figwidth]{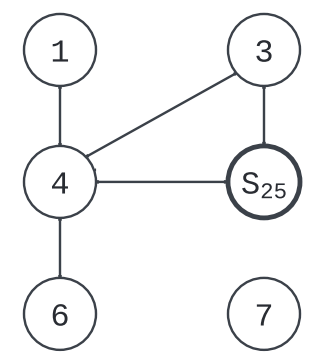} & \includegraphics[width=\figwidth]{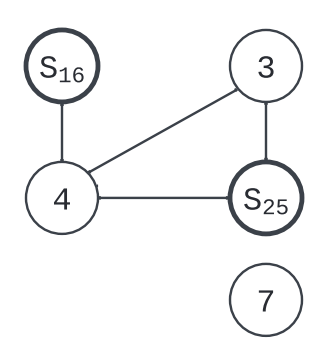} &   \includegraphics[width=\figwidth]{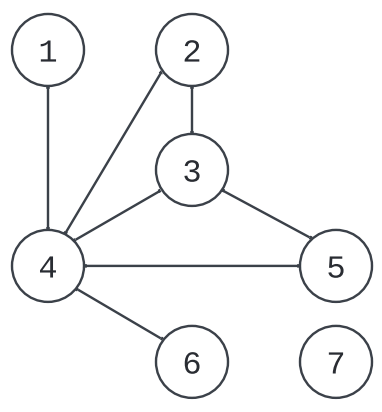}\\
		& (a) $|G| = 27$ & (b) $|G_s| = 18$ & (c) $|G_s| = 17$ & (d) $|G_r| = 21$\\
		\midrule
		\multirow{2}{*}{$U$} & \includegraphics[width=\figwidth]{input_graph} &\includegraphics[width=\figwidth]{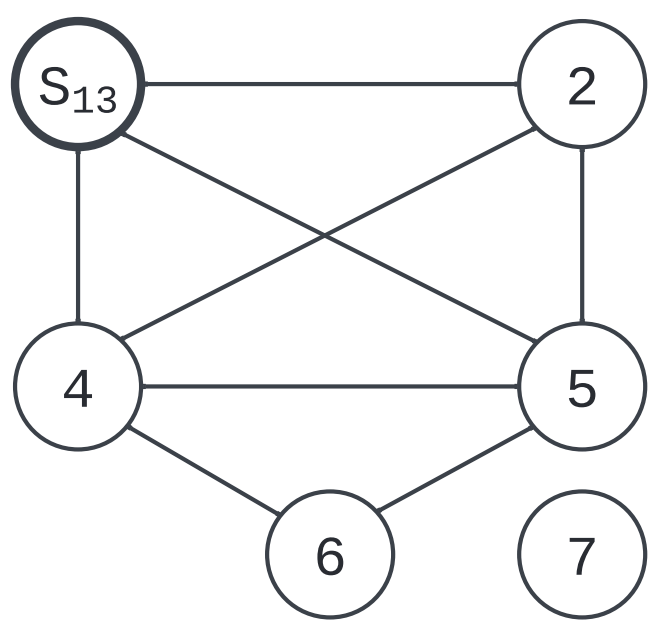} & \includegraphics[width=\figwidth]{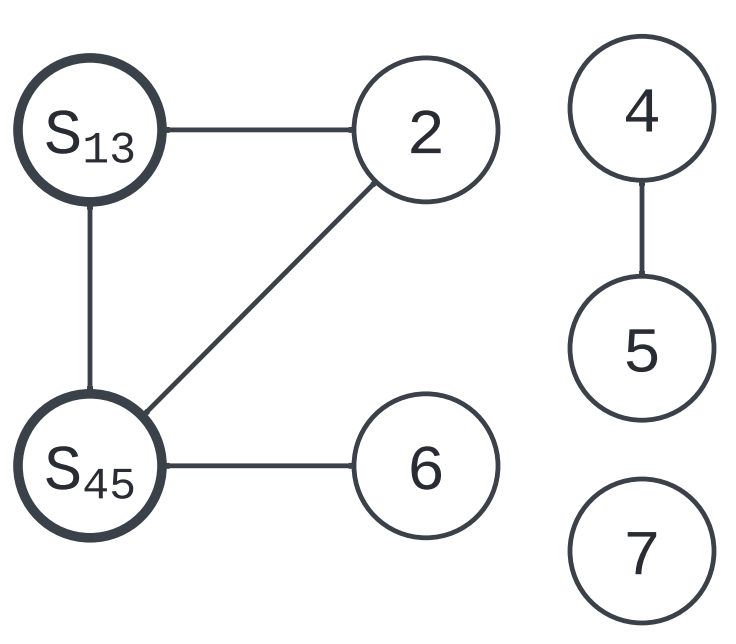}  & \includegraphics[width=\figwidth]{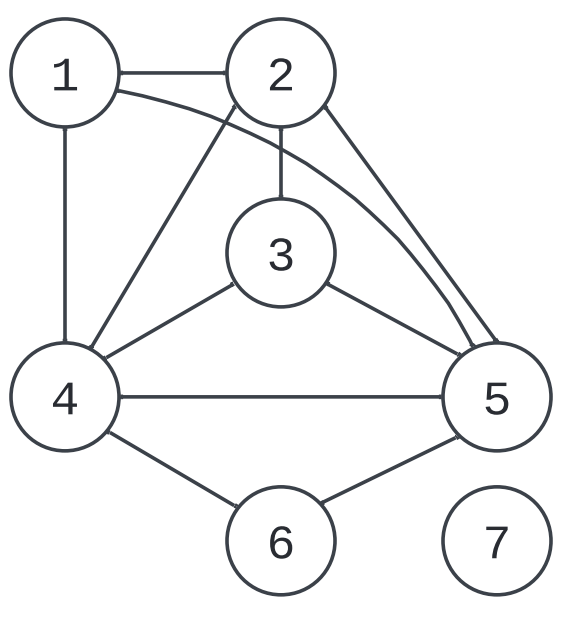}  \\
		& (e) $|G| = 27$ & (f) $|G_s| = 24$ & (g) $|G_s| = 21$ & (h) $|G_r| = 29$ \\
		\midrule
		\multirow{2}{*}{$E$} & \includegraphics[width=\figwidth]{input_graph} & \includegraphics[width=\figwidth]{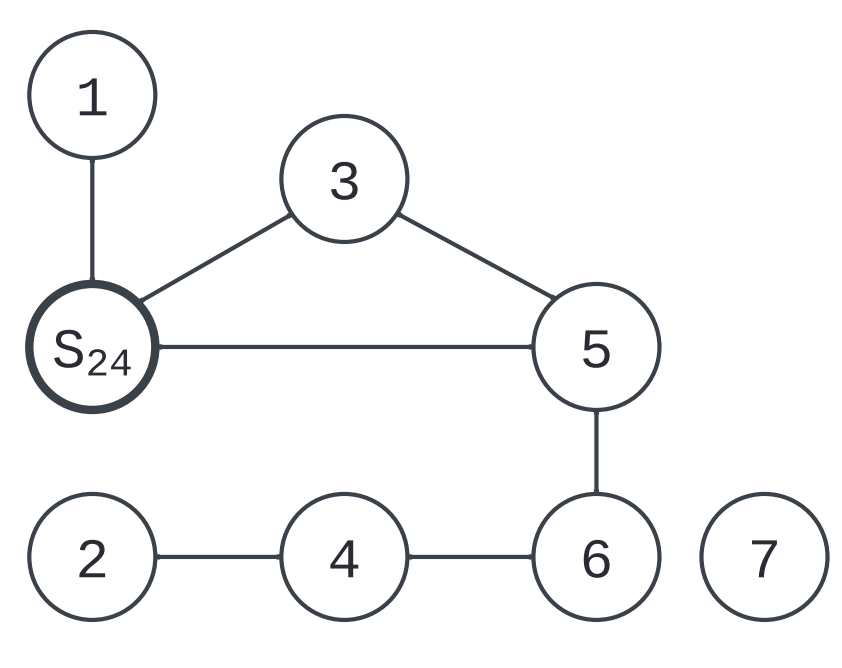} &    & \includegraphics[width=\figwidth]{input_graph}\\
		& (i) $|G| = 27$ & (j) $|G_s| = 24$ & & (k) $|G_r| = 27$ \\
		\bottomrule
	\end{tabular}
	\caption{Illustration of how different variants of \cn work for the
	same input graph}
	\label{tab:example}
\end{table*}

\begin{table}[t]
	\centering
	\begin{tabular}{ c c cc cc}
		\toprule
		\bf Compression &  & \multicolumn{2}{c}{\bf Step 1} & \multicolumn{2}{c}{\bf Step 2}  \\
		\cline{3-6}
		\bf Variant & & $(u,v)$ & $cg(u,v)$ & $(u,v)$ & $cg(u,v)$  \\
		\midrule 
		\inter ($\eps = 1/2$) & & $(2,5)$ & $9$ & $(1,6)$ & $1$ \\
		\uni ($\eps = 3/5$)   & & $(1,3)$ & $3$ & $(4,5)$ & $3$\\
		\iu                   & & $(2,4)$ & $3$ & - & - \\
		\bottomrule
	\end{tabular}
	\caption{Working of \cn algorithm for the examples in Table~\ref{tab:example}, where $cg(u,v)$ is the compression gain achieved by merging $(u,v)$.}
	\label{tab:working}
\end{table}

\subsection{Example}

We illustrate the merge operation for different variants of \cn with
an example shown in Table~\ref{tab:example}. The size of the graph is shown
below each figure.

First, consider \inter. Consider the merging of the node pair $(2,5)$ in the
input graph $G$ shown in Fig.~(a). The resulting graph is shown in Fig.~(b).
The pair $(2,5)$ are merged to form the supernode $s=S_{2,5}$. The common
neighbors (i.e., \emph{intersection}) $\{3, 4\}$ form the neighborhood of $s$.
The parents of $s$, i.e., $\{2,5\}$ are removed along with their incident
edges.  Consequently, edges $(2,1)$ and $(5,6)$ get deleted.  Fig.~(c) shows
the merging of the node pair $(1,6)$.  Note that formation of the supernode
$S_{1,6}$ does not lose any edge.

Next, consider \uni. Consider the merging of the node pair
$(1,3)$ in the input graph $G$ shown in Fig.~(e). The resulting graph is shown
in Fig.~(f). The pair $(1,3)$ is merged to form the supernode $s=S_{1,3}$. The
\emph{union} of neighbors of $1$ and $3$ is $\{2,4,5\}$, which form the
neighborhood of $s$. Similar to \inter, the parents of $s$, i.e., $\{1,3\}$
are removed along with their incident edges.  Fig.~(g) shows the merging step
for the node pair $(4,5)$.  

Finally, consider \iu. Consider the merging of the node pair $(2,4)$ in the
input graph $G$ shown in Fig.~(i). The resulting graph is shown in Fig.~(j).
The pair $(2,4)$ are merged to form the supernode $s=S_{2,4}$. The common
neighborhood of $2$ and $4$ is $\{1,3,5\}$.  Once the common neighbors of $2$
and $4$ are absorbed by $s$, the neighborhood sets of $2$ and $4$ reduce to
$\{4\}$ and $\{2,6\}$ respectively. Unlike $\fp=I$, these two nodes are
retained with their updated neighborhood sets.

The compression gain achieved for each merge operation in the example in
Table~\ref{tab:example} is shown in Table~\ref{tab:working}.  The key
difference between the three summarization variants lies in the way the nodes are
merged which, in turn, influences the compression ratio and the neighborhood
loss.

\section{The \cn Algorithm}
\label{sec:greedycn}

This section presents the proposed graph summarization scheme, \cn, that is based
on the idea of merging nodes with common neighborhoods (described in Sec.~\ref{sec:merge}).  The \cn
algorithm (pseudo code shown in Algo.~\ref{alg:greedycn}) is parameterized with the summarization variant parameter \fp.  It acts
as \inter or \iu or \uni, corresponding to $\ls = I$ or $\ls = E$ or $\ls =
U$ respectively.
Given a graph $G$ and a neighborhood loss tolerance threshold parameter set $\nlt$, it returns a summary graph $G_s$.

The \greedycn procedure (Algo.~\ref{alg:greedycn}) is based on the greedy paradigm. It iteratively merges a pair of nodes that offer the maximum compression gain. 
 First, it initializes $G_s$ to $G$.  Next, through the \buildheap
procedure, it discovers all pairs of nodes $(u,v)$ (where $v \in N^2_G(u)$) that
can be merged potentially (line 6). The details of this procedure are described next.

\begin{algorithm}[tbh]
	\caption{\greedycn \label{alg:greedycn}}
	\KwInput{An undirected, unweighted graph $G=(V,E)$, the summarization variant parameter \fp and a set of neighborhood loss tolerance thresholds \nlt.}
	\KwOutput{A summary graph $G^*_s$ that  is an approximate solution to the \gc problem for the given graph $G$.}	
	\SetKwFunction{Fgreedy}{\greedycn}	
	\Fn{\Fgreedy{$G, \fp,\nlt$}}{        
	$G_s \gets G$\;
	\ForEach{ $u \in G_s$}{
		$P_{G_s}(u) \gets \emptyset$\; 
		$\deg_{G_r}(u) \gets |N_G(u)|$\;
		}
		$H \gets$ \buildheap$(G,\fp)$\; 
		\While{ $H$ is non-empty}{ \label{line:begin-while}	
		$h \gets$ \exmax$(H)$\; 
		\If {$key(h) \le 0$} 
		{
			\KwRet  $G_s$  	\label{lin:algo terminates}\;
		}
		$(u^*,v^*) \gets value(h)$\;
		\If
		{ 
			($\fp=E$) $\vee$ (IsSafeMerge$(G_s,\fp, u^*,v^*,\nlt)$ is \true)
		} 
		{ 
			\merge$(G_s,\fp,H,u^*,v^*)$ \;
		}
		\label{line:end-while}
		} 
		\KwRet  $G_s$\; 
		}
\end{algorithm}

\subsection{The \buildheap Procedure}
\label{sec:buildheap}

The \buildheap procedure (Algo.~\ref{alg:buildheap}) discovers the pairs of nodes
that are eligible to be merged, i.e., all pairs of nodes $(u,v)$ where $v$ is a 2-hop neighbor of $u$, such that their compression gain $cg(u,v)>0$. To this end, the \cn algorithm, computes the the size of the intersection as $\ic(u,v)=|N_G(u) \cap N_G(v)|$ through the \nic procedure (Algo.~\ref{alg:nic}). Then, it computes the compression gain using the \computecg procedure (Algo.~\ref{alg:computecg}). Based on the value of the compression variant $\fp \in \{I,E,U\}$, this procedure returns the appropriate compression gain (as detailed in Sec.~\ref{sec:merge}). It computes the size of the union $\uc(u,v)=|N_G(u \cup N_G(v))|$ using the degrees of the nodes $u$ and $v$ and $\ic(u,v)$, as stated in line~3. Once the compression gain is computed, and it is found to be positive, it is added to a binary max-heap where the compression gain of the pair acts as its key.  The reason we choose binary max-heap is because the algorithm needs to iteratively update the compression gain of one or more pairs. and retrieve the pair that offers the maximum compression gain. Since each of these operations can be performed in logarithmic time in a binary max-heap, it is the best data structure for this purpose.

\begin{algorithm}[tbh]
	\caption{\buildheap \label{alg:buildheap}}
	\SetKwFunction{Fbuildheap}{\buildheap}
	\Fn{\Fbuildheap{$G,\fp$}}{
		Create an empty binary max-heap $H$.\;
		\ForEach{  $u \in G$}{
			\nic(G,u)\;
			\ForEach{  $v \in N_G(u)$}{
				\ForEach{  $w \in N_G(v)$}{
					\If{$u<w$}{
						$cg \gets$ \computecg(G,\fp,u,w) \;
						\If{$cg>0$}{
							\insertkey($H,u,w,cg$)\tcp*{inserts a node into the heap $H$ with $key=cg$ and $value=(u,w)$}\;
						}
					}
				}
			}
		}
		\KwRet $H$\;
	}
\end{algorithm}

\begin{algorithm}[tbh]
	\caption{\nic \label{alg:nic}}
	\SetKwFunction{Fnic}{\nic}
	\Fn{\Fnic{$G,u$}}{
		\ForEach{  $v \in N_G(u)$}{
			\ForEach{  $w \in N_G(v)$}{
				\If{$u<w$}{
					$\ic(u,w) \gets 0$\;
				}
			}
		}
		\ForEach{  $v \in N_G(u)$}{
			\ForEach{  $w \in N_G(v)$}{
				\If{$u<w$}{
					$\ic(u,w) \gets \ic(u,w) + 1$\;
				}
			}
		}
	}
\end{algorithm}

\subsection{The \computecg Procedure}

\begin{algorithm}[tbh]
	\caption{\computecg \label{alg:computecg}}
	\SetKwFunction{Fcomputecg}{\computecg}
	\Fn{\Fcomputecg{$G_s, \fp,u,v$}}{
		\If{$\fp=I$}{
			$\uc(u,v) \gets \deg_{G_s}(u)+\deg_{G_s}(v)-\ic(u,v)$\;
			$cg \gets 2\uc(u,v)-3$\;
			\If{ $v\in  N_{G_s}(u)$}{
				$cg \gets cg -2$\;
			}	
		} 
		\ElseIf{$\fp=E \vee \fp=U$}{
			$cg \gets 2\ic(u,v)- 3$
		}
		\ForEach{$w \in \{u,v\}$}{	
			\If{$|P_{G_s}(w)|=0$}{
				\If{$\fp=I$}{
					$cg \gets cg+1$
				} 
				\ElseIf{$\fp=U \wedge v\notin N_{G_s}(u)$}{	
					$cg \gets cg+1$
				}	
				\ElseIf{$\fp=E$}{
					$w' \gets \{u,v\}-\{w\}$\;
					$W \gets N_{G_s}(w)-N_{G_s}(w')$\;
					\If{$|W|=0$}{
						$cg \gets cg+1$
					}	
				}
			}
		}
		\KwRet $cg$ 
	}
\end{algorithm}

The \greedycn procedure maintains those pairs $(u,v)$ that offer
positive compression gain, i.e., $cg(u,v)>0$ using the binary max-heap $H$ that is generated by the \buildheap procedure, as discussed above. Then, the algorithm proceeds in
iterations. In each iteration, it selects a pair of nodes $(u^*,v^*)$ that
offers the \emph{maximal} compression gain (line 8).

However, if $\fp=I$ or $\fp=U$, it is not necessary that a merge operation is
\emph{safe}, i.e., it does not violate the neighborhood loss constraint
(Eq.~\eqref{eq:loss-constraint}). Hence, the algorithm runs the procedure \issafemerge (line 12) to check
that. If the merge is found to be safe, the algorithm executes the merge operation through the \merge procedure (Algo.~\ref{alg:merge}). The algorithm terminates when there are no more pairs 
that offer positive compression gain.

\subsection{The \issafemerge Procedure}
\label{sec:issafemerge}

The \issafemerge procedure (Algo.~\ref{alg:issafemerge}) validates whether
merging a given pair $(u,v)$ is safe, i.e., it would not violate the
neighborhood loss constraint (Eq.~\eqref{eq:loss-constraint}). To realize this, this procedure  correctly computes the degree $\deg_{G_r}(u)$ of each node $u$ in the
reconstructed graph $G_r$, without performing the actual reconstruction.  Using
this, checking the neighborhood loss constraint simplifies to checking whether
$|\deg_{G_r}(u) - \deg_G(u)|/\deg_G(u) \le \eps_u$ for all $u \in G$. This
check is performed based on the notion of \emph{ancestors} defined as follows.

\begin{algorithm}[phtb]
	\caption{\issafemerge \label{alg:issafemerge}}
	\SetKwFunction{Fsafe}{\issafemerge}
	\Fn{\Fsafe{$G_s,\fp,u^*,v^*, \nlt$}}{ 
		\If{$\fp=E$}{
			\KwRet \true\;
		}
		$U^* \gets \snmap(G_s,u^*)$\;
		$V^* \gets \snmap(G_s,v^*)$ \;
		\If{$\fp=I$}{
			$U' \gets \{u' \in \snmap(G_s,u'')| u'' \in (N_{G_s}(u^*)-N_{G_s}(v^*))\}$\;
			$V' \gets \{v' \in \snmap(G_s,v'')| v'' \in (N_{G_s}(v^*)-N_{G_s}(u^*))\}$\;
			\ForEach{$u \in U^*$}
			{
				\If {$\frac{\deg_{G_r}(u)-|U'|}{\deg_G(u)} < 1-\eps_u$}
				{
					\KwRet \false
				}
			}
			\ForEach{$v \in V^*$}
			{
				\If {$\frac{\deg_{G_r}(v)-|V'|}{\deg_G(v)} < 1-\eps_v$}
				{
					\KwRet \false
				}
			}
			\ForEach{$u \in U'-V^*$}
			{
				\If {$\frac{\deg_{G_r}(u)-|U^*|}{\deg_G(u)} < 1-\eps_{u}$}
				{
					\KwRet \false
				}
			}
			\ForEach{$v \in V'-U^*$}
			{
				\If {$\frac{\deg_{G_r}(v)-|V^*|}{\deg_G(v)} < 1-\eps_{v}$}
				{
					\KwRet \false
				}
			}
			\KwRet \true \;
		}
		\ElseIf{$\fp=U$}{
			$U' \gets \{u' \in \snmap(G_s,u'')| u'' \in (N_{G_s}(u^*)-\{v^*\}-N_{G_s}(v^*))\}$\;
			$V' \gets \{v' \in \snmap(G_s,v'')| v'' \in (N_{G_s}(v^*)-\{u^*\}-N_{G_s}(u^*))\}$\;
			\ForEach{$u \in U^*$}
			{
				\If {$\frac{\deg_{G_r}(u)+|V'|}{\deg_{G}(u)} > 1+\eps_u$}
				{
					\KwRet \false
				}
			}
			\ForEach{$v \in V^*$}
			{
				\If {$\frac{\deg_{G_r}(v)+|U'|}{\deg_{G}(v)} > 1+\eps_v$}
				{
					\KwRet \false
				}
			}
			\ForEach{$u \in U'$}
			{
				\If {$\frac{\deg_{G_r}(u)+|V^*|}{\deg_{G}(u)} > 1+\eps_{u}$}
				{
					\KwRet \false
				}
			}
			\ForEach{$v \in V'$}
			{
				\If {$\frac{\deg_{G_r}(v)+|U^*|}{\deg_{G}(v)} > 1+\eps_{v}$}
				{
					\KwRet \false
				}
			}
			\KwRet \true \;
		}
	}  
\end{algorithm}

\begin{algorithm}[tbh]
	\caption{\snmap \label{alg:snmap}}
	\SetKwFunction{Fsnmap}{\snmap}
	\Fn{\Fsnmap{$G_s,s$}}{ 
		$Nodes \gets \emptyset$\;
		Let $Stack$ be an empty stack.\;
		$Stack.push(s)$\;
		\While{$Stack$ is non-empty}{
			$u=Stack.pop()$ \;
			\If{$u \notin G_s \vee P_{G_s}(u)=\emptyset$}{
				\tcp*{$u$ is a simple node.}
				$Nodes \gets Nodes \cup \{u\}$
			}
			\Else{
				\tcp*{$u$ is a super-node.}
				\ForEach{$v \in P_{G_s}(u)$}{
					$Stack.push(v)$
				}
			}
		}
		\KwRet $Nodes$\;
	}
\end{algorithm}

\begin{algorithm}[tbh]
	\caption{\merge \label{alg:merge}}
	\SetKwFunction{Fmerge}{\merge}
	\Fn{\Fmerge{$G_s,\fp,H,u^*,v^*$}}{
		Add supernode $s=sn(u^*,v^*)$ to $V_s$\;
		$P_{G_s}(s) \gets \{u^*,v^*\}$ \;
		\If {$\fp=I \vee \fp=E$}{
			$N_{G_s}(s) \gets N_{G_s}(u^*) \cap N_{G_s}(v^*)$\;	
		}
		\ElseIf {$\fp=U$}{
			$N_{G_s}(s) \gets N_{G_s}(u^*) \cup N_{G_s}(v^*)-\{u^*,v^*\}$\;	
		}	
		\If{$\fp=I \vee \fp=U$}{
			\updatedegree($G_s,\fp,u^*,v^*$)\;	
		}
		\updateheap($G_s,\fp, H,s, u^*,v^*$)\;
		\ForEach{$w \in N_{G_s}(s)$} {
			$N_{G_s}(w) \gets N_{G_s}(w) -\{u^*,v^*\} \cup \{s\}$
		}
		\If {$\fp = I$}{
			\ForEach{$w \in N_{G_s}(u^*)-N_{G_s}(s)$} {
				$N_{G_s}(w) \gets N_{G_s}(w) -\{u^*\}$\;
			}
			\ForEach{$w \in N_{G_s}(v^*)-N_{G_s}(s)$} {
				$N_{G_s}(w) \gets N_{G_s}(w) -\{v^*\}$\;
			}
			$N_{G_s}(u^*) \gets \emptyset$, 
			$N_{G_s}(v^*) \gets \emptyset$\;
		}
		\ElseIf{$\fp=E \vee \fp=U$}{
			$N_{G_s}(u^*) \gets N_{G_s}(u^*) - N_{G_s}(s)$\;
			$N_{G_s}(v^*) \gets N_{G_s}(v^*) - N_{G_s}(s)$\;
		}
		\ForEach{ node $w \in \{u^*,v^*\}$}{
			\If{$|N_{G_s}(w)|=0 \wedge |P_{G_s}(w)|=0$}{
				Remove node $w$ from $G_s$
			}	
		}		
	}	
\end{algorithm}

A node $u_1 \in G$ in the original graph is said to be an \emph{ancestor} of a
node $u_l \in G_s$ in the summary graph, if there exist a sequence of nodes
$u_1, \dots, u_l$ in the interim summary graphs such that for each
$i=1,\dots,l-1$, $u_i$ is a \emph{parent} of $u_{i+1}$. As a special case, if
$l=1$, i.e., a node $u_1 \in G$ has no child, then $u_1$ is its own ancestor as
well. For any node $u^* \in G_s$, let $\snmap_{G_s}(u^*)$ denote the set of
ancestors of $u^*$. For computing the set of ancestors, the algorithm employs the \snmap procedure (Algo.~\ref{alg:snmap}). To realize these ancestor sets, the algorithm maintains the parents of each super-node $sn$ as they get produced after each merge operation, denoted by $P_{G_s}(sn)$. If a node $u \in G_s$ did not undergo any merge operation, its parent set is deemed to be empty.

Subsequently, the procedure evaluates all the cases applicable to each of the lossy variants $\fp \in \{I,U\}$ where the degree of a node in the reconstructed graph can potentially get altered due to a subsequent merge operation. For this purpose, it computes the potential neighborhood loss for each node in the graph $G$ and correctly computes the altered degree of each node in the potential reconstructed graph.If none of the nodes violate the neighborhood loss constraint, the procedure returns true. Otherwise, it returns false.

\subsection{The \snmap Procedure}

The \snmap procedure
(Algo.~\ref{alg:snmap}) computes the ancestors of a node $u \in G_s$, where the ancestors are defined in Sec.~\ref{sec:issafemerge}. The procedure successively pushes the parents of the super-node $u$ into a stack. It iteratively pops out a node, and pushes its parents, in turn. Finally, when the stack becomes empty, it returns the ancestor set.  

\subsection{The \merge Procedure}

The \merge procedure (Algo.~\ref{alg:merge}) performs the merge operation on a pair of nodes $(u,v)$, after it is confirmed that the given pair is safe to be merged. Given the choice of the \cn variant, i.e., $\fp \in \{I,E,U\}$, this merge operation is carried out based on the details stated in Sec.~\ref{sec:merge}.
Once the nodes $(u,v)$ are merged to a super-node $sn(u,v)$, the parent set of $sn(u,v)$is set to contain the nodes $\{u,v\}$. Then,  the neighborhoods of $u$ and $v$ are updated, along with that of the super-node $sn(u,v)$. 

\begin{algorithm}[tbh]
	\caption{\updatedegree \label{alg:updatedegree}}
	\SetKwFunction{Fupdegree}{\updatedegree}
	\Fn{\Fupdegree{$G_s, \fp, u^*,v^*$}}{ 
		$U^* \gets \snmap(G_s,u^*)$\;
		$V^* \gets \snmap(G_s,v^*)$ \;
		\If{$\fp=I$}{
			$U' \gets \{u' \in \snmap(G_s,u'')| u'' \in (N_{G_s}(u^*)-N_{G_s}(v^*))\}$\;
			$V' \gets \{v' \in \snmap(G_s,v'')| v'' \in (N_{G_s}(v^*)-N_{G_s}(u^*))\}$\;
			\ForEach{$u \in U^*$}
			{
				$\deg_{G_r}(u) \gets \deg_{G_r}(u)-|U'| $
			}
			\ForEach{$v \in V^*$}
			{
				$\deg_{G_r}(v) \gets \deg_{G_r}(v)-|V'| $
			}
			\ForEach{$u \in U'-V^*$}
			{
				$\deg_{G_r}(u) \gets \deg_{G_r}(u)-|U^*|$
			}
			\ForEach{$v \in V'-U^*$}
			{
				$\deg_{G_r}(v) \gets \deg_{G_r}(v)-|V^*| $
			}
		}
		\ElseIf{$\fp=U$}{
			$U' \gets \{u' \in \snmap(G_s,u'')| u'' \in (N_{G_s}(u^*)-\{v^*\}-N_{G_s}(v^*))\}$\;
			$V' \gets \{v' \in \snmap(G_s,v'')| v'' \in (N_{G_s}(v^*)-\{u^*\}-N_{G_s}(u^*))\}$\;
			\ForEach{$u \in U^*$}
			{
				$\deg_{G_r}(u) \gets \deg_{G_r}(u)+|V'|$
			}
			\ForEach{$v \in V^*$}
			{
				$\deg_{G_r}(v) \gets \deg_{G_r}(v) + |U'|$
			}
			\ForEach{$u \in U'$}
			{
				$\deg_{G_r}(u) \gets \deg_{G_r}(u)+|V^*|$
			}
			\ForEach{$v \in V'$}
			{
				$\deg_{G_r}(v) \gets \deg_{G_r}(v)+|U^*|$
			}
		}
	}  
\end{algorithm}

\subsection{The \updatedegree Procedure}

Once a given pair
is merged, the \updatedegree procedure (Algo.~\ref{alg:updatedegree}) updates
the degree of each node $u\in G_r$, given by $\deg_{G_r}(u)$.As in the case of \issafemerge procedure, this procedure evaluates the neighborhood loss for each node in the reconstructed graph, and updates its degree accordingly.

\subsection{The \updateheap Procedure}

The \updateheap
procedure (Algo.~\ref{alg:updateheap}) is responsible for updating the heap $H$
after a given pair of nodes is merged. As a consequence of a merge operation of a pair of nodes $(u,v)$, the compression gain of other pairs may get affected. To this end, this procedure not only identifies the relevant pairs whose compression gain can potentially be affected due to the previous merge operation, but it also computes the change in the  compression gain for all such pairs. It evaluates all possible scenarios for each of the three variants of the \cn framework. For efficiency, the above updates are
performed in a manner that avoids accessing nodes whose neighborhoods do not
change. The details are stated in Algo.~\ref{alg:updateheap}.

\begin{algorithm}[tbh]
	\caption{\updateheap \label{alg:updateheap}}
	\SetKwFunction{Fupheap}{\updateheap}
	\Fn{\Fupheap{$G_s,\fp,H,s,u^*,v^*$}}{ 
		$\nic(G_s,s)$\;
		\ForEach{ pair $(s,u)$ such that $u \in N^2_{G_s}(s)$}{
			$cg \gets $ \computecg($G_s,\fp,s,u$)\;
			\If{$cg>0$}{
				\insertkey($H,s,u,cg$)\;
			}
		}
		\If{$\fp=I \vee \fp=U$}{
			\ForEach{ pair $(u,v^*) \in H$}{
				\deckey($H,u,v^*,\infty$) 
			}
			\ForEach{ pair $(u^*,v) \in H$}{
				\deckey($H,u^*,v,\infty$) 
			}
		}
		\If{$\fp=I$}
		{
			\ForEach{ pair $(u,v) \in H$, such that either $u$ or $v$ or both lie in  $N_{G_s}(u^*) \cup N_{G_s}(v^*)$}{
				\deckey($H,u,v,2$) 
			}
		}
		\ElseIf{$\fp=U$}
		{
			\ForEach{pair $(u,v) \in H$, such that $u\in (N_{G_s}(u^*)-N_{G_s}(v^*))$, $v \in (N_{G_s}(v^*)-N_{G_s}(u^*))$}{
				\inckey($H,u,v,2$) 
			}
			\ForEach{pair $(u,v) \in H$, such that both $u,v\in (N_{G_s}(u^*) \cap N_{G_s}(v^*))$}{
				\deckey($H,u,v,2$) 
			}
		}
		\ElseIf{$\fp=E$}
		{
			\ForEach{ pair $(u,v^*) \in H$}{
				$value \gets 2|N_{G_s}(u^*) \cap N_{G_s}(v^*) \cap N_{G_s}(u)|$ \;
					\deckey($H,u,v^*,value$) 
			}
			\ForEach{ pair $(u^*,v) \in H$}{
				$value \gets 2|N_{G_s}(u^*) \cap N_{G_s}(v^*) \cap N_{G_s}(v)|$ \;
					\deckey($H,u^*,v,value$) 
			}
			\ForEach{ pair $(u,v)$ such that either $u$ or $v$ or both lie in $N_{G_s}(u^*) \cap N_{G_s}(v^*)$}{
				\deckey($H,u,v,2$) 
			}
		}
	}
\end{algorithm}

\subsection{Working Example}

The working of the \cn algorithm on the example stated in
Table~\ref{tab:example} is shown in Table~\ref{tab:working}. Fig.~(a),~(e)
and~(i) show the input graphs for $\fp=I,U,E$ respectively. For
$\fp=I,U,E$, there are 2, 2 and 1 merge operations respectively. The
compression gains achieved in each merge operation is stated in
Table~\ref{tab:working}. The final summary graphs for $\fp=I,E$ and $U$ are
shown in Fig.~(c),~(g) and~(j) respectively.

For example, the compression gain in the first (and only) step of \iu is $|G| - |G_s|$. Using Eq.~\eqref{eq:sizeg}, $|G| = 7 + 2 \times 10 = 27$.
Similarly, $|G_s| = 8 + 2 \times 7 + 2 \times 1 = 24$ ($2$ is added for $1$ supernode).
Hence, the compression gain is $27 - 24 = 3$.

\begin{algorithm}[bh]
	\caption{\decomp \label{alg:decomp}}
	\SetKwFunction{Fdecomp}{\decomp}
	\Fn{\Fdecomp{$G_s$}}{ 
	$V_r \gets \emptyset$, $E_r \gets \emptyset$\;
	\ForEach{ node $u \in V_s$}{
		$U \gets \snmap(G_s,u)$, 
		$V_r \gets V_r \cup U$
		}
		\ForEach{ edge $(u,v) \in E_s$}{
			$U \gets \snmap(G_s,u), V \gets  \snmap(G_s,v)$\;
			$E_r \gets E_r \cup (U\times V)$ \label{line:decomp-a}\;
			}
			\KwRet $G_r=(V_r,E_r)$  \;
			}
\end{algorithm}

\subsection{The \decomp Algorithm}
\label{sec:decomp}

The \emph{decompression (or reconstruction) algorithm} (Algo.~\ref{alg:decomp}) takes as input a given
summary graph $G_s=(V_s,E_s)$ that is generated by the \cn framework, and
computes the reconstructed graph $G_r=(V_r,E_r)$.  For each node $u^* \in G_s$,
the algorithm computes the ancestors $\snmap_{G_s}(u^*)$ using the parent
relationships. The set of nodes $V_r$ is the union of all such ancestor nodes,
i.e., $V_r=\{u|u \in \snmap_{G_s}(u^*)|u^* \in G_s\}$. The edge set $E_r$ is
the union of all edges formed by taking the Cartesian product of the ancestors of the end-vertices corresponding to each edge in the summary graph. Formally,
$E_r = \{(u,v) | (u^*,v^*) \in E_s, u \in \snmap_{G_s}(u^*), v \in \snmap_{G_s}(v^*) \}$.

Referring to the example stated in Table~\ref{tab:example}, the reconstructed
graphs returned for $\fp=I,E,U$ are shown in Fig.~(d), (g) and (k)
respectively.  The vertex set $V_r=V$ for $\fp=I,E,U$.  For $\fp=I$, the edge
set $E_r \subseteq E$; for $\fp=E$, $E_r = E$; and, for $\fp=U$, $E_r \supseteq
E$.  The neighborhood loss for each $u \in G$ is $rl(u) \le \eps$, for
$\fp=I,E,U$.

\section{Query Processing through \cn}
\label{sec:query}

\subsection{Local versus Global Decompression}

The \cn framework can process the queries in two ways: (1)~Querying with local
decompression, (2)~Querying with global decompression. The former case is
applicable when it is not possible to reconstruct the full graph due to space
constraints or the query workload is sufficiently small. In this scenario,
the queries are run on the summary graph $G_s$ directly without reconstructing the graph $G_r$. More specifically, it performs \emph{local
decompression}, i.e., decompressing only those nodes that are necessary to
answer the given query. The latter case is applicable when there is no space
constraint to prohibit the reconstruction of the graph $G_r$ and the query workload is sufficiently large. In this scenario, it is better to reconstruct the
 graph $G_r$, and then run the queries on $G_r$, since each query
runs faster on $G_r$ than $G_s$ due to no requirement of local decompression.

It should be noted that local decompression and global decompression yield
the \emph{same} query answer.  This is because the local decompression steps
essentially mimic the global decompression, but only at a per-neighborhood (i.e.,
local) level. The global decompression runs the steps for all
neighborhoods.

\subsection{Query Processing}

Given that the \cn framework allows reconstruction of the input graph, and evaluation of neighborhood queries with either \emph{no loss}, or with \emph{bounded loss guarantees} depending on the choice of the summarization variant, the proposed framework can be used to retrieve multiple graph queries with either \emph{no loss} or with fairly \emph{high accuracy}, as demonstrated in Sec.~\ref{sec:results}. As examples of graph queries, we chose to evaluate the neighborhood queries,  reachability queries and shortest path queries. Next,
we describe the processing of these queries.

\begin{algorithm}[tbh]
	\caption{\nq \label{alg:nq}}
	\SetKwFunction{Fnq}{\nq}
	\Fn{\Fnq{$G_s,u$}}{ 
		$N_{G_r}(u) \gets \emptyset$ \;
		$Desc \gets$ \desc$(G_s,u)$\;
		\ForEach{ node $v \in Desc$}{
			\ForEach{ node $w \in N_{G_s}(v)$}{
				$N_{G_r}(u) \gets N_{G_r}(u) \cup \snmap(G_s,w)$\;
			}
		}
		\KwRet $N_{G_r}(u)$\;
	}
\end{algorithm}

\begin{algorithm}[tbh]
	\caption{\desc \label{alg:desc}}
	\SetKwFunction{Fdesc}{\desc}
	\Fn{\Fdesc{$G_s,u$}}{ 
		$Nodes \gets \emptyset$\;
		Let $Stack$ be an empty stack.\;
		$Stack.push(u)$\;
		\While{$Stack$ is non-empty}{
			$v=Stack.pop()$ \;
			\ForEach{$w \in \child_{G_s}(v)$}{
				$Nodes \gets Nodes \cup \{w\}$\;
				$Stack.push(w)$
			}
		}
		\KwRet $Nodes$\;
	}
\end{algorithm}

\subsubsection{{\bf Neighborhood Query}} 
To answer the neighborhood query $N_G(u)$ posed on a node $u\in G$, we report
its neighborhood $N_{G_r}(u)$ in the reconstructed graph $G_r$.  First, the
\child sets are computed using parent relationships: if $u \in P_{G_s}(v)$,
then $v \in \child_{G_s}(u)$. Next, for the given query node $u$, its
\emph{descendants} are computed by enumerating the \child sets in a depth-first
manner. A node $v$ is said to be a descendant of $u$, if $u$ is an ancestor of
$v$.  Subsequently, for each descendant $v$ of $u$, its neighborhood
$N_{G_s}(v)$ is computed. For each $w \in N_{G_s}(v)$, the ancestors of $w$ are
reported as the answer $N_{G_r}(u)$. The detailed procedures are given in Algo.~\ref{alg:nq} and Algo.~\ref{alg:desc}.

\subsubsection{{\bf Reachability Query and Shortest Path Query}} 
We answer the reachability query $R_G(u,v)$ and the distance query
$\dist_G(u,v)$ on a node pair $u,v \in G$, by its corresponding queries
$R_{G_r}(u,v)$ and $\dist_{G_r}(u,v)$ respectively in the reconstructed graph
$G_r$.  The reachability query is solved by a breadth-first search on the
reconstructed graph $G_r$ using the neighborhood querying procedure, discussed
above. The search begins at node $u$ and terminates when it finds the node $v$
or when it has discovered all the nodes in the underlying connected component
of $G_r$.  The shortest path query is solved using the standard Dijkstra's
algorithm \cite{cormen2022introduction}.

\subsection{Query Characteristics}
\label{sec:query characteristics}

We next verify the query characteristics of the three summarization
variants of \cn, as stated in Table~\ref{tab:query properties}. Since \iu is a
lossless scheme, it retrieves all the queries exactly. 

Let us next consider \inter. From Eq.~\eqref{eq:variants}, Eq.~\eqref{eq:varrl}
and Eq.~\eqref{eq:loss-constraint}, it follows that for any node $u$, $N_{G_r}(u)
\subseteq N_G(u)$, $1-\eps_u)|N_G(u)| \le |N_{G_r}(u)| \le |N_G(u)|$.
To prove that $R_{G_r}(u) \subseteq R_G(u)$,
it is sufficient to show that $R_{G_r}(u) \setminus R_G(u)= \emptyset$.
Since no new edge is added in $G_r$ with respect to $G$, if $v$ is \emph{not
reachable} from a given node $u$ in $G$, then it continues to remain
unreachable in $G_r$.  However, since $G_r$ loses edges present in $G$, it may
happen that $v$ that was originally reachable from $u$ in $G$, now becomes
unreachable in $G_r$.  Thus, $R_{G_r}(u) \subseteq R_G(u)$.  Following the same
logic, since $E_r \subseteq E$, the \emph{shortest} path length between any
pair of nodes $u,v$ can only get increased in $G_r$. Hence, $\dist_{G_r}(u,v)
\ge \dist_G(u,v)$.

Next, let us consider \uni. From Eq.~\eqref{eq:variants}, Eq.~\eqref{eq:varrl} and
Eq.~\eqref{eq:loss-constraint}, it follows that for any node $u$, $N_{G_r}(u)
\supseteq N_G(u)$, $|N_G(u)| \le |N_{G_r}(u)| \le (1+\eps_u)|N_G(u)|$.  Next,
consider the reachability query.  Since no edge is lost in $G_r$ that is
present in $G$, if $u$ is reachable from any given node $v$ in $G$, then they
continue to be so in $G_r$, as well. Thus, $R_{G}(u) \subseteq R_{G_r}(u)$.
Although the reconstructed graph $G_r$ generated by \uni may contain extra edges with respect to the original graph $G$,  the extra edges are always
between the 2-hop neighbors. This indicates that nodes that remain in separate
connected components in $G$, continue to remain in separated connected components in $G_r$. In
other words, the connected components remain unaltered. Thus, $R_{G_r}(u)
\subseteq R_G(u)$. Together, $R_{G_r}(u) = R_G(u)$.  Since $E_r \supseteq E$,
i.e., extra edges are possibly added, lengths of paths can get reduced.  Hence,
$\dist_{G_r}(u,v) \le \dist_G(u,v)$.

\section{Correctness of \cn}
\label{sec:cgs_correctness}

The following result helps to establish the correctness of the \cn algorithm.

The correctness of the \cn algorithm is established through the correctness of the procedures: 
\updatedegree and \issafemerge, that are established as follows

\begin{lem} \label{lem:updatedegree}
	The procedure \updatedegree (Algo.~\ref{alg:updatedegree}) is correct.
\end{lem}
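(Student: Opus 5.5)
We will prove that \updatedegree maintains the invariant that, after every executed merge, $\deg_{G_r}(x)=|N_{G_r}(x)|$ for every $x\in G$, where $G_r$ is the graph that \decomp (Algo.~\ref{alg:decomp}) would return on the current $G_s$. The proof is by induction on the number of merges. The base case is immediate: line~5 of Algo.~\ref{alg:greedycn} sets $\deg_{G_r}(u)=|N_G(u)|$, and with no merges $G_r=G$. For the inductive step we assume the invariant for the summary graph $G'_s$ before merging $(u^*,v^*)$ into $s$, and compare the reconstructed edge sets before and after the merge. By Algo.~\ref{alg:decomp}, each summary edge $(a,b)$ contributes exactly $\snmap(a)\times\snmap(b)$ to $E_r$. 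The ancestor sets of nodes other than $s$ do not change, and $\snmap(s)=U^*\cup V^*$. Hence only the edges incident to $u^*$, $v^*$ or $s$ matter.

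First we need a structural lemma. The ancestor sets of distinct live nodes of $G_s$ are pairwise disjoint (for $\fp=I,U$ the parents are deleted when merged), and summary edges are never duplicated. This lets us treat each product $\snmap(a)\times\snmap(b)$ as contributing disjoint new edges, so degree changes can be counted by cardinalities. We also note that $U^*\cap V^*=\emptyset$, so $|\snmap(s)|=|U^*|+|V^*|$.

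\emph{Case $\fp=I$.} Before the merge, the edges incident to $u^*$ reconstruct $U^*\times\snmap(w)$ for $w\in N(u^*)$, and similarly for $v^*$. After the merge, only $(U^*\cup V^*)\times\snmap(w)$ for $w\in N(u^*)\cap N(v^*)$ remains. The lost edges are therefore exactly $U^*\times U'$ and $V^*\times V'$, where $U'$ and $V'$ are as defined in the algorithm. Each $u\in U^*$ loses $|U'|$ neighbours, and each $v\in V^*$ loses $|V'|$. Each $x\in U'$ loses $|U^*|$ neighbours, plus $|V^*|$ more if $x\in V'$ as well. These are precisely the four loops of the procedure. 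The delicate point is the edge $(u^*,v^*)$ itself when it is present, and the restrictions $U'-V^*$ and $V'-U^*$. When $v^*\in N(u^*)$, we have $V^*\subseteq U'$ and $U^*\subseteq V'$, and the edge $U^*\times V^*$ is lost once. This lost edge is already counted for its endpoints by the first two loops, so excluding $V^*$ and $U^*$ from the last two loops prevents double counting. We verify this explicitly.

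\emph{Case $\fp=U$.} Here the retained neighbourhood is $(N(u^*)\cup N(v^*))\setminus\{u^*,v^*\}$. The new edges are $U^*\times V'$ and $V^*\times U'$, where $U'$ excludes $v^*$ and the common neighbours, since those edges already existed. We must check that no new edge is ever created between $U^*$ and $V^*$. The merge code retains the edge $(u^*,v^*)$ and the residual neighbourhoods, so reconstructed edges that already existed are not double-counted. The four additive loops then match exactly.

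We expect the main obstacle to be the bookkeeping in these boundary cases. One case is the edge $(u^*,v^*)$ in \inter. Another is the possibility that $u^*$ or $v^*$ survives in \uni, where the merge procedure keeps its residual neighbourhood; we must show that those residual edges reconstruct pairs already counted. We plan to handle both by explicitly listing $E_r^{\text{new}}\,\triangle\,E_r^{\text{old}}$ and confirming that each affected vertex's degree changes by the amount given in the corresponding loop.
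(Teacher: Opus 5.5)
Your route is the paper's. Its proof argues by induction over merges that $U^*,V^*,U',V'$ contain every vertex whose reconstructed degree changes and that the loops apply the right increments, though it only says this ``can be shown inductively''. Your \inter bookkeeping is correct, including why $V^*$ and $U^*$ must be excluded from the last two loops when $(u^*,v^*)\in E_s$.

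The gap is the structural lemma that all your cardinality counting rests on. The claim ``ancestor sets of distinct live nodes are pairwise disjoint, since parents are deleted when merged'' is false, for two reasons.
\begin{itemize}
\item \merge deletes a parent $w$ only if $|N_{G_s}(w)|=0$ and $|P_{G_s}(w)|=0$, so supernode parents always remain in $G_s$. In \inter their neighbourhood is emptied, so restricting the lemma to nodes with nonempty neighbourhood repairs it.
\item In \uni, when $(u^*,v^*)\in E_s$, both parents survive with $N_{G_s}(u^*)=\{v^*\}$ and $N_{G_s}(v^*)=\{u^*\}$ (Sec.~\ref{sec:merge}), while $\snmap(s)=U^*\cup V^*$. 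That gives three edge-bearing live nodes with overlapping ancestor sets.
\end{itemize}
You notice this survival, but you only treat the residual edge at the current step, where it is trivially unchanged. The real damage is that your invariant is broken at every later step. There you can no longer conclude, from disjointness, that $|U'|=\sum_{w}|\snmap(w)|$, that $U^*,V^*,U',V'$ do not overlap, or that the pairs in $U^*\times V'$ and $V^*\times U'$ are absent from $E_r$.

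What you need is a ``frozen parents'' invariant in place of plain disjointness. Once a node has been merged, it is never touched again, for three reasons.
\begin{itemize}
\item Every heap entry containing $u^*$ or $v^*$ is killed by the \deckey$(\cdot,\infty)$ loops of \updateheap, reading pairs as unordered.
\item No later \merge changes its neighbourhood, since that would require merging its only possible neighbour, its frozen partner.
\item It never enters $N_{G_s}$ or $N^2_{G_s}$ of a later supernode.
\end{itemize}
Hence each vertex of $G$ has a single descendant chain. The ancestor sets of the \emph{unmerged} nodes of $G_s$ partition $V$, and these are the only nodes that can be merged or be adjacent to a merge. The only other summary edges are frozen \uni edges $(p,q)$, which reconstruct pairs lying inside $\snmap(c)$ for the unmerged end $c$ of $p$'s chain.

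From this invariant you get three facts. Each reconstructed pair comes from exactly one summary edge. For the current merge, $\snmap(w)$ for $w\in N_{G_s}(u^*)\cup N_{G_s}(v^*)$ are disjoint from each other and from $U^*\cup V^*$, and in \uni $U^*,V^*,U',V'$ are pairwise disjoint. Finally, a frozen-edge neighbour of $u\in U^*$ lies in $U^*$ and hence not in $V'$, so the pairs in $U^*\times V'$ and $V^*\times U'$ are genuinely new. With these in hand, your four-loop count goes through for both variants.
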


\begin{proof}
	To prove the correctness of the procedure \updatedegree, we need to show that after each merge operation, the procedure correctly computes the degree $\deg_{G_r}(u)$ for each node $u \in G_r$. Note that this procedure is executed only for $\fp \in \{I,U\}$. For $\fp=E$, the degree $\deg_{G_r}(u)=\deg_G(u)$ for each node $u\in G$. Hence there is no necessity to update the degrees. Next, consider $\fp \in \{I,U\}$. Firstly, we note that the given procedure correctly computes the set of all the nodes in $G_r$ (the sets $U^*,V^*,U'$ and $V'$, as stated in Algorithm~\ref{alg:updatedegree}) whose degrees are possibly altered as a result of the given merge operation. In other words, the procedure does not miss any node $u \in G_r$ whose degree gets altered due to the given merge operation. Next, it is noted that the procedure correctly computes the new degree for each such node $u \in G_r$ whose degree gets altered due to the given merge operation. This can be shown inductively.
\end{proof}

\begin{lem} \label{lem:issafemerge}
	The procedure \issafemerge (Algo.~\ref{alg:issafemerge}) is correct.
\end{lem}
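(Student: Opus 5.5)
The plan is to reduce the correctness of \issafemerge to Lemma~\ref{lem:updatedegree}. The key observation is structural. For $\fp\in\{I,U\}$, \issafemerge computes exactly the same four sets $U^*,V^*,U'$ and $V'$ as \updatedegree, and it applies exactly the same tentative degree changes to exactly the same nodes:
\begin{itemize}
\item for $\fp=I$, it subtracts $|U'|$, $|V'|$, $|U^*|$ and $|V^*|$ from the nodes of $U^*$, $V^*$, $U'-V^*$ and $V'-U^*$ respectively;
\item for $\fp=U$, it adds $|V'|$, $|U'|$, $|V^*|$ and $|U^*|$ to the nodes of $U^*$, $V^*$, $U'$ and $V'$ respectively.
\end{itemize}
The only difference is that \issafemerge tests each new value against the loss constraint instead of storing it. Hence, by Lemma~\ref{lem:updatedegree}, each tested quantity, say $\deg_{G_r}(u)-|U'|$, equals the degree $\deg_{G^*_r}(u)$ that $u$ would have in the graph $G^*_r$ reconstructed from the post-merge summary $G_s^*$. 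Every node outside these sets keeps its degree unchanged.

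Next I will show that checking degrees suffices to check the neighborhood loss constraint Eq.~\eqref{eq:loss-constraint}. This uses the one-sidedness of the variants from Eq.~\eqref{eq:variants}. Under $\fp=I$ we have $N_{G_r}(u)\subseteq N_G(u)$, so $\rl(u)=(\deg_G(u)-\deg_{G_r}(u))/\deg_G(u)$. Therefore $\rl(u)\le\eps_u$ is equivalent to $\deg_{G_r}(u)/\deg_G(u)\ge 1-\eps_u$. Symmetrically, under $\fp=U$ we have $N_{G_r}(u)\supseteq N_G(u)$, and $\rl(u)\le \eps_u$ is equivalent to $\deg_{G_r}(u)/\deg_G(u)\le 1+\eps_u$. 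These are exactly the comparisons in the four loops of each branch, with the inequality reversed to detect a violation.

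For this step I need to argue that one-sidedness is preserved by every merge, so that it holds for $G^*_r$. I would argue this by induction on the number of merges, using Algo.~\ref{alg:decomp}. In the $I$ merge, each edge $(s,w)$ of $G_s^*$ expands to $\snmap(s)\times\snmap(w)=(U^*\cup V^*)\times\snmap(w)$, where $w$ is a common neighbor of $u^*$ and $v^*$. All of these pairs were already present in the previous $G_r$, so no edges are added. In the $U$ merge, all edges incident to $u^*$ or $v^*$ are still covered by edges incident to $s$, so no edges are lost.

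Finally, I combine the two directions. If \issafemerge returns \true, then every node whose degree changes satisfies its bound, and every unaffected node satisfied its bound before the merge (inductive invariant), so the merge is safe. If it returns \false, the witness node $u$ has $\deg_{G^*_r}(u)$ violating its bound, so $\rl(u)>\eps_u$ after the merge, and rejecting the merge is correct. For $\fp=E$, the procedure returns \true, which is correct because the lossless merge keeps $E_r=E$ and hence $\rl(u)=0\le\eps_u$.

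The main obstacle is that Lemma~\ref{lem:updatedegree} is stated only loosely. The delicate point is the exact-count claim, especially for $\fp=I$: the decrement $|U'|$ for $u\in U^*$ must count only reconstructed edges that are actually lost. This requires the sets $U'$ and $V^*$ to be disjoint where needed; the exclusions $U'-V^*$ and $V'-U^*$ are there to avoid double counting, and I must check that they are precisely the right ones. I would first try to verify directly from Algo.~\ref{alg:decomp} that the lost edge set is $(U^*\times U')\cup(V^*\times V')$, with the two parts disjoint. The analogous statement for $U$ is that the gained edge set is $(U^*\times V')\cup(V^*\times U')$, where the terms $-\{u^*\}$ and $-\{v^*\}$ handle the retained edge $(u^*,v^*)$.
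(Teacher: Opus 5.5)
Your proposal follows the same route as the paper's proof: both reduce the correctness of \issafemerge to Lemma~\ref{lem:updatedegree}, by noting that it builds the same sets $U^*,V^*,U',V'$ and tests exactly the degree that \updatedegree would assign, and your one-sidedness and inductive-invariant arguments make explicit what the paper leaves implicit. One small correction to your closing remark: for \inter with $(u^*,v^*)\in E_s$ one has $V^*\subseteq U'$ and $U^*\subseteq V'$, so $U^*\times U'$ and $V^*\times V'$ are not disjoint (they share the $U^*$--$V^*$ edges); the exclusions $U'-V^*$ and $V'-U^*$ absorb exactly this overlap and, together with the disjointness of ancestor sets of distinct live supernodes, make the four loop sets pairwise disjoint, which is what your identification of each tested value with $\deg_{G^*_r}(u)$ actually requires.
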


\begin{proof}
	To prove the correctness of the \issafemerge procedure, it is sufficient to show that the procedure returns true if and only if it is safe to merge any pair of nodes, $u^*,v^* \in G_s$, i.e., it would not violate the neighborhood loss constraint (Eq.~\eqref{eq:loss-constraint}). If $\fp=E$, this fact is trivially true. If $\fp \in \{I,U\}$, the algorithm carefully scans each node $u \in G_r$, whose degree may get changed due to the given merge operation. To this end, the given procedure computes the sets  $U^*,V^*,U'$ and $V'$, as stated in Algorithm~\ref{alg:issafemerge}, that contain all the nodes whose degree may get altered due to a merge operation. 
	From Lem.~\ref{lem:updatedegree}, it is clear that the \cn algorithm correctly keeps track of $\deg_{G_r}(u)$ for each node $u \in G_r$. Using these degree value $\deg_{G_r}(u)$, the \issafemerge procedure  computes the possible updated degree of $u$ (considering realization of the given potential merge operation) and checks if the new degree values would violate the neighborhood loss constraint (Eq.~\eqref{eq:loss-constraint}). The algorithm returns true if and only if each such node $u$ passes this test. 
\end{proof}

\begin{thm} \label{thm: correctness of cn appendix}
	Given an input graph $G=(V,E)$ that is summarized by the \cn algorithm to produce the summary graph $G_s=(V_s,E_s)$ that is later decompressed to produce the reconstructed graph $G_r=(V_r,E_r)$. Then, $V_r=V$. Also, \uni, \iu and \inter respectively guarantee that $E_r \supseteq E$, $E_r=E$ and $E_r \subseteq E$. Additionally, $G_r$ respects the neighborhood loss constraint stated in Eq.~\eqref{eq:loss-constraint}. 
\end{thm}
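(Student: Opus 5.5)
We argue by induction on the number $k$ of merge operations performed by \greedycn. The central object is the graph obtained by running \decomp on the interim summary graph $G'_s$; call it $G'_r$. Initially $G'_s=G$. Every node is a simple node, and it is its own sole ancestor, so $G'_r=G$ and the claims hold trivially. The inductive step has to show three things for each merge of $(u^*,v^*)$ into $s=sn(u^*,v^*)$. First, the vertex set of the reconstruction is unchanged. Second, the edge set changes only in the direction allowed by the variant. Third, if the merge was accepted by \issafemerge, the loss constraint of Eq.~\eqref{eq:loss-constraint} still holds.

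For the vertex set, the key observation is $\snmap(s)=\snmap(u^*)\cup\snmap(v^*)$, which holds because $P_{G_s}(s)=\{u^*,v^*\}$. \merge deletes $u^*$ or $v^*$ only when that node becomes isolated and has no parents. Even then its ancestors survive inside $\snmap(s)$. So $\bigcup_{x\in V_s}\snmap(x)$ is invariant and equals $V$, which gives $V_r=V$.

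For the edge set, write $A=\snmap(u^*)$ and $B=\snmap(v^*)$. Before the merge, the contribution of $u^*$ to $E'_r$ is $\bigcup_{w\in N(u^*)}A\times\snmap(w)$, and similarly for $v^*$. We then compare contributions case by case.

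\begin{itemize}
\item For \iu, each common neighbour $w$ has its two superedges replaced by $(s,w)$. This contributes $(A\cup B)\times\snmap(w)$, which is exactly what was removed. The exclusive edges are kept, so $E'_r$ is unchanged and $E_r=E$ follows inductively.
\item For \inter, the common-neighbour edges are reproduced exactly as in \iu. The exclusive edges are deleted, and $u^*,v^*$ lose all neighbours. So the new reconstruction is a subset of the old one, giving $E_r\subseteq E$.
\item For \uni, every old product $A\times\snmap(w)$ is contained in $(A\cup B)\times\snmap(w)$. The edge $(u^*,v^*)$, if present, is retained. So the reconstruction can only grow, giving $E_r\supseteq E$.
\end{itemize}

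One subtlety needs care. The reconstruction is a union over superedges, so the same original edge may be produced by several superedges. This means removals in \inter delete exactly the products $A\times\snmap(w)$ for $w\in N(u^*)\setminus N(v^*)$, and symmetrically for $B$, only modulo overlaps. We need disjointness of ancestor sets for distinct current nodes, or at least that no two superedges cover the same pair. I would establish this as an auxiliary invariant first. I expect that in \inter the parents lose all their edges, and in \iu the edges are partitioned.

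For the loss constraint, I would invoke Lem.~\ref{lem:updatedegree}. By that lemma, the maintained values $\deg_{G_r}(x)$ equal the true degrees in $G'_r$ after every merge. By Lem.~\ref{lem:issafemerge}, a merge is executed only if the updated degrees satisfy $|\deg_{G_r}(x)-\deg_G(x)|/\deg_G(x)\le\eps_x$ for all $x$. This quantity equals $\rl(x)$ because of the one-sided containments just proved. In \inter we have $N_{G_r}(x)\subseteq N_G(x)$, so $\rl(x)=1-\deg_{G_r}(x)/\deg_G(x)$; \uni is symmetric. Hence the constraint is preserved at every step and holds for the final $G_r$. For \iu it is trivial since $E_r=E$.

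The main obstacle is the overlap issue above: justifying that the products of ancestor sets count edges exactly once. This is what lets the degree bookkeeping and the set containments be exact. If disjointness fails, I would fall back to proving only the containments, which require no disjointness. The degree accuracy would then rest entirely on Lem.~\ref{lem:updatedegree} as given.
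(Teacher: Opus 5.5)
Your proposal is correct and takes essentially the paper's route. Ancestor sets give $V_r=V$, the merge rules together with \decomp give the one-sided edge containments, and Lem.~\ref{lem:issafemerge} (resting on Lem.~\ref{lem:updatedegree}) gives the loss bound; your per-merge induction simply makes explicit the invariant the paper asserts when it says a covering superedge $(u^*,v^*)\in E_s$ must exist.

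The overlap issue is not needed for this theorem, because each per-merge comparison of unions of ancestor products is monotone whether or not the products overlap. Note also that your first proposed auxiliary invariant, disjoint ancestor sets for distinct current nodes, is false for \iu. There $u^*$ can survive alongside $s$ with $\snmap(u^*)\subseteq\snmap(s)$, as in the paper's example merging $(2,4)$, so your fallback to proving only the containments is the right choice.
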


\begin{proof}
	Suppose there is a node $u \in G$, such that $u \notin G_r$. Since the \cn algorithm does not remove a node unless it merges with another node, there must exist a node $u^*$ such that $u$ is an ancestor of $u^*$. If $u$  did not undergo any merge operation, then $u^*=u$. From the reconstruction algorithm, it follows that $u \in G_r$. This is a contradiction. Hence, $V \subseteq V_r$. Similarly, it can be proved that $V_r \subseteq V$. Hence, $V=V_r$.
	
	Next, consider $\fp=U$. Suppose there exists an  edge  $(u,v) \in E$ such that $(u,v) \notin E_r$. Referring to the merge operation for $\fp=U$ (Sec.~\ref{sec:merge}),  
	there must exist at least a pair of nodes, $u^*,v^* \in V_s$, (where $u^* \neq v^*$) such that $u$ and $v$ are ancestors of $u^*$ and $v^*$, respectively, and  $(u^*,v^*) \in E_s$. Using the reconstruction algorithm, since 
	$(u^*,v^*) \in E_s$, therefore, $(u,v) \in E_r$, which is a contradiction. Hence, \uni ensures that $E_r \supseteq E$.
	
	Similarly, \inter ensures that $E_r \subseteq E$, and \iu ensures that $E_r=E$.
	
	Next, consider $\fp\in \{I,U\}$.  In either case, from Lem.~\ref{lem:issafemerge}, it follows that a merge operation is realized only after it is confirmed that it would not violate the neighborhood loss constraint (Eq.~\eqref{eq:loss-constraint}). Therefore, when the algorithm terminates, it is guaranteed that the given constraint remains satisfied. The reconstruction algorithm also respects the neighborhood loss constraint.  
\end{proof}

From the above discussion, it follows that the summary graph $G_s$ generated by the Configurable Graph Summarizer (\cn) framework need not necessarily have the optimal compression ratio but the reconstructed graph $G_r$ and the query answers  satisfy all the constraints stated in the \gc problem (Problem~\ref{prob:gc}).

\section{Graph Properties that Guarantee Compression by the \cn Framework}
\label{sec:properties} 

The following results analyze the necessary and sufficient conditions to guarantee that $|G_s| < |G|$, i.e., $cr<1$, for each of the three algorithms, namely, \iu, \inter and \uni.

\begin{thm} \label{thm:property_iu appendix}
	Let $G_s$ be the summary graph of $G$ as produced by the \iu algorithm. Then, 
	$|G_s| < |G|$ if and only if there exists at least a pair of nodes $u,v \in G$ such that either  $|N_G(u) \cap N_G(v)| \ge 2$ or $|N_G(u) \cap N_G(v)| =1\ \wedge N_G(u)=N_G(v)$.
\end{thm}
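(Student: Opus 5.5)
The plan is to reduce the statement to a computation of the compression gain of a single merge on the \emph{original} graph $G$, and then argue about the greedy loop of \greedycn. Since \iu is lossless and \issafemerge always returns \true for $\fp=E$, the only thing that controls whether a merge is executed is the key $cg(u,v)$ in the heap $H$. Moreover, $|G_s|$ changes by exactly $-cg$ at each merge, by Eq.~\eqref{eq:cg}. So I will show two things: (i) for a pair $u,v\in G$ with no prior merges, $cg(u,v)>0$ holds exactly under the stated condition; and (ii) $|G_s|<|G|$ holds iff at least one merge is performed.

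\textbf{Step 1: gain on $G$.} I would compute $cg(u,v)$ for $\fp=E$ directly from the merge rule in Sec.~\ref{sec:merge}, which agrees with \computecg when $P(u)=P(v)=\emptyset$. Let $\ic=|N_G(u)\cap N_G(v)|$. The $2\ic$ edges to common neighbours are replaced by $\ic$ edges to $s$, which saves $2\ic$ units of space. Creating $s$ costs one vertex plus $2$ units of parent overhead. A parent $w\in\{u,v\}$ is deleted, saving $1$, exactly when its residual neighbourhood $N_G(w)\setminus N_G(w')$ is empty. Hence
\[
cg(u,v)=2\ic-3+[N_G(u)\subseteq N_G(v)]+[N_G(v)\subseteq N_G(u)].
\]
If $u,v$ are adjacent, then $v\in N_G(u)\setminus N_G(v)$, so the inclusion indicators behave correctly and no special case is needed. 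Case analysis then gives:
\begin{itemize}
\item if $\ic\ge 2$, then $cg\ge 1$;
\item if $\ic=1$, then $cg>0$ iff both indicators are $1$, i.e.\ $N_G(u)=N_G(v)$, which gives $cg=1$;
\item if $\ic=0$, then $cg<0$.
\end{itemize}
Pairs with $\ic\ge1$ are exactly the 2-hop pairs enumerated by \buildheap, so the initial heap is nonempty iff the stated condition holds.

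\textbf{Step 2: the two directions.} For ``only if'': if no such pair exists, \buildheap inserts nothing, the while loop of Algo.~\ref{alg:greedycn} never runs, and $G_s=G$. For ``if'': the first extracted key is positive, the merge is executed because $\fp=E$, and $|G_s|$ strictly decreases. It remains to show that no later merge increases the size. Each later merge is executed only when its key is positive (line~\ref{lin:algo terminates}), so it suffices that the heap keys equal the true gains $|G'_s|-|G^*_s|$ of the current interim graph.

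\textbf{Main obstacle.} The hard part is this last invariant: after every merge, \updateheap must leave each key equal to the current $cg$, or at least never larger than it. I would argue this by induction over iterations, checking that \computecg, through the parent tests via $P_{G_s}$, computes the node-deletion term correctly for supernodes. I would also check that the \deckey updates in the $\fp=E$ branch exactly account for common neighbours absorbed by $s$. If that bookkeeping proves too delicate, a fallback is to note that overestimated keys could only cause extra merges. A cleaner fallback is to observe that the final size equals $|G|-\sum cg_i$ with every executed $cg_i$ computed on the current graph, and so is strictly positive.
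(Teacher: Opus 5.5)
Your Step~1 and your ``only if'' direction are the paper's argument. Both evaluate \computecg on a pair of original nodes as $2\,ic-3+x$ with $x=[N_G(u)\subseteq N_G(v)]+[N_G(v)\subseteq N_G(u)]$ (the paper's $x$ in Eq.~\eqref{eq:x}), split on $ic$, and conclude that the initial heap is nonempty, and hence some merge is executed, exactly under the stated condition. You are sharper than the paper here. It only writes $cg(u,v)\ge 2|N_G(u)\cap N_G(v)|-3+x$, and $cg>0$ together with a lower bound says nothing about the sign of that bound. Its converse therefore needs the equality you state, which holds because both parent sets are empty on $G$. Your adjacency remark and your empty-heap contrapositive are also correct.

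The gap is the step you flag yourself, and the paper does not prove it either. It passes from ``at least one merge was executed'' directly to $|G_s|<|G|$, taking for granted the rule stated in the overview of Sec.~\ref{sec:framework} that a merge is performed only if $|G_s^*|<|G_s'|$. If you adopt that rule, the ``if'' direction is finished after the first merge, and you should say so explicitly. Your attempts to derive it from the pseudocode do not work:

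\emph{Fallback~1} points the wrong way. A merge triggered by an overestimated key may have nonpositive true gain, which is exactly what would push $|G_s|$ back up.

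\emph{Fallback~2} is circular. The identity $|G_s|=|G|-\sum_i g_i$ holds with $g_i$ the true gains, but Algo.~\ref{alg:greedycn} merges on the extracted key and never recomputes the gain. So $g_i>0$ is the very claim at issue.

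\emph{The planned induction} (keys equal, or never exceed, the current gains) is false for the literal \updateheap. Its branch for \iu only subtracts $2|N(u^*)\cap N(v^*)\cap N(\cdot)|$ or $2$. It never revisits the $+1$ deletion bonus that \computecg added at insertion time. For example, take the graph whose only edges join $u^*$ to $a,d,e,f,h$, join $v^*$ to $a,c_1,c_2,d,e,f,g$, and join $w$ to $a,c_1,c_2$. The unique maximum key is $cg(u^*,v^*)=5$. After that merge, the key of $(w,v^*)$ drops from $4$ (which includes the bonus for $N(w)\subseteq N(v^*)$) to $2$. Its true gain is now only $2\cdot 2-3=1$, because $N(w)=\{a,c_1,c_2\}$ and the residual $N(v^*)=\{c_1,c_2,g\}$ are incomparable.

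This particular overshoot is harmless, but it shows that the invariant fails. A pseudocode-level proof would need a weaker invariant that still rules out nonpositive-gain merges while tracking such vanishing bonuses, and neither your proposal nor the paper supplies one.
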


\begin{proof}
	Suppose the graph $G$ has a pair of nodes $u,v \in G$ such that $|N_G(u) \cap N_G(v)| \ge 2$. It is sufficient to show that the \iu algorithm executes at least one \merge operation, and eventually produces a graph $G_s$ such that $|G_s|<|G|$. For this to happen, it is sufficient to show that the heap $H$ has at least one pair of nodes that offer positive compression gain. If such a pair exists, the \iu algorithm would eventually extract and merge such a pair.  We show that the pair $(u,v)$ satisfies the above requirement.
	
	From the \computecg algorithm (Algo.~\ref{alg:computecg}),  we observe that for the given pair $(u,v)$, $cg(u,v) \ge 2|N_G(u) \cap N_G(v)|-3+x$ where $x$ is given as follows:
	\begin{align} \label{eq:x}
		x = \begin{cases}
			0 & \text{ if } |N_G(u)-N_G(v)| \ge 1 \wedge |N_G(v)-N_G(u)| \ge 1\\
			1 & \text{ if either } N_G(u)\subset N_G(v) \text{ or } N_G(v)\subset N_G(u) \\
			2 & \text{ if } N_G(u)=N_G(v) 
		\end{cases}
	\end{align} 
	
	If  $|N_G(u) \cap N_G(v)| \ge 2$ then $cg(u,v)\ge1$. If $|N_G(u) \cap N_G(v)| =1$ and $N_G(u)=N_G(v)$, then $x=2$. Consequently, $cg(u,v)=1$. Thus, in either of the above cases, $cg(u,v)>0$.
	It follows that the \iu algorithm would have executed at least one merge operation, thereby producing a graph $G_s$ such that $|G_s| < |G|$.
	
	Next consider the case $|G_s|<|G|$. This is only possible if the graph $G$ undergoes at least one iteration of \merge. This can only happen if there exists at least a pair of nodes $u,v\in G$ such that $cg(u,v)>0$. From the \computecg algorithm (Algo.~\ref{alg:computecg}), we know that $cg(u,v) \ge 2|N_G(u) \cap N_G(v)|-3+x$ where $x$ is given in Eq.~\eqref{eq:x}. Thus, if $cg(u,v)>0$, it implies that  $G$ has a pair of nodes $u,v \in G$ such that $|N_G(u) \cap N_G(v)| \ge 2$ or $|N_G(u) \cap N_G(v)| =1\ \wedge N_G(u)=N_G(v)$.
\end{proof}

\begin{thm} \label{thm:property_inter appendix}
	Let $G_s$ be the summary graph of $G$ as produced by the \inter algorithm. Then, 
	$|G_s| < |G|$ if and only if there exists at least a pair of nodes $u,v \in G$ such that $v \in N^2_G(u)$, $|N_G(u)|\le |N_G(u) \cap N_G(v)|/(1-\eps_u)$ and $|N_G(v)|\le |N_G(u) \cap N_G(v)|/(1-\eps_v)$.
\end{thm}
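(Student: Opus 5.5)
We mirror the proof of Thm.~\ref{thm:property_iu appendix}: $|G_s|<|G|$ holds exactly when \inter performs at least one \merge, since \merge is only executed on pairs extracted with $key(h)>0$. So the theorem reduces to a statement about the first merge, and we prove both directions by analysing it.

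\emph{Gain.} From \computecg (Algo.~\ref{alg:computecg}) with $\fp=I$ on the input graph, all parent sets are empty, so both correction terms $+1$ apply. Writing $\uc=|N_G(u)\cup N_G(v)|$, this gives
\begin{align*}
cg(u,v)=2\uc(u,v)-3+2-2\cdot[v\in N_G(u)]=2\uc(u,v)-1-2\cdot[v\in N_G(u)].
\end{align*}
For any pair with $v\in N^2_G(u)$ we have a common neighbour, so $\uc\ge 1$. If $u,v$ are non-adjacent, $cg\ge 1$. If they are adjacent, each lies in the other's neighbourhood, so $\uc\ge 2$ and $cg\ge 1$. Hence every 2-hop pair enters $H$ with positive key, and the only possible obstruction to a merge is \issafemerge.

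\emph{Safety.} Next we evaluate \issafemerge (Algo.~\ref{alg:issafemerge}) on $G$ for a 2-hop pair $(u,v)$. Here $U^*=\{u\}$, $V^*=\{v\}$, $U'=N_G(u)\setminus N_G(v)$, $V'=N_G(v)\setminus N_G(u)$, and $\deg_{G_r}=\deg_G$.
\begin{itemize}
\item The check for $u$ reads $(|N_G(u)|-|U'|)/|N_G(u)|\ge 1-\eps_u$, i.e.\ $|N_G(u)\cap N_G(v)|\ge(1-\eps_u)|N_G(u)|$, which is exactly the stated inequality for $u$. The check for $v$ is symmetric.
\item The remaining checks concern nodes $w\in U'\setminus V^*$, which each lose one edge. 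This requires $(\deg_G(w)-1)/\deg_G(w)\ge 1-\eps_w$.
\end{itemize}
The first two checks match the theorem's conditions exactly, but the third family is not part of the stated condition.

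\emph{Direction ($\Leftarrow$).} Given such a pair, $(u,v)\in H$ with positive key. We argue that \greedycn merges \emph{some} pair before $H$ is exhausted, not necessarily this one. Keys of pairs not yet merged stay unchanged until a merge happens, because \updateheap is only called from \merge. The plan is to show that $(u,v)$ itself passes \issafemerge when it is extracted. This needs the side checks for nodes in $U'\cup V'$ to pass.

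\emph{Direction ($\Rightarrow$).} Suppose $|G_s|<|G|$. Then a first merge of some pair occurred in the unmodified $G$, and it was a 2-hop pair found by \buildheap. Since \issafemerge returned \true, the two computations above give exactly the two displayed inequalities for that pair.

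\emph{Main obstacle.} The hard step is the ($\Leftarrow$) direction, specifically the side-node checks. A neighbour $w$ of degree $1$ in $U'$ would be isolated by the merge, giving $\rl(w)=1$ unless $\eps_w=1$. So the stated condition is not literally sufficient in general. I would first try to show that a pair violating only the side constraints can be replaced by another qualifying pair. If that fails, I would state the sufficiency under the additional assumption that every $w\in U'\cup V'$ satisfies $1/\deg_G(w)\le\eps_w$. Alternatively, I would observe that the side checks are vacuous when $U'=V'=\emptyset$, and treat that case separately.
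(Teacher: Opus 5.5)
Your ($\Rightarrow$) argument is correct, and the obstacle you isolate in ($\Leftarrow$) is real. The statement is false as written, so no fallback that keeps the hypothesis unchanged can succeed. The paper's proof of ($\Leftarrow$) follows your lines: positive gain from \computecg, then $\rl(u)\le\delta_u$ and $\rl(v)\le\delta_v$ from the two inequalities. It then simply asserts that merging $(u,v)$ respects the loss constraint. It never examines the exclusive neighbours $w\in(N_G(u)\setminus N_G(v))\cup(N_G(v)\setminus N_G(u))$, $w\notin\{u,v\}$, which \issafemerge also checks.

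Concretely, take the path with edges $\{a,b\},\{b,c\},\{c,d\}$ and $\delta_x=1/2$ for all $x$. \buildheap inserts only $(a,c)$ and $(b,d)$, each with gain $3$. The pair $(a,c)$ meets the theorem's condition, since $|N_G(a)|=1\le 2$ and $|N_G(c)|=2\le 2$. Yet merging $a,c$ deletes the only edge of $d$, so $\rl(d)=1>1/2$ and \issafemerge rejects it. Symmetrically, $(b,d)$ is rejected because of $a$. No merge happens and $|G_s|=|G|=10$; the same holds for any uniform $\delta\in[1/2,1)$. In particular, your first fallback (replacing the offending pair by another qualifying one) fails here, because no other pair exists.

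The repair is your second option, moved into the statement. Additionally require $1/\deg_G(w)\le\delta_w$ for every such exclusive neighbour $w$. On $G$ each of these loses exactly one neighbour, since $U^*=\{u\}$, $V^*=\{v\}$ and $U'\cap V'=\emptyset$. With that hypothesis your argument closes. Before the first merge, $H$ changes only through \exmax. So either some merge occurs earlier, or $(u,v)$ is extracted while $G_s=G$ and $\deg_{G_r}=\deg_G$, and then all four families of checks pass.

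Your ($\Rightarrow$) direction also delivers the extra condition, because \issafemerge returning \true on $G$ includes the side checks. So the amended statement is a genuine equivalence. You also correctly obtain $v\in N^2_G(u)$ from how \buildheap fills $H$, whereas the paper tries to infer it from $cg>0$.

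One assumption remains implicit in both your proof and the paper's. Deducing $|G_s|<|G|$ from ``at least one merge was executed'' presumes that extracted keys equal the true gains. After the first merge, however, the keys are maintained by the fixed $\pm2$ adjustments of \updateheap, and neither you nor the paper show these to be exact.
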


\begin{proof}
	Suppose the graph $G$ has a pair of nodes $u,v \in G$ such that $v \in N^2_G(u)$,  $|N_G(u) \cup N_G(v)| \ge 2$, $|N_G(u)|\le |N_G(u) \cap N_G(v)|/(1-\eps_u)$ and $|N_G(v)|\le |N_G(u) \cap N_G(v)|/(1-\eps_v)$. 
	It is sufficient to show that the \inter algorithm executes at least one \merge operation, and eventually produces a graph $G_s$ such that $|G_s|<|G|$. For this to happen, it is sufficient to show that the heap $H$ has at least one pair of nodes that offer positive compression gain and their merging does not violate the neighborhood loss constraint. If such a pair exists, the \inter algorithm would eventually extract and merge such a pair.  We show that the pair $(u,v)$ satisfies the above requirement.
	
	Firstly,  from  the \computecg algorithm (Algo.~\ref{alg:computecg}),  we observe that for a given pair $u,v \in G$, $cg(u,v) \ge 2|N_G(u) \cup N_G(v)|-3 +y$ where $y$ is given as follows:
	\begin{align} \label{eq:y}
		y = \begin{cases}
			0 & \text{ if } (u,v) \in E \\
			2 & \text{ otherwise } 
		\end{cases}
	\end{align}   
	
	Since $v \in N^2_G(u)$, therefore, it follows that $|N_G(u) \cup N_G(v)|\ge 1$. Moreover, if $(u,v) \in E$, then, $|N_G(u) \cup N_G(v)|\ge 2$. Hence, for either of the cases: $(u,v)\in E$ or $(u,v)\notin E$, $cg(u,v)>0$.
	
	Next, we note that $rl(u)=|N_G(u)-N_{G_r}(u)|/|N_G(u)|= (|N_G(u)-(N_G(u) \cap N_G(v))|)/|N_G(u)|$. Since $(N_G(u) \cap N_G(v)) \subseteq N_G(u)$, therefore, the above expression simplifies to $rl(u)=1-|N_G(u) \cap N_G(v)|/|N_G(u)|$. From the condition stated above, $|N_G(u) \cap N_G(v)| \ge (1-\eps_u)|N_G(u)|$. Hence, $rl(u) \le \eps_u$. In a similar manner, it can be shown that $rl(v) \le \eps_v$. Therefore, if the pair of nodes $(u,v)$ is merged, it does not violate the neighborhood loss constraint.  It follows that the \inter algorithm would have executed at least one merge operation, thereby producing a graph $G_s$ such that $|G_s| < |G|$.

	Next consider the case $|G_s|<|G|$. This is only possible if the graph $G$ undergoes at least one iteration of \merge. This can only happen if there exists at least a pair of nodes $u,v\in G$ such that $cg(u,v)>0$ and their merging does not violate the neighborhood loss constraint. From the \computecg algorithm (Algo.~\ref{alg:computecg}), we know that $cg(u,v) \ge 2|N_G(u) \cup N_G(v)|-3+y$ where $y$ is given by Eq.~\eqref{eq:y}. Thus, if $cg(u,v)>0$, then it implies that  $v \in N^2_G(u)$. Further, if the merging of the pair $(u,v)$ does not violate the neighborhood loss constraint, then,  $rl(u)\le \eps_u$ and $rl(v) \le \eps_v$. It follows that the graph $G$ has a pair of nodes $u,v \in G$ such that $v \in N^2_G(u)$,  $|N_G(u)|\le |N_G(u) \cap N_G(v)|/(1-\eps_u)$ and $|N_G(v)|\le |N_G(u) \cap N_G(v)|/(1-\eps_v)$.
\end{proof}

\begin{thm} \label{thm:property_uni appendix}
	Let $G_s$ be the summary graph of $G$ as produced by the \uni algorithm. Then, 
	$|G_s| < |G|$ if and only if there exists at least a pair of nodes $u,v \in G$ such that either $|N_G(u) \cap N_G(v)| \ge 2$ or $|N_G(u) \cap N_G(v)| =1 \wedge (u,v)\notin E$; $|N_G(u)|\ge |N_G(u) \cup N_G(v)|/(1+\eps_u)$ and $|N_G(v)|\ge |N_G(u) \cup N_G(v)|/(1+\eps_v)$.
\end{thm}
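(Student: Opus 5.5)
The proof follows the same two-direction template as Theorems~\ref{thm:property_iu appendix} and~\ref{thm:property_inter appendix}. Throughout, $|G_s|<|G|$ is equivalent to the \uni algorithm executing at least one \merge. This holds because every executed merge has positive compression gain, and the algorithm stops at the first extracted key that is $\le 0$ (line~\ref{lin:algo terminates} of Algo.~\ref{alg:greedycn}). So the statement reduces to a claim about the initial heap $H$: some pair $(u,v)$ with $v\in N^2_G(u)$ has $cg(u,v)>0$, and merging it passes \issafemerge. Lem.~\ref{lem:issafemerge} lets us treat the second requirement as exactly the neighborhood loss constraint.

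\emph{Sufficiency.} First I would bound the gain from Algo.~\ref{alg:computecg} for $\fp=U$ on the original graph, where every node has an empty parent set:
\begin{align*}
cg(u,v)=2|N_G(u)\cap N_G(v)|-3+z, \qquad z=\begin{cases}2 & (u,v)\notin E\\ 0 & (u,v)\in E.\end{cases}
\end{align*}
If $|N_G(u)\cap N_G(v)|\ge 2$, then $cg\ge 1$. If $|N_G(u)\cap N_G(v)|=1$ and $(u,v)\notin E$, then $cg=1$. A nonempty intersection also makes $v$ a 2-hop neighbor of $u$, so \buildheap inserts the pair.

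Next I would compute the loss of a first merge. After merging, $N_{G_r}(u)=N_G(u)\cup N_G(v)$ with $u$ and $v$ themselves excluded; if $(u,v)\in E$ that edge is retained. Hence
\begin{align*}
rl(u)=\frac{|N_G(u)\cup N_G(v)|-|N_G(u)|}{|N_G(u)|}.
\end{align*}
The hypothesis $|N_G(u)|\ge |N_G(u)\cup N_G(v)|/(1+\eps_u)$ gives $rl(u)\le\eps_u$, and symmetrically $rl(v)\le\eps_v$. The remaining affected nodes are the exclusive neighbors $w\in U'\cup V'$, and each gains one new neighbor ($v$ or $u$). Their constraints should be checked too. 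I would either note this subtlety or, following the paper's convention in Theorem~\ref{thm:property_inter appendix}, restrict to the endpoints. With the check passing, \issafemerge returns \true, some merge happens (perhaps a higher-gain pair first, which still gives $|G_s|<|G|$), and we are done.

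\emph{Necessity.} I would reverse the steps. If $|G_s|<|G|$, the first merge acted on some pair $(u,v)$ of the original graph with $cg(u,v)>0$ that passed \issafemerge. By the formula, $2|N_G(u)\cap N_G(v)|-3+z>0$ forces either $|N_G(u)\cap N_G(v)|\ge2$, or $|N_G(u)\cap N_G(v)|=1$ with $z=2$, i.e.\ $(u,v)\notin E$. Passing the safety test gives $rl(u)\le\eps_u$ and $rl(v)\le\eps_v$, which rearrange to the two stated degree inequalities.

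\emph{Main obstacle.} The delicate step is getting the $cg$ formula and the union-neighborhood accounting exactly right when $(u,v)\in E$. In that case the edge is kept and $u,v$ survive, and the $+1$ bonuses in \computecg are not applied. I would verify these directly against Eq.~\eqref{eq:cg}. The safety of the exclusive neighbors $U',V'$ is a secondary gap, which I would handle as described above.
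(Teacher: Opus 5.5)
Your two-direction skeleton is the same as the paper's. However, the two points you set aside as ``subtleties'' are exactly where the argument breaks, and they cannot be patched: the statement as written is false in both directions.

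\emph{Sufficiency.} On the original graph, \issafemerge also tests every exclusive neighbor $x$ of $u$ or $v$ (the sets $U',V'$). Such an $x$ gains exactly one neighbor, so the merge is rejected whenever $\eps_x|N_G(x)|<1$. You cannot restrict attention to the endpoints, because the algorithm does not. Concretely, take the $4$-cycle $u\,c_1\,v\,c_2$ with a pendant vertex $w$ at $u$ and a pendant vertex $y$ at $c_1$, and set $\eps_x=1/2$ for all $x$. The pair $(u,v)$ meets every hypothesis: $|N_G(u)\cap N_G(v)|=2$, $|N_G(u)\cup N_G(v)|=3$, $|N_G(u)|=3\ge 2$ and $|N_G(v)|=2\ge 2$. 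The graph is bipartite, so the only pairs with positive gain are the six non-adjacent pairs $(u,v)$, $(c_1,c_2)$, $(c_1,w)$, $(c_2,w)$, $(u,y)$, $(v,y)$. Each of these merges gives the leaf $w$ or the leaf $y$ a new neighbor, i.e.\ $rl=1>1/2$. All are rejected, and $G_s=G$.

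\emph{Necessity.} Your loss formula is wrong when $(u,v)\in E$. You correctly describe $N_{G_r}(u)=(N_G(u)\cup N_G(v))\setminus\{u\}$. But $u\in N_G(v)$ is counted in $|N_G(u)\cup N_G(v)|$, so the true loss is $rl(u)=|N_G(v)\setminus(N_G(u)\cup\{u\})|/|N_G(u)|$, which is smaller than yours by $1/|N_G(u)|$. Hence passing \issafemerge does not give the stated inequality. For a counterexample, take $K_4$ with uniform $\eps<1/3$ (even $\eps=0$). Every pair is adjacent with two common neighbors, so $cg=1$. Merging any pair reconstructs $K_4$ exactly ($U'=V'=\emptyset$), so a merge is executed and $|G_s|<|G|$. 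Yet $|N_G(u)\cup N_G(v)|=4>3(1+\eps)$ for every pair, so no pair satisfies the hypothesis. The paper's own proof has the same two holes.

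\emph{What is true.} Replace the two degree inequalities by $|N_G(v)\setminus(N_G(u)\cup\{u\})|\le\eps_u|N_G(u)|$ and $|N_G(u)\setminus(N_G(v)\cup\{v\})|\le\eps_v|N_G(v)|$. Also add $\eps_x|N_G(x)|\ge 1$ for every $x$ in $N_G(u)\setminus(N_G(v)\cup\{v\})$ or in $N_G(v)\setminus(N_G(u)\cup\{u\})$. On the original graph these are precisely the checks \issafemerge performs, so your argument proves this version essentially verbatim. Your observation that a higher-gain pair may be merged first takes care of the rest.
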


\begin{proof}
	Suppose the graph $G$ has a pair of nodes $u,v \in G$ such that $|N_G(u) \cap N_G(v)| \ge 2$,  $|N_G(u)|\ge |N_G(u) \cup N_G(v)|/(1+\eps_u)$ and $|N_G(v)|\ge |N_G(u) \cup N_G(v)|/(1+\eps_v)$. 
	It is sufficient to show that the \uni algorithm executes at least one \merge operation, and eventually produces a graph $G_s$ such that $|G_s|<|G|$. For this to happen, it is sufficient to show that the heap $H$ has at least one pair of nodes that offer positive compression gain and their merging does not violate the neighborhood loss constraint. If such a pair exists, the \uni algorithm would eventually extract and merge such a pair.  We show that the pair $(u,v)$ satisfies the above requirement.

	Firstly, from the \computecg algorithm (Algo.~\ref{alg:computecg}), we observe that for a given pair $(u,v)$, $cg(u,v) \ge 2|N_G(u) \cap N_G(v)|-3 +z$ where $z$ is given as follows:  
	\begin{align} \label{eq:z}
		z = \begin{cases}
			0 & \text{ if } (u,v) \in E \\
			2 & \text{ otherwise } 
		\end{cases}
	\end{align} 

	Since $v \in N^2_G(u)$, therefore,  $|N_G(u) \cap N_G(v)| \ge 1$. Note that if $(u,v)\notin E$, then $z=2$. Consequently, $cg(u,v)>0$. Else, if $|N_G(u) \cap N_G(v)| \ge 2$, then $cg(u,v)>0$.

	Next, we note that $rl(u)=|N_{G_r}(u)-N_G(u)|/|N_G(u)|= (|(N_G(u) \cup N_G(v)-N_G(u))|)/|N_G(u)|$. Since $N_G(u) \subseteq (N_G(u) \cup N_G(v)) $, therefore, the above expression simplifies to $rl(u)=|N_G(u) \cup N_G(v)|/|N_G(u)|-1$. From the condition stated above, $|N_G(u) \cup N_G(v)| \le (1+\eps_u)|N_G(u)|$. Hence, $rl(u) \le \eps_u$. In a similar manner, it can be shown that $rl(v) \le \eps_v$. Therefore, if the pair of nodes $(u,v)$ is merged, it does not violate the neighborhood loss constraint.  It follows that the \uni algorithm would have executed at least one merge operation, thereby producing a graph $G_s$ such that $|G_s| < |G|$.

	Next consider the case $|G_s|<|G|$. This is only possible if the graph $G$ undergoes at least one iteration of \merge. This can only happen if there exists at least a pair of nodes $u,v\in G$ such that $cg(u,v)>0$ and their merging does not violate the neighborhood loss constraint. From the \computecg algorithm (Algo.~\ref{alg:computecg}), we know that $cg(u,v) \ge 2|N_G(u) \cap N_G(v)|-3+z$ where $z$ is given in Eq.~\eqref{eq:z}. Thus, if $cg(u,v)>0$, then it implies that   either $(u,v) \notin E$ or $|N_G(u) \cap N_G(v)| \ge 2$. Further, if the merging of the pair $(u,v)$ does not violate the neighborhood loss constraint, then,  $rl(u)\le \eps_u$ and $rl(v) \le \eps_v$.  It follows that the graph $G$ has a pair of nodes $u,v \in G$ such that  either $|N_G(u) \cap N_G(v)| \ge 2$ or $|N_G(u) \cap N_G(v)| =1 \wedge (u,v)\notin E$;  $|N_G(u)|\ge |N_G(u) \cup N_G(v)|/(1+\eps_u)$ and $|N_G(v)|\ge |N_G(u) \cup N_G(v)|/(1+\eps_v)$.
\end{proof}

\section{Complexity of \cn}
\label{sec:complexity}

In this section, we analyze the time complexity of the \cn algorithm. Suppose
the maximum degree of any node $u \in G$ is $\maxd$, i.e.,  $|N_G(u)| \le
\maxd$. Assume the neighborhood sets $N_G(u),N_{G_s}(u)$ are maintained as
hash-sets for each node $u$. Thus, neighborhood checks such as$v \in N_G(u)$
are answered in $O(1)$ time. Further, let $|V|=n$, and $|E|=m$. Let
$\eps=\max\{\eps_u|u \in G\}$. The complexity of the \cn algorithm is
established through the following series of results.

\begin{lem}
	\label{lem:eligible pairs}
	Given a graph $G$ of $n$ nodes, the number of pairs of nodes that are eligible to be merged is $O(\min(m\maxd,\binom{n}{2}))$.
\end{lem}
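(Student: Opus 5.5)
Eligible pairs are exactly those inserted into $H$ by \buildheap, namely pairs $(u,w)$ with $w \in N^2_G(u)$, i.e., pairs joined by some path $u - v - w$ through a common neighbor $v$. I would bound the number of such pairs in two independent ways and take the minimum.

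The first bound is trivial. Every eligible pair is an unordered pair of distinct nodes of $G$; the condition $u<w$ in Algo.~\ref{alg:buildheap} ensures each pair is counted once. Hence there are at most $\binom{n}{2}$ eligible pairs.

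For the second bound, I would charge each eligible pair $\{u,w\}$ to a witnessing middle vertex $v \in N_G(u)\cap N_G(w)$, equivalently to the length-two path $u - v - w$. Such a path is determined by the ordered choice of an edge $(u,v)$ followed by a neighbor $w$ of $v$, so the number of length-two paths is at most
\[
\sum_{v \in V} \deg_G(v)^2 \;\le\; \maxd \sum_{v\in V}\deg_G(v) \;=\; 2m\maxd .
\]
This is exactly the number of iterations of the two inner loops of \buildheap summed over all $u$, which gives a concrete way to phrase the argument. Since distinct eligible pairs need distinct witnessing paths, the number of eligible pairs is at most $2m\maxd = O(m\maxd)$.

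Combining the two bounds gives $O(\min(m\maxd,\binom{n}{2}))$. The only point needing care is that pairs created later by \updateheap, involving supernodes, must also be covered. I would handle this by noting that merges never increase the number of edges enough to break the bound. Alternatively, I would interpret the lemma, as its wording suggests, as counting pairs in the initial graph $G$. I expect the paper intends the latter, so the main step is simply the double-counting via middle vertices.
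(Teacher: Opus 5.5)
Your proposal is correct and is essentially the paper's argument. The paper bounds the count by $\sum_{u}|N^2_G(u)|\le\sum_{u}\sum_{v\in N_G(u)}(\deg_G(v)-1)$, which after exchanging the order of summation is $\sum_{u}\deg_G(u)(\deg_G(u)-1)\le 2m\maxd$. That is the same middle-vertex double count as your $\sum_{v}\deg_G(v)^2\le 2m\maxd$, and the paper likewise takes the minimum with the trivial $\binom{n}{2}$.

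Your reading that the lemma counts pairs in the input graph $G$ is the intended one, since it only feeds Lem.~\ref{lem:initial-heap-size} and Lem.~\ref{lem:buildheap-complexity}. You should drop the unjustified alternative claim that later merges preserve the $O(m\maxd)$ bound. Once supernode pairs are counted, the paper switches to the larger estimate of Lem.~\ref{lem: max heap size}.
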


\begin{proof}
	A pair of nodes $u,v\in G$ is eligible to be merged if $v \in N_G^2(u)$.
	The number of 2-hop neighbors of $u \in G$, is $|N_G^2(u)|=\sum_{v \in N_G(u)} (\deg_G(v)-1)$.  Thus, the total number of pairs of nodes that are eligible to be merged is at most $\sum_{u \in G}|N_G^2(u)| \le \sum_{u \in G}\sum_{v \in N_G(u)} (\deg_G(v)-1) \le \sum_{u \in G} \deg_G(u)(\deg_G(u)-1)\le 2m\maxd$. Given that the graph $G$ has $n$ nodes, the total number of possible pairs of nodes is $\binom{n}{2}$. Therefore, the number of eligible pairs is  $O(\min(m\maxd,\binom{n}{2}))$. 
\end{proof}

\begin{lem} \label{lem:initial-heap-size}
	The \buildheap procedure creates a binary max-heap $H$, whose size, i.e., the number of nodes in $H$, is $|H|=O(m\maxd)$.
\end{lem}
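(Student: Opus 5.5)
The plan is to bound the number of \insertkey operations that \buildheap (Algo.~\ref{alg:buildheap}) performs. The heap $H$ starts empty and only grows through \insertkey, so its size is at most the number of such calls. I will count how often the innermost loop body runs and then use Lem.~\ref{lem:eligible pairs}.

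First, I observe what triggers an insertion. An \insertkey call happens only inside the triple loop over $u \in G$, $v \in N_G(u)$ and $w \in N_G(v)$ with $u<w$. So every inserted pair $(u,w)$ has $w$ in the 2-hop neighborhood of $u$, meaning it is an eligible pair in the sense of Lem.~\ref{lem:eligible pairs}.

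Second, I bound the number of iterations directly, since a pair $(u,w)$ may be reached through several intermediate nodes $v$ and could be inserted more than once. The body runs at most
\[
\sum_{u \in G}\sum_{v \in N_G(u)} \deg_G(v) = \sum_{v \in G} \deg_G(v)^2 \le \maxd \sum_{v\in G}\deg_G(v) = 2m\maxd
\]
times. The equality swaps the order of summation: each $v$ appears once for each of its $\deg_G(v)$ neighbors $u$. The inequality uses $\deg_G(v)\le \maxd$. Each iteration performs at most one \insertkey, so $|H| \le 2m\maxd = O(m\maxd)$.

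Third, I address the one real subtlety, which is duplicate insertions. If the heap implementation deduplicates pairs, for example by updating a key rather than re-inserting, then $|H|$ is at most the number of distinct eligible pairs. That count is $O(\min(m\maxd,\binom{n}{2}))\subseteq O(m\maxd)$ by Lem.~\ref{lem:eligible pairs}. If duplicates are kept, the raw iteration count from the second step still gives $O(m\maxd)$. Either way the bound holds, and this case split is the only point needing care.
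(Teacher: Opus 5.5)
Your proof is correct, and it reaches the bound by a slightly different count than the paper. The paper cites Lem.~\ref{lem:eligible pairs} for the number of distinct eligible pairs. It then asserts that \buildheap inserts one heap node per pair with positive gain. You instead count executions of the loop body of Algo.~\ref{alg:buildheap}, obtaining at most $\sum_{v}\deg_G(v)^2 \le 2m\maxd$ insertion events. This is the same double sum that underlies Lem.~\ref{lem:eligible pairs}, but you apply it to insertions rather than to distinct pairs.

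That choice pays off. As written, the pseudocode reaches a pair $(u,w)$ once for every common neighbor $v$ and calls \insertkey each time. So the paper's ``one node per pair'' is really an unstated deduplication assumption, whereas your count bounds $|H|$ whether or not duplicates are kept. The paper's version is shorter. Under deduplication it also yields the sharper $O(\min(m\maxd,\binom{n}{2}))$ inherited from Lem.~\ref{lem:eligible pairs}, which your raw iteration count does not give.

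Your third step's case split is redundant, since your second step already covers both cases, but it is harmless.
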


\begin{proof}
	From Lem.~\ref{lem:eligible pairs}, the number of pairs eligible to be merged is $O(m\maxd)$.
	Note that the \buildheap procedure inserts one node $h$ into the heap $H$ for each such pair $(u,v)$ whose $cg(u,v)>0$.  Thus, the total number of nodes inserted by the \buildheap procedure is $|H|=O(m\maxd)$. 
\end{proof}

\begin{lem} \label{lem:buildheap-complexity}
	The \buildheap procedure runs in $O(m\maxd)$ time.
\end{lem}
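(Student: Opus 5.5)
The plan is to bound the work done by the three nested loops of \buildheap (Algo.~\ref{alg:buildheap}), together with the calls to \nic and \computecg, and then add the cost of the heap insertions.

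\textbf{Step 1: the \nic calls.} For a fixed $u$, \nic (Algo.~\ref{alg:nic}) performs two passes over $v \in N_G(u)$ and $w \in N_G(v)$, doing $O(1)$ work per visited pair (hash-map initialization or increment). Its cost is therefore $O\big(\sum_{v \in N_G(u)} \deg_G(v)\big)$. Summing over all $u$ gives $\sum_{u}\sum_{v\in N_G(u)}\deg_G(v) = \sum_v \deg_G(v)^2 \le \maxd \sum_v \deg_G(v) = 2m\maxd$. This is the same counting already used in Lem.~\ref{lem:eligible pairs}.

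\textbf{Step 2: the main triple loop.} The body of the innermost loop runs once per path $u$--$v$--$w$. By the same identity it executes $O(m\maxd)$ times in total. In each iteration \computecg (Algo.~\ref{alg:computecg}) is invoked. This call is $O(1)$ for the following reasons:
\begin{itemize}
\item $\ic(u,w)$ has been precomputed by \nic.
\item Degrees are stored.
\item The test $v\in N_{G_s}(u)$ is an $O(1)$ hash lookup.
\item The parent-set check $|P_{G_s}(w)|=0$ is $O(1)$.
\end{itemize}

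\textbf{Main obstacle: the \iu branch of \computecg.} This branch computes $W = N_{G_s}(w)-N_{G_s}(w')$, which naively costs $O(\maxd)$ per call and would give $O(m\maxd^2)$ overall. I would avoid this by noting that only $|W|=0$ is tested, and $|W| = \deg(w) - \ic(u,v)$ for $w\in\{u,v\}$. The emptiness check can thus be answered in $O(1)$ from the stored degree and the precomputed $\ic$, so each \computecg call stays constant time.

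\textbf{Step 3: heap insertions.} Each pair with positive gain causes one \insertkey, and by Lem.~\ref{lem:initial-heap-size} there are $O(m\maxd)$ of them. To avoid an extra logarithmic factor, I would collect all (key, value) pairs into an array first. A single bottom-up heapify over that array then costs linear time, $O(|H|)=O(m\maxd)$. This is equivalent to the insertions as far as the resulting heap is concerned.

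Should one insist on sequential \insertkey calls, the bound becomes $O(m\maxd\log(m\maxd))$. I would remark on this and adopt the heapify implementation. Summing the three steps yields $O(m\maxd)$.
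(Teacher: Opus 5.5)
Your argument is correct, and its backbone is the paper's: \nic on $u$ costs $O(\deg_G(u)\maxd)$, which sums to $O(m\maxd)$; each \computecg call is $O(1)$ given the precomputed $\ic$ values; and the number of calls is bounded by the 2-hop counting of Lem.~\ref{lem:eligible pairs}. You go beyond the paper in two places that its proof passes over in silence. First, for $\fp=E$, \computecg forms the set difference $W$, and the paper only asserts $O(1)$ time ``using these $\ic$ values''. Your identity $|W|=\deg(w)-\ic(u,v)$ is exactly what justifies that claim. Second, the paper's proof never charges the \insertkey calls, although its own Lem.~\ref{lem:complexity of updateheap} charges $O(\log|H|)=O(\log n)$ per insertion. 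Sequential insertions into a binary heap, which is how Algo.~\ref{alg:buildheap} is written, therefore only give $O(m\maxd\log n)$. Your bottom-up heapify is what makes the stated $O(m\maxd)$ bound literally true. The price is that you prove it for a slightly modified implementation: the resulting max-heap holds the same entries but may be laid out differently, so ties among equal keys may be broken differently, which the algorithm leaves unspecified anyway. You were right to flag this explicitly. One small imprecision: Algo.~\ref{alg:buildheap} revisits a pair $(u,w)$ once per common neighbor. The \insertkey calls are therefore one per path $u$--$v$--$w$ rather than one per pair, which your Step~2 counts correctly but your Step~3 phrases loosely; the $O(m\maxd)$ count is unaffected.
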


\begin{proof}
	The \nic procedure, when invoked with parameters $(G,u)$, runs in $O(\deg_G(u)\maxd)$ time. The \nic procedure computes the value  $\ic(u,v)$ for each pair $(u,v)$ such that $v \in N_G^2(u)$, where $\ic(u,v)=|N_G(u) \cap N_G(v)|$. Using these $\ic$ values, the \computecg procedure runs in $O(1)$ time.  From Lem.~\ref{lem:eligible pairs}, the number of pairs eligible to be merged is $O(m\maxd)$. Since \computecg is called for each such pair, the \buildheap procedure requires $O(m\maxd)$ time.
\end{proof}

\begin{lem} \label{lem:number-of-super-nodes}
	Let $S$ be the set of super-nodes in the summary graph $G_s$. Then, 
	\begin{align}
		|S| \le \begin{cases}
			m & \text{ if } \fp \in \{E,U\}\\
			n-1 & \text{ if } \fp= I
		\end{cases}
	\end{align}
\end{lem}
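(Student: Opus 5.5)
We bound the number of supernodes by a charging argument: each merge creates exactly one supernode, so $|S|$ equals the number of \merge operations the algorithm executes. It therefore suffices to exhibit, for each variant, a nonnegative potential that starts at the claimed bound and drops by at least one with every merge.

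\emph{Case $\fp=I$.} In the \merge procedure for $\fp=I$, both parents $u^*$ and $v^*$ lose all their edges ($N_{G_s}(u^*)\gets\emptyset$, $N_{G_s}(v^*)\gets\emptyset$). They can never again appear in a pair of the heap: every pair involving them is removed via the \deckey$(\cdot,\infty)$ steps of \updateheap, and an isolated node has no 2-hop neighbors. So each merge consumes two active nodes and produces one, and the number of active (mergeable) nodes of $G_s$ drops by exactly one per merge. This count starts at $n$, and at least one active node is needed for each pair, so at most $n-1$ merges can occur. Hence $|S|\le n-1$. Equivalently, the parent relation forms a forest of binary trees whose leaves are the original nodes, so the number of internal nodes is at most $n-1$.

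\emph{Case $\fp\in\{E,U\}$.} Here parents may survive, so we instead charge each merge to edges, using the current edge count $|E'_s|$ as the potential. A merge is executed only if $cg(u^*,v^*)>0$, and by \computecg this requires $2\ic(u^*,v^*)-3+(\text{bonus})>0$, where the bonus is at most $2$. Thus $\ic(u^*,v^*)\ge 1$.

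\begin{itemize}
\item In the case $\ic\ge 2$, the common neighbors $w$ contribute $2\ic$ edges $(u^*,w),(v^*,w)$, which are replaced by $\ic$ edges $(s,w)$. The net loss is $\ic\ge 1$ edge, since \iu and \uni leave the exclusive edges untouched in count. For \uni, exclusive edges are moved onto $s$ but are not duplicated.
\item In the case $\ic=1$, a positive gain forces the bonus to equal $2$. Because of how \computecg assigns bonuses, this happens only when $N(u^*)=N(v^*)$ for $E$, or when $(u^*,v^*)\notin E$ and both parents are original nodes for $U$. In these situations the common edge pair still collapses to a single edge, so again $|E'_s|$ decreases by at least $1$.
\end{itemize}

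Since $|E'_s|$ starts at $m$, stays nonnegative, and drops by at least one per merge, the number of merges is at most $m$, giving $|S|\le m$.

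\emph{Main obstacle.} The delicate step is the $\fp=U$ case: we must verify that moving the exclusive edges onto $s$ never \emph{increases} the edge count. Concretely, an edge $(u^*,w)$ with $w\in N(u^*)\setminus N(v^*)$ is deleted and re-added as $(s,w)$, while the retained edge $(u^*,v^*)$ stays unchanged. We would check this directly against the \merge pseudocode. In particular, the line $N_{G_s}(u^*)\gets N_{G_s}(u^*)-N_{G_s}(s)$ removes every transferred edge from $u^*$, and the update $N_{G_s}(w)\gets N_{G_s}(w)-\{u^*,v^*\}\cup\{s\}$ inserts exactly one edge $(s,w)$ per neighbor $w$. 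Together these give a net change of $-\ic$ edges.
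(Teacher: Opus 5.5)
Your proof is correct and follows the same argument as the paper. For $\fp\in\{E,U\}$, the edge count starts at $m$ and drops by at least one per merge; for $\fp=I$, the retired parents make the merge history a binary forest on $n$ leaves, so there are at most $n-1$ supernodes. The one difference is that the paper reads the edge decrease directly off Eq.~\eqref{eq:cg}: at most two nodes are removed and one is added, so $cg>0$ forces $|E'_s|-|E_s^*|\ge 1$. This makes your case split on $\ic$ unnecessary, since the net change is $-\ic$ in every case anyway.
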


\begin{proof}
	Consider the case $\fp\in\{E,U\}$. A given pair $(u,v)$ is merged to form exactly one super-node if $cg(u,v)>0$. For this to happen, referring to Eq.~\eqref{eq:cg}, it follows that the number of edges in $G_s$ must reduce by at least $1$ after the merging of $(u,v)$. 
	Given that $G_s$ is initialized to $G$ which has $m$ edges,  the number of super-nodes that are added into $G_s$ is at most $m-|E_s| \le m$.
	
	Now suppose $\fp = I$. In this scenario, each super-node has exactly 2 parents, and more importantly, each node has at most one child. In view of this fact, the evolution of super-nodes during the course of the \cn algorithm, can be modeled as a binary tree where the nodes in the input graph $G$ form the leaf nodes, and the super-nodes are represented by the internal nodes. Since a binary tree with $n$ leaf nodes has at most $n-1$ internal nodes, therefore, the number of super-nodes produced by the \cn algorithm is at most $n-1$.
\end{proof}

\begin{lem} \label{lem:size of neighborhood}
	At any stage of the \cn algorithm, the size of the 1-hop neighborhood of any node $u \in G_s$ is bounded as follows:
	\begin{align}
		|N_{G_s}(u)| &\le
		\begin{cases}
			\maxd & \text{if } \fp \in\{ I,E\} \\
			\maxd(1+\eps) & \text{if } \fp=U
		\end{cases}
	\end{align}
\end{lem}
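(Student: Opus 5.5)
The plan is to prove the bound by tying the summary-graph neighborhood of any node $u \in G_s$ to the reconstructed neighborhoods of its ancestors, and then to invoke the per-node guarantees of each variant (Theorem~\ref{thm: correctness of cn appendix} together with Lemma~\ref{lem:issafemerge}). The key observation comes from the \decomp algorithm. If $u^* \in G_s$ and $w^* \in N_{G_s}(u^*)$, then for every ancestor $a \in \snmap_{G_s}(u^*)$ and every $b \in \snmap_{G_s}(w^*)$ the edge $(a,b)$ lies in $E_r$. Distinct summary neighbors of $u^*$ have ancestor sets that are nonempty. I would argue they are also pairwise disjoint, at least for $\fp=I$, where each node has at most one child (as in the proof of Lemma~\ref{lem:number-of-super-nodes}). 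Given disjointness, fixing any single ancestor $a$ of $u^*$ gives
\begin{align*}
|N_{G_s}(u^*)| \le \sum_{w^* \in N_{G_s}(u^*)} |\snmap_{G_s}(w^*)| \le |N_{G_r}(a)|.
\end{align*}

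I would then bound $|N_{G_r}(a)|$ by cases on $\fp$.
\begin{itemize}
\item For $\fp=I$, Theorem~\ref{thm: correctness of cn appendix} gives $E_r \subseteq E$, so $|N_{G_r}(a)| \le \deg_G(a) \le \maxd$.
\item For $\fp=E$, $E_r=E$, so again $|N_{G_r}(a)| = \deg_G(a) \le \maxd$.
\item For $\fp=U$, the neighborhood loss constraint enforced by \issafemerge at every merge gives $|N_{G_r}(a)| \le (1+\eps_a)\deg_G(a) \le (1+\eps)\maxd$.
\end{itemize}
These bounds hold at every intermediate stage, not only at termination. The reason is that \issafemerge is checked before each merge, and the argument of Theorem~\ref{thm: correctness of cn appendix} applies verbatim to the interim summary graph viewed as a summary.

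For $\fp\in\{E,U\}$ the disjointness step is less clean, because a node $u^*$ keeps its residual neighbors after a merge and so can have both a child and a neighbor. I would handle this with a direct inductive argument on merges instead.
\begin{itemize}
\item For $\fp=E$, the merge sets $N_{G_s}(s)=N_{G_s}(u^*)\cap N_{G_s}(v^*)$ and only shrinks $N_{G_s}(u^*)$ and $N_{G_s}(v^*)$. A neighbor $w$ of $s$ has $u^*,v^*$ replaced by the single node $s$, so its degree does not increase. Every other node loses or keeps neighbors. By induction from $G_s=G$, every degree stays at most $\maxd$.
\item For $\fp=U$, the new supernode has $N_{G_s}(s)=N_{G_s}(u^*)\cup N_{G_s}(v^*)\setminus\{u^*,v^*\}$. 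Here I fall back on the reconstruction argument: each $w^*\in N_{G_s}(s)$ contributes at least one distinct original node to $N_{G_r}(a)$ for $a\in\snmap_{G_s}(s)$, and the safety check then yields the bound $(1+\eps)\maxd$.
\end{itemize}

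The main obstacle is the injectivity claim: that distinct summary neighbors of a node contribute distinct original nodes to the reconstructed neighborhood of one fixed ancestor. I would first try to show that ancestor sets of distinct nodes that are simultaneously adjacent to a common node are disjoint, because a merge removes $u^*$ and $v^*$ from the neighborhoods of the new supernode's neighbors. If that fails in some corner case, I would pick a representative original ancestor for each $w^*$ and argue distinctness via the parent-tree structure.
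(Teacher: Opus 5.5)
Your proposal is correct and is essentially the paper's proof. For $\fp\in\{I,E\}$ the paper also argues that \merge never increases a neighborhood (your $E_r\subseteq E$ count for $\fp=I$ is a rigorous reading of its claim $N_{G_s}(u)\subseteq N_G(v)$), and for $\fp=U$ it argues as you do, by contradiction with the loss constraint via $|N_{G_r}(v)|\ge|N_{G_s}(u)|$ read off from \decomp. The paper uses the injectivity you flag without proof, and it closes along the lines you suggest: in $\fp=U$ a merged node keeps at most its partner as a neighbor and is never merged again, so each node has at most one child, every neighbor of a childless node is childless, and distinct childless nodes have disjoint ancestor sets.
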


\begin{proof}
	At the beginning of the \cn algorithm, the \greedycn procedure is invoked that initializes the graph $G_s$  to $G$. At this stage, the neighborhood of any node $u\in G_s$ ia bounded as follows: $|N_{G_s}(u)| \le |N_{G}(u)| \le \maxd$. Next consider any subsequent stage of the algorithm.
	
	Consider any node $u \in G_s$ which could either be a simple node or a super-node. Firstly, suppose $u$ is a simple node. Referring to the \merge procedure (Algorithm~\ref{alg:merge}), one finds that during the course of the algorithm, the size of the neighborhood set $N_{G_s}(u)$ either shrinks or remains unchanged. If a new neighbor is added, simultaneously, one or more existing neighbors are removed. 
	Thus, for any such node, $|N_{G_s}(u)| \le \maxd$. Next, suppose $u$ is a super-node and $v$ is one of its ancestors, i.e., $v \in \snmap(G_s,u)$.
	
	Consider the case $\fp \in\{ I,E\}$. From the \merge procedure, it is clear that $N_{G_s}(u) \subseteq N_{G}(v)$. Thus, it follows that 
	$|N_{G_s}(u)| \le |N_{G}(v)| \le \maxd$. Further, during the course of the algorithm, the size of the neighborhood $|N_{G_s}(u)|$ never increases. Thus, at each stage of the algorithm, $|N_{G_s}(u)|\le \maxd$. 
	
	Next, consider $\fp=U$. Suppose $|N_{G_s}(u)|> \maxd(1+\eps)$.  At this stage, suppose $G_s$ is decompressed. From the \decomp procedure, it follows that $|N_{G_r}(v)|\ge |N_{G_s}(u)|> \maxd(1+\eps) $. This violates the neighborhood loss constraint stated in Eq.~\eqref{eq:loss-constraint}. This contradicts the correctness of the \cn algorithm, established in Th.~\ref{thm: correctness of cn appendix}. 
\end{proof}

\begin{lem} \label{lem: max heap size}
	At any stage of the \cn algorithm, the maximum size of the heap $H$ is given as follows:
	\begin{align} \label{eq:max heap size}
		|H| =\begin{cases}
			O(m\maxd+n\maxd^2) & \text{ if } \fp  =I \\
			O(m\maxd^2) & \text{ if } \fp \in \{E,U\}
		\end{cases}
	\end{align}
\end{lem}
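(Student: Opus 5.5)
The plan is to bound the heap size as the initial heap size plus the total number of insertions made during the run. Decrease-Key and Increase-Key operations never add entries, and Extract-Max only removes them. By Lem.~\ref{lem:initial-heap-size}, \buildheap contributes $O(m\maxd)$ entries. After that, every new entry comes from line 2--5 of \updateheap, which inserts pairs $(s,u)$ with $s$ the freshly created super-node and $u \in N^2_{G_s}(s)$. So
\begin{align*}
|H| \le O(m\maxd) + \sum_{s \in S} |N^2_{G_s}(s)|,
\end{align*}
where $S$ is the set of super-nodes ever created. It then suffices to bound (i) $|S|$ and (ii) the 2-hop neighborhood size of a super-node at its creation time.

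For (i), invoke Lem.~\ref{lem:number-of-super-nodes}: $|S| \le n-1$ when $\fp=I$, and $|S| \le m$ when $\fp \in \{E,U\}$. For (ii), use Lem.~\ref{lem:size of neighborhood}. Every node of $G_s$ has at most $\maxd$ neighbors for $\fp \in \{I,E\}$, and at most $\maxd(1+\eps)$ for $\fp=U$. Therefore
\begin{align*}
|N^2_{G_s}(s)| \le \sum_{w \in N_{G_s}(s)} |N_{G_s}(w)| \le \maxd^2 \quad (\text{resp. } \maxd^2(1+\eps)^2).
\end{align*}
Since $0 \le \eps \le 1$, this is $O(\maxd^2)$ in all cases.

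Combining the two bounds:
\begin{itemize}
\item For $\fp=I$: $|H| = O(m\maxd + (n-1)\maxd^2) = O(m\maxd + n\maxd^2)$.
\item For $\fp \in \{E,U\}$: $|H| = O(m\maxd + m\maxd^2) = O(m\maxd^2)$.
\end{itemize}
This matches Eq.~\eqref{eq:max heap size}.

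The main point needing care is step (ii). Lem.~\ref{lem:size of neighborhood} bounds $N_{G_s}$ at every stage, but the 2-hop count must be taken in the graph at the moment \updateheap runs. That is after $N_{G_s}(s)$ has been set but before the neighbors' adjacency lists are rewritten. I would argue that at that moment each neighbor $w$ still has at most $\maxd$ (resp. $\maxd(1+\eps)$) neighbors, because rewriting only swaps $u^*,v^*$ for $s$ and never increases the degree.

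A second subtlety is that pairs already in the heap are only modified by key updates, not duplicated. Here I would read the \deckey$(\cdot,\infty)$ calls as effectively deleting those entries, so they cannot cause the size to grow.
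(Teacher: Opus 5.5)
Your proposal is correct and follows essentially the same argument as the paper. You take the $O(m\maxd)$ initial heap from Lem.~\ref{lem:initial-heap-size}, add $O(\maxd^2)$ insertions per super-node (via Lem.~\ref{lem:size of neighborhood} and $\eps \le 1$), and use the $O(n)$ or $O(m)$ super-node count from Lem.~\ref{lem:number-of-super-nodes}. Your extra remarks, on when the 2-hop neighborhood is evaluated and on key updates never adding entries, are refinements the paper leaves implicit; since you count total insertions, you do not even need to read the \deckey calls with $\infty$ as deletions.
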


\begin{proof}
	From Lem.~\ref{lem:initial-heap-size}, it follows that at the end of the \buildheap procedure, the size of the heap $H$ is $O(m\maxd)$. Subsequently, suppose the algorithm produces $k$ super-nodes. From Lem.~\ref{lem:number-of-super-nodes}, it follows that if $\fp = I$, $k=O(n)$; and if $\fp\in \{E,u\}$, then $k=O(m)$.
	For each such super-node $s$, and for each $u \in N^2_{G_s}(s)$, the \updateheap procedure inserts a node into the heap $H$. From Lem.~\ref{lem:size of neighborhood}, and using the fact $\eps \le 1$, it follows that $|N^2_{G_s}(s)|=O(\maxd^2)$.   Combining these facts, lead to the above result.
\end{proof}

\begin{lem} \label{lem:iterations of greedycn}
	The number of iterations of the while loop stated in line~\ref{line:begin-while} of the \greedycn procedure is given as follows:
	\begin{align*}
		O(|H|) &  \text{ for } \fp \in \{I,U\}\\
		O(m) & \text{ for } \fp =E
	\end{align*}
	
	where $|H|$ is given in Eq.~\eqref{eq:max heap size}.
\end{lem}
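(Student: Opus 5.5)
The count follows from charging each iteration of the while loop (lines~\ref{line:begin-while}--\ref{line:end-while}) to an \exmax operation on $H$. Each iteration performs exactly one \exmax. It then either terminates (when $key(h)\le 0$), rejects the pair, or performs one \merge. So the number of iterations is at most the total number of nodes ever present in $H$, that is, the number inserted by \buildheap plus the number inserted by \updateheap. No node re-enters the heap after extraction, since \deckey and \inckey only modify keys of nodes already present. Setting keys to $-\infty$ or lowering them does not create new nodes either.

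\textbf{Case $\fp\in\{I,U\}$.} I would bound the total number of insertions over the whole run by the quantity in Eq.~\eqref{eq:max heap size}. By Lem.~\ref{lem:initial-heap-size}, \buildheap inserts $O(m\maxd)$ nodes. Each \merge calls \updateheap once, and that call inserts at most $|N^2_{G_s}(s)|=O(\maxd^2)$ nodes for the new super-node $s$, using Lem.~\ref{lem:size of neighborhood} and $\eps\le 1$. The number of merges is at most $|S|$, which Lem.~\ref{lem:number-of-super-nodes} bounds by $n-1$ for $I$ and by $m$ for $U$. Hence the cumulative number of insertions is $O(m\maxd+n\maxd^2)$ for $I$ and $O(m\maxd^2)$ for $U$. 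This is exactly the $|H|$ of Eq.~\eqref{eq:max heap size}, read as the total number of insertions, which is how the proof of Lem.~\ref{lem: max heap size} actually counts. Rejected pairs, whose \issafemerge check fails, also cost one extraction each, and that extraction is covered by the same insertion count. This is why the bound is $O(|H|)$ and cannot be improved to the number of merges.

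\textbf{Case $\fp=E$.} Here there are no rejections, because the condition in line 12 of \greedycn is always true. So every iteration with positive key performs a \merge, and at most one further iteration terminates the loop via line~\ref{lin:algo terminates} or an empty heap. Pairs whose key was lowered by \deckey to a nonpositive value trigger termination when extracted, so they do not add iterations. The number of iterations is therefore at most the number of merges plus one. By Lem.~\ref{lem:number-of-super-nodes}, that is at most $m+1=O(m)$. The underlying reason is that each merge with $cg>0$ removes at least one edge, as in the proof of that lemma.

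\textbf{Main obstacle.} The delicate point is justifying that every iteration corresponds to a distinct heap node, so that a pair cannot be extracted, rejected, and then reappear. I would argue this by inspecting \updateheap: its only insertions are of pairs $(s,u)$ with $s$ the freshly created super-node, so previously extracted pairs are never reinserted. For $\fp=E$, the remaining subtlety is that the loop halts at the first nonpositive key, which rules out extra iterations on stale entries.
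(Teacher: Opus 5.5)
Your proof is correct and takes the same route as the paper: for $\fp\in\{I,U\}$ you bound the iterations by the heap entries, and for $\fp=E$ you bound them by the number of merges via Lem.~\ref{lem:number-of-super-nodes}. Your version is slightly more careful on two points: each iteration consumes a distinct heap entry that is never reinserted, so the right quantity is the cumulative number of insertions (which is what the proof of Lem.~\ref{lem: max heap size} actually counts) rather than the maximum simultaneous heap size; and you account for the single terminating iteration in the $\fp=E$ case.
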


\begin{proof}
	Consider the case  $\fp \in \{I,U\}$.
	The iterative phase of the \greedycn procedure (starting from line~\ref{line:begin-while}) continues as long as there exists a pair $(u,v) \in H$ such that $cg(u,v)>0$. Therefore, the number of iterations of the while loop (stated in line~\ref{line:begin-while}) of the \greedycn procedure is at most the maximum size of the heap $H$ that is stated in Eq.~\eqref{eq:max heap size}.
	
	Next, consider the case $\fp=E$. In each iteration, the \greedycn procedure extracts a pair of nodes and merges them. Since each merge operation produces exactly one super-node, and the number of super-nodes is at most $m$ (Lem.~\ref{lem:number-of-super-nodes}), hence the number of iterations is at most $O(m)$.
\end{proof}

\begin{lem} \label{lem:complexity of issafemerge}
	The \issafemerge procedure and the \updatedegree procedure run in $O(n)$ time.
\end{lem}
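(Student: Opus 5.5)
The approach is to bound, one at a time, the cost of each line of \issafemerge (Algo.~\ref{alg:issafemerge}) and of \updatedegree (Algo.~\ref{alg:updatedegree}). The two procedures have the same structure: both compute the four sets $U^*,V^*,U'$ and $V'$, then scan each of them once and do $O(1)$ work per element. So the work reduces to two claims: each of the four sets has size $O(n)$, and each can be built in $O(n)$ time. For $\fp=E$, \issafemerge returns immediately and \updatedegree is never invoked, so only $\fp\in\{I,U\}$ needs analysis.

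\textbf{Size of the sets.} Every set returned by \snmap is a set of leaves, i.e., original nodes of $G$, so each of $U^*,V^*,U',V'$ is a subset of $V$ and has size at most $n$. The degree values $\deg_{G_r}(\cdot)$, $\deg_G(\cdot)$ and the thresholds $\eps_u$ are stored per node. The quantities $|U'|,|V'|,|U^*|,|V^*|$ are computed once. Each test or update in the four loops therefore costs $O(1)$, and the loops together cost $O(|U^*|+|V^*|+|U'|+|V'|)=O(n)$. The differences $U'-V^*$ and $V'-U^*$ cost $O(n)$ if $U^*$ and $V^*$ are kept as hash sets or boolean marker arrays.

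\textbf{Cost of building the sets (main step).} The cost of a single call \snmap$(G_s,s)$ is proportional to the number of nodes of the parent-DAG reachable from $s$. In the $I$ variant, the proof of Lem.~\ref{lem:number-of-super-nodes} shows that this structure is a forest, so the sub-tree below $s$ has fewer than twice as many nodes as it has leaves, giving $O(|\snmap(G_s,s)|)$ time. For $U'$ and $V'$ we take the union over all $u''\in N_{G_s}(u^*)-N_{G_s}(v^*)$. To avoid paying for overlapping traversals, we run a single DFS seeded with all of these $u''$ and share one visited marker across the whole traversal, so each DAG node is expanded at most once.

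That leaves bounding the total number of DAG nodes. There are at most $n$ leaves. By Lem.~\ref{lem:number-of-super-nodes} there are at most $n-1$ supernodes when $\fp=I$, so the bound is $O(n)$. The bound $m$ from that lemma for $\fp=U$ does not give $O(n)$ directly, and this is the step I expect to be the main obstacle. My first attempt is to argue that in \uni the two parents of a merge lose their neighborhoods and are removed unless they were adjacent (see \merge, Algo.~\ref{alg:merge}). Then only supernodes that are still present in $G_s$ can appear as seeds, and the reachable portion of the DAG is a forest below the surviving roots, with at most $n-1$ internal nodes. If that argument fails, the fallback is to charge each traversal to the output leaves: every internal DAG node has two children and, once the visited marks are shared, expands into distinct leaves, which bounds the traversal by $O(n)$.

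With the sets of size $O(n)$ built in $O(n)$ time and the scans costing $O(n)$, both procedures run in $O(n)$ time.
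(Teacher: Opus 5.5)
Your decomposition matches the paper's. The paper simply asserts that a call to \snmap on a node with $k$ ancestors costs $O(k)\le O(n)$, that $U^*,V^*,U',V'$ have size at most $n$ and are built in $O(n)$ time, and that the loops are $O(n)$. Your \inter case is complete, and it is more careful than the paper about the unions defining $U'$ and $V'$. The gap is the \uni case, and neither of your two arguments closes it.

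Your first attempt uses the wrong invariant. \merge deletes $w$ only if $N_{G_s}(w)=\emptyset$ and $P_{G_s}(w)=\emptyset$, so a supernode parent is never removed, and being ``still present in $G_s$'' does not mean being unmerged. You also leave the adjacent case open. Your fallback fails outside a forest. Suppose a node may be a parent of several supernodes, e.g.\ $s_1=sn(a,b)$, $s_2=sn(a,s_1),\dots,s_k=sn(a,s_{k-1})$. A traversal from $s_k$ with shared marks still expands $k$ internal nodes but outputs only $\{a,b\}$. So internal nodes cannot be charged to distinct leaves, and the cost is bounded only by the number of supernodes. For \uni, Lem.~\ref{lem:number-of-super-nodes} bounds that by $m$, not $n$.

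The missing fact is that in \uni, as in \inter, every node is merged at most once. After merging $(u^*,v^*)$, \merge sets $N_{G_s}(s)=N_{G_s}(u^*)\cup N_{G_s}(v^*)-\{u^*,v^*\}$. Hence $N_{G_s}(u^*)-N_{G_s}(s)\subseteq\{v^*\}$, symmetrically for $v^*$, and every other former neighbor has $u^*,v^*$ replaced by $s$. So $u^*$ and $v^*$ end up isolated or as an isolated edge. They have no 2-hop neighbors, so no later pair containing them is eligible. Their pending pairs are invalidated by the \deckey$(\cdot,\infty)$ calls in \updateheap. The parent structure is therefore a binary forest with at most $n-1$ supernodes, and your \inter argument carries over verbatim.

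This also makes your shared visited marker unnecessary. The seeds in $N_{G_s}(u^*)$ are unmerged, i.e.\ distinct roots with pairwise disjoint ancestor trees. So the separate \snmap calls of the procedure as written already cost $O(n)$ in total. That is the fact the paper's one-line claim tacitly relies on.
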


\begin{proof}
	Firstly, we show that the \snmap procedure runs in $O(n)$ time. Referring to the \snmap procedure, one finds that if $s$ is a super-node that has $k$ ancestors, then $\snmap(G_s,s$ takes $O(k)$ time. Since $k \le n$, hence, the \snmap procedure runs in $O(n)$ time.
	
	Next, referring to the \issafemerge procedure and the \updatedegree procedure, it is clear that the size of each of the sets $U^*,V^*,U'$ and $V'$ is at most $n$ and their computation takes $O(n)$ time. The remaining steps of these procedures run in $O(n)$ time. This leads to the above result.
\end{proof}

\begin{lem} \label{lem:complexity of updateheap}
	The \updateheap procedure runs in $O(\maxd^3\log n)$ time.
\end{lem}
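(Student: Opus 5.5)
We bound the running time of \updateheap by going through its blocks in order and charging each to a number of heap operations. Each heap operation costs $O(\log |H|)$, and we absorb this into $O(\log n)$. By Lem.~\ref{lem: max heap size}, $|H|$ is at most $O(m\maxd^2)$ or $O(m\maxd+n\maxd^2)$. Both are polynomial in $n$, since $m\le n^2$ and $\maxd\le n$, so $\log|H| = O(\log n)$. To locate heap entries for \deckey and \inckey in $O(1)$, we assume, as is standard, that the heap keeps a hash-map from each pair to its position.

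\textbf{First block.} We call $\nic(G_s,s)$ and then insert the pairs $(s,u)$ with $u\in N^2_{G_s}(s)$. By Lem.~\ref{lem:size of neighborhood}, every neighborhood in $G_s$ has size $O(\maxd)$ (using $\eps\le 1$ for \uni). Hence \nic, which iterates over $v\in N_{G_s}(s)$ and $w\in N_{G_s}(v)$, runs in $O(\maxd^2)$. There are $O(\maxd^2)$ two-hop neighbors of $s$. Each \computecg call is $O(1)$ given the $\ic$ values and hash-set membership tests, except the $\fp=E$ branch, which computes $N_{G_s}(w)-N_{G_s}(w')$ in $O(\maxd)$. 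Each insertion costs $O(\log n)$. So this block costs $O(\maxd^3+\maxd^2\log n)$, which is within $O(\maxd^3\log n)$.

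\textbf{Second block.} We handle the \deckey loops over pairs $(u,v^*)$ and $(u^*,v)$ that are still in $H$. Pairs involving $v^*$ only arise with $u\in N^2_{G_s}(v^*)$, and there are $O(\maxd^2)$ of those. The same holds for $u^*$. Each update costs $O(\log n)$. For $\fp=E$, each pair also computes a triple intersection $|N_{G_s}(u^*)\cap N_{G_s}(v^*)\cap N_{G_s}(u)|$. Scanning the smallest set with hash lookups does this in $O(\maxd)$. This gives $O(\maxd^3+\maxd^2\log n)$ again.

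\textbf{Third block, the main obstacle.} These are the loops over pairs $(u,v)\in H$ where $u$ or $v$ lies in $N_{G_s}(u^*)\cup N_{G_s}(v^*)$, or in the corresponding differences and intersections. We must not scan the whole heap. Instead we enumerate the candidates:
\begin{itemize}
\item choose $u$ in $N_{G_s}(u^*)\cup N_{G_s}(v^*)$, which has $O(\maxd)$ elements;
\item choose $v\in N^2_{G_s}(u)$, which has $O(\maxd^2)$ elements, since only such pairs can be in $H$;
\item look up whether $(u,v)\in H$ and apply the key update.
\end{itemize}
This gives $O(\maxd\cdot\maxd^2)=O(\maxd^3)$ candidate pairs, each costing $O(\log n)$, for a total of $O(\maxd^3\log n)$.

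The delicate point is justifying that every heap entry touching $u$ has its partner in $N^2_{G_s}(u)$ at the current stage. Pairs inserted earlier whose endpoints have since drifted apart would otherwise escape this enumeration. To handle this, we would either argue that such stale entries have key $\le 0$ or were already invalidated, or store per-node incidence lists of heap pairs. With the incidence lists the bound becomes $O(\maxd)\cdot O(\maxd^2)$ entries per node, using the fact that the pairs created for any node number $O(\maxd^2)$ by Lem.~\ref{lem:size of neighborhood}.

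Summing the three blocks gives the claimed $O(\maxd^3\log n)$.
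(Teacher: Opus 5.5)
Your argument is essentially the paper's. Both charge $O(\maxd^2)$ work to \nic and to computing and inserting the pairs $(s,u)$. Both bound the subsequent \deckey/\inckey loops by $O(\maxd^3)$ heap entries. Both price each heap operation at $O(\log|H|)=O(\log n)$, using Lem.~\ref{lem: max heap size} together with $m=O(n^2)$ and $\maxd=O(n)$.

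The paper simply asserts the $O(\maxd^3)$ scan count and never raises the stale-entry issue you point out. If you close it with incidence lists, Lem.~\ref{lem:size of neighborhood} alone does not give your claimed $O(\maxd^2)$ entries per node. That lemma bounds a neighbourhood at a single stage, whereas entries incident to a node accumulate in $H$ over the run: a \deckey by $\infty$ leaves the entry in the heap, and each new supernode appearing in the node's 2-hop neighbourhood may add another. A separate cumulative argument is needed rather than a citation of that lemma.
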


\begin{proof}
	The \updateheap procedure begins by calling the \nic procedure with the parameters $(G_s,s)$. From Lem.~\ref{lem:size of neighborhood}, it follows that $|N^2_{G_s}(s)|=O(\maxd^2)$. Thus, 
	the number of pairs $(s,u)$ is $O(\maxd^2)$. Thus, the \nic procedure runs in $O(\maxd^2)$ time. for each pair of nodes $(s,u)$ where $u \in N_{G_s}^2$, the \computecg procedure runs in $O(1)$ time using the \ic values computed by the \nic procedure. 
	The total time spent in computation of $cg(s,u)$ for each pair $(s,u)$ is thus, $O(\maxd^2)$. For each pair $(s,u)$ if $cg(s,u)>0$, then the pair $(s,u)$ is inserted into the heap $H$. This insertion step takes $O(\log|H|)$ for a given pair. Using Eq.~\eqref{eq:max heap size}, and the relations: $m=O(n^2)$ and $\maxd=O(n)$, it follows that $O(\log|H|)=O(\log n)$. 
	In the subsequent steps, the \updateheap procedure scans at most $O(\maxd^3)$ nodes in the heap $H$, and for each such node, it invokes a \deckey or an \inckey operation that takes $O(\log|H|)=O(\log n)$ time. 
	Additionally, there are neighborhood set operations such as union, intersection and set difference, each of which requires $O(\maxd)$ time. Summing up all these computation costs lead to the above lemma.
\end{proof}

\begin{lem} \label{lem:complexity of merge}
	The \merge procedure runs in:
	\begin{align*}
		O(n+\maxd^3\log n) &  \text{ for } \fp \in \{I,U\}\\
		O(\maxd^3\log n) & \text{ for } \fp =E
	\end{align*}
\end{lem}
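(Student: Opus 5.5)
The plan is to bound the cost of \merge (Algo.~\ref{alg:merge}) by walking through its lines in order and charging each step to a lemma already proved. The steps fall into three groups: the constant-time bookkeeping, the calls to \updatedegree and \updateheap, and the neighborhood updates. The final bound is the sum of the three costs, with the \updatedegree term present only when $\fp \in \{I,U\}$.

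First, the constant-time part. Creating the supernode $s=sn(u^*,v^*)$ and setting $P_{G_s}(s)$ both take $O(1)$ time. Computing $N_{G_s}(s)$ as an intersection (for $\fp\in\{I,E\}$) or a union (for $\fp=U$) uses the hash-set representation assumed at the start of Sec.~\ref{sec:complexity}. Its cost is therefore linear in $|N_{G_s}(u^*)|+|N_{G_s}(v^*)|$. By Lem.~\ref{lem:size of neighborhood} this is $O(\maxd)$ for $\fp\in\{I,E\}$ and $O(\maxd(1+\eps))=O(\maxd)$ for $\fp=U$, using $\eps\le 1$.

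Second, the procedure calls. \updatedegree is invoked only when $\fp\in\{I,U\}$, and by Lem.~\ref{lem:complexity of issafemerge} it costs $O(n)$. This call is exactly what produces the extra $n$ term in the $\{I,U\}$ case. \updateheap is invoked for every variant and costs $O(\maxd^3\log n)$ by Lem.~\ref{lem:complexity of updateheap}.

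Third, the neighborhood-update loops. Each loop ranges over $N_{G_s}(s)$, $N_{G_s}(u^*)-N_{G_s}(s)$, or $N_{G_s}(v^*)-N_{G_s}(s)$, so by Lem.~\ref{lem:size of neighborhood} each has $O(\maxd)$ iterations. Each iteration removes $u^*$ or $v^*$ from $N_{G_s}(w)$ and possibly inserts $s$, which are $O(1)$ hash-set operations. So these loops cost $O(\maxd)$ in total. For $\fp\in\{E,U\}$, the set differences $N_{G_s}(u^*)-N_{G_s}(s)$ and $N_{G_s}(v^*)-N_{G_s}(s)$ also cost $O(\maxd)$. The final removal check for $u^*,v^*$ is $O(1)$.

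Summing the three groups gives $O(n+\maxd+\maxd^3\log n)=O(n+\maxd^3\log n)$ for $\fp\in\{I,U\}$ and $O(\maxd+\maxd^3\log n)=O(\maxd^3\log n)$ for $\fp=E$, as claimed.

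The only delicate point is that Lem.~\ref{lem:size of neighborhood} must apply to $u^*$, $v^*$ and $s$ at the moment \merge runs. In particular, in the $U$ variant the union could in principle exceed $\maxd$. I would handle this by noting that \merge is executed only after \issafemerge returns \true, so the same contradiction argument used in Lem.~\ref{lem:size of neighborhood} (via Th.~\ref{thm: correctness of cn appendix}) bounds $|N_{G_s}(s)|$ by $\maxd(1+\eps)$. I would also note that \updateheap is called before the neighborhood updates, but this ordering only affects correctness, not cost.
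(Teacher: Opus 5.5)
Your proposal is correct and takes the same route as the paper's proof: the cost of MergeNodes is charged to UpdateCG, at $O({d^*}^3\log n)$ for every variant, plus UpdateDegree, at $O(n)$ and invoked only for the $I$ and $U$ variants. The only difference is that you explicitly bound the remaining set operations and neighborhood-update loops by $O(d^*)$ via Lem.~\ref{lem:size of neighborhood}, which the paper silently absorbs into the dominant UpdateCG term.
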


\begin{proof}
	The most expensive step of the \merge procedure that is common to all the three variants of the \cn algorithm is the \updateheap procedure. From Lem.~\ref{lem:complexity of updateheap}, this step runs in $O(\maxd^3\log n)$ time. For $\fp \in \{I,U\}$, there is an additional expensive step in the form of the \updatedegree procedure. From Lem.~\ref{lem:complexity of issafemerge}, this step runs in $O(n)$ time. For $\fp \in E$, this step is not required. This leads to the stated result.
\end{proof}

Using the above results, the following theorem, finally, establishes the time complexity of the \cn summarization scheme.

\begin{thm} \label{thm:complexity of cn}
	The \greedycn procedure runs in:
	\begin{align*}
		O(m\maxd^3\log n) & \text{ for } \fp =E \\
		O(mn\maxd+n^2\maxd^2 + (m\maxd + n^2+n\maxd^3) \log n) & \text{ for } \fp =I \\
		O(mn\maxd^2+m\maxd^3\log n) & \text{ for } \fp =U
	\end{align*}
\end{thm}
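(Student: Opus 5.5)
The running time splits into two phases, the initial \buildheap call and the while loop of \greedycn (line~\ref{line:begin-while}). The total is
\[
T_{\buildheap} + (\#\text{iterations})\cdot\big(T_{\exmax} + T_{\issafemerge} + T_{\merge}\big).
\]
By Lem.~\ref{lem:buildheap-complexity}, the first term is $O(m\maxd)$ for every variant. Each \exmax costs $O(\log|H|)=O(\log n)$, as argued in Lem.~\ref{lem:complexity of updateheap}. By Lem.~\ref{lem:complexity of issafemerge}, \issafemerge costs $O(n)$ for $\fp\in\{I,U\}$ and $O(1)$ for $\fp=E$. The cost of \merge is given by Lem.~\ref{lem:complexity of merge}.

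The second point I would make explicit is that \merge is executed at most once per super-node created. By Lem.~\ref{lem:number-of-super-nodes}, this is at most $m$ times for $\fp\in\{E,U\}$ and at most $n-1$ times for $\fp=I$. \exmax and \issafemerge, by contrast, may run in every iteration of the loop, including the ones whose merge is rejected. I would therefore bound the loop as
\[
(\#\text{iterations})\cdot(\log n + T_{\issafemerge}) + (\#\text{merges})\cdot T_{\merge},
\]
taking the number of iterations from Lem.~\ref{lem:iterations of greedycn} and the number of merges from Lem.~\ref{lem:number-of-super-nodes}.

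The three variants then work out as follows.

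\textbf{Case $\fp=E$.} There are $O(m)$ iterations, each running a merge of cost $O(\maxd^3\log n)$. Adding the $O(m\maxd)$ build cost, which is absorbed, gives $O(m\maxd^3\log n)$.

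\textbf{Case $\fp=I$.} The number of iterations is $|H|=O(m\maxd+n\maxd^2)$ by Lem.~\ref{lem: max heap size}. Each iteration costs $O(n)$ for \issafemerge plus $O(\log n)$ for \exmax, contributing $O(mn\maxd+n^2\maxd^2+(m\maxd+n\maxd^2)\log n)$. The at most $n-1$ merges contribute $O(n(n+\maxd^3\log n))=O(n^2+n\maxd^3\log n)$. The standalone $n^2$ term I would fold into the $\log n$ group to match the stated form. Summing gives the stated bound, with the $n\maxd^2\log n$ term absorbed by $n\maxd^3\log n$.

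\textbf{Case $\fp=U$.} There are $|H|=O(m\maxd^2)$ iterations, each paying $O(n+\log n)$, which gives $O(mn\maxd^2)$ plus a $\log$ term of $O(m\maxd^2\log n)$. The at most $m$ merges cost $O(m(n+\maxd^3\log n))=O(mn+m\maxd^3\log n)$. Both $mn$ and $m\maxd^2\log n$ are dominated by the other terms, which yields $O(mn\maxd^2+m\maxd^3\log n)$.

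The main obstacle is the bookkeeping of the lossy variants. I must separate rejected iterations, which pay only for \exmax and \issafemerge, from successful merges, which pay the heavier $O(\maxd^3\log n)$ for \updateheap. Charging \merge to every heap extraction would give a weaker bound, $m\maxd^5\log n$ for \uni. So the crucial step is to invoke Lem.~\ref{lem:number-of-super-nodes} to cap the number of executed merges independently of the iteration count, while letting Lem.~\ref{lem:iterations of greedycn} govern only the cheap per-iteration work.
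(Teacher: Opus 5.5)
Your proposal is correct and takes essentially the same approach as the paper. Both arguments charge \exmax and the $O(n)$ \issafemerge check to every loop iteration (bounded via the heap-size lemma), charge the $O(n+\maxd^3\log n)$ cost of \merge only to the at most $n-1$ (for $\fp=I$) or $m$ (for $\fp\in\{E,U\}$) super-nodes created, and add the $O(m\maxd)$ \buildheap cost. Your folding of the standalone $n^2$ term into the $\log n$ group is exactly how the paper arrives at its stated bound for $\fp=I$.
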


\begin{proof}
	The \greedycn procedure invokes the \buildheap procedure that takes $O(m\maxd)$ time (Lem.~\ref{lem:buildheap-complexity}). Next, the procedure invokes the iterative stage from line~\ref{line:begin-while}.
	
	Next consider the case $\fp=E$. Following Lem.~\ref{lem:iterations of greedycn}, the number of iterations is $O(m)$. In each iteration, the procedure extracts a pair $(u,v) \in H$. This takes $O(\log n)$ time.  Subsequently, the pair is merged by calling the \merge procedure. Following Lem.~\ref{lem:number-of-super-nodes},  the number of super-nodes produced is at most $O(m)$. From Lem.~\ref{lem:complexity of merge}, it follows that the total computation cost of \merge procedure across all the iterations, is $O(m\maxd+^3\log n)$. Finally, summing up all the above computation costs leads to $O(m\maxd+^3\log n)$.
	
	Consider the case $\fp = I$. From Lem.~\ref{lem:iterations of greedycn}, the number of iterations is $O(m\maxd+n\maxd^2)$. In each iteration, the procedure extracts a pair $(u,v) \in H$. This takes $O(\log n)$ time. Further, if $cg(u,v)>0$, then it is checked whether the merging of the pair $(u,v)$ is safe, using the \issafemerge procedure. Since the \issafemerge procedure runs in $O(n)$ time (Lem.~\ref{lem:complexity of issafemerge}), the total computation cost of \issafemerge across all the iterations, is $O(mn\maxd+n^2\maxd^2)$. If it is safe to merge $(u,v)$, the pair is merged by calling the \merge procedure. Following Lem.~\ref{lem:number-of-super-nodes},  the number of super-nodes produced is at most $O(n)$. From Lem.~\ref{lem:complexity of merge}, it follows that the total computation cost of \merge procedure across all the iterations, is 
	$O(n^2+n\maxd^3\log n)$. Finally, summing up all the above computation costs leads to $O(mn\maxd+n^2\maxd^2 + (m\maxd + n^2+n\maxd^3) \log n
	)$.
	
	Next, consider the case $\fp = U$. From Lem.~\ref{lem:iterations of greedycn}, the number of iterations is $O(m\maxd^2)$. In each iteration, the procedure extracts a pair $(u,v) \in H$. This takes $O(\log n)$ time. Further, if $cg(u,v)>0$, then it is checked whether the merging of the pair $(u,v)$ is safe, using the \issafemerge procedure. Since the \issafemerge procedure runs in $O(n)$ time (Lem.~\ref{lem:complexity of issafemerge}), the total computation cost of \issafemerge across all the iterations, is $O(mn\maxd^2)$. If it is safe to merge $(u,v)$, the pair is merged by calling the \merge procedure. Following Lem.~\ref{lem:number-of-super-nodes},  the number of super-nodes produced is at most $O(m)$. From Lem.~\ref{lem:complexity of merge}, it follows that the total computation cost of \merge procedure across all the iterations, is $O(mn+m\maxd^3\log n)$.
	Finally, summing up all the above computation costs leads to $O(mn\maxd^2+m\maxd^3\log n)$.
\end{proof}

\section{Experimental Results}
\label{sec:results}

In this section, we evaluate the summarization and the query processing capabilities of \cn. 
It is organized as follows. Sec.~\ref{sec:research questions} lists the key research questions addressed in this work. Sec.~\ref{sec:exp methodology} describes the methodology adopted for conducting the experimental evaluation.  Sec.~\ref{sec:exp compression ratio} and Sec.~\ref{sec:exp reconstruction error} present the comparison of compression ratio and reconstruction error respectively with existing baseline summarization techniques. Sec.~\ref{sec:exp effect of neighborhood loss tolerance threshold} assesses the effect of the neighborhood loss tolerance threshold. Sec.~\ref{sec:exp query processing} evaluates the query processing performance. The scalability results are discussed in Sec.~\ref{sec:exp scalability}. Finally, the key take aways of the experimental findings are summarized in Sec.~\ref{sec:exp summary}.    

\begin{table}[th]
		\begin{tabular}{l r r r r r}
			\toprule
			\multirow{2}{*}{\textbf{Graph}} & \textbf{\# Vertices} & \textbf{\# Edges} & \textbf{Avg. Degree} & \textbf{Max. Degree} & \textbf{Density} \\
			& \multicolumn{1}{c}{\bf $|V|$} & \multicolumn{1}{c}{\bf $|E|$} & \multicolumn{1}{c}{\bf $\frac{2 |E|}{|V|}$} & \multicolumn{1}{c}{\textbf{$d$}} & \multicolumn{1}{c}{\bf $\frac{2 |E|}{|V| (|V| - 1)}$} \\
			\midrule
			Youtube (YT)            & 1,134,890  & 2,987,624  &  5.27 & 28754 & 4.64E-06\\
			DBLP (DB)               &   317,080  & 1,049,866  &  6.62 &   343 & 2.09E-05\\
			Live-Journal (LJ)       &   119,685  &   400,000  &  6.68 & 15232 & 5.58E-05\\
			soc-Epinions1 (SE)      &    75,879  &   405,740  & 10.70 &  3044 & 1.41E-04\\
			Slashdot0811 (SD)       &    77,360  &   469,180  & 12.12 &  2539 & 1.57E-04\\
			Email-Enron (EE)        &    36,692  &   183,831  & 10.02 &  1383 & 2.73E-04\\
			CA-CondMat (CM)         &    23,133  &    93,439  &  8.08 &   279 & 3.49E-04\\
			Gemsec-Facebook (GF)    &    50,515  &   819,090  & 32.42 &  1469 & 6.42E-04\\
			CA-AstroPh (AP)         &    18,772  &   198,050  & 21.10 &   504 & 1.12E-03\\
			CA-HepPh (HP)           &    12,008  &   118,489  & 19.74 &   491 & 1.64E-03\\
			\bottomrule  
		\end{tabular}
	\caption{Summary of real graph datasets used in this study. The datasets are sorted based on their density.}
	\label{tab:datasets}
\end{table}

\subsection{Research Questions and Objectives}
\label{sec:research questions}

The primary goal of this work is to propose a general-purpose configurable graph summarization framework that offers high compression and good query support.
To this end, we strive to address the following pertinent research questions:
\begin{itemize}
	\item \textbf{RQ1:} How does the proposed \name framework perform against the existing summarization methods in terms of configurability and query support capabilities?
	\item \textbf{RQ2:} How does the compression achieved by the \name framework perform against the existing baseline schemes?
	\item \textbf{RQ3:} With respect to the lossy variants of the \cn framework, how does the reconstruction error compare against the existing baseline methods?
	\item \textbf{RQ4:} How does the neighborhood loss tolerance threshold \nlt impact the compression achieved by the lossy variants of the \name framework?
	\item \textbf{RQ5:} How is the performance of neighborhood queries, reachability queries and shortest path queries achieved by the \name framework?
	\item \textbf{RQ6:} Is the \name framework scalable for large graphs?
\end{itemize}

Based on the qualitative comparison of the configurable summarization properties listed in 
Table~\ref{tab:related} and the literature review in Sec.~\ref{sec:related}, it is clear that \cn offers better configurability and query support than the existing summarization methods. This addresses the research question RQ1.

The remaining research questions, namely, RQ2, RQ3, RQ4, RQ5, and RQ6, are addressed in Sec.~\ref{sec:exp compression ratio}, Sec.~\ref{sec:exp reconstruction error}, Sec.~\ref{sec:exp effect of neighborhood loss tolerance threshold}, Sec.~\ref{sec:exp query processing} and Sec.~\ref{sec:exp scalability}, respectively.

\subsection{Experimental Methodology}
\label{sec:exp methodology}

This section describes the experimental methodology used to probe the research questions.

\subsubsection{\bf Experimental Setup}

The proposed \cn framework is implemented in C++. The implementation code and the datasets are available at:
\begin{center}
\url{https://github.com/sonaelzasimon/CGS_Configurable_Graph_Summarization}
\end{center}
The experiments are evaluated
on a 2.6 GHz Intel 56-core machine equipped with 504 GB RAM running Ubuntu 24.04
Linux.

\subsubsection{\bf Datasets}
\cn is evaluated on $10$ real-world graph datasets (Table~\ref{tab:datasets})
obtained from Stanford Large Network Dataset Collection (SNAP) \cite{snapnets}. The datasets are sorted based on their density, where the density of a graph
$G(V,E)$ is defined as $\frac{2|E|}{|V|(|V|-1)}$. We selected these datasets
because many of the prior works (MoSSo \cite{ko2020incremental}, SLUGGER
\cite{lee2022slugger}, SWeG \cite{shin2019sweg}, GraphZip
\cite{rossi2018graphzip}, SSumM \cite{lee2020ssumm} and
\cite{navlakha2008graph}) chose these datasets to benchmark their summarization
schemes.
Additionally, the performance is also evaluated over synthetic datasets
that are based on two widely used graph models, namely, Barab\'asi-Albert (BA)
model \cite{barabasi} and Erd\"os-R\'enyi (ER) model \cite{erdos}.
For the BA model, we use the $G(n,p)$ model where $n$ denotes the number of
nodes and $p$ denotes the number of new edges that gets attached to each
incremental node.  For the ER model, we use the $G(n,m)$ variant that generates
a graph that is chosen uniformly from all possible random graphs of $n$ nodes
and $m$ edges.
The number of nodes vary from
$10^4$ to $10^6$ and the number of edges vary from $10^6$ to $1.8 \times 10^7$ for
the synthetic graphs.

\subsubsection{\bf Lossless Baselines}
For benchmarking of lossless summarization schemes, we consider four
state-of-the-art techniques, namely, SLUGGER
\cite{lee2022slugger}, MoSSo \cite{ko2020incremental},  SWeG \cite{shin2019sweg}, and GraphZip \cite{rossi2018graphzip}.   The choice of these schemes as baselines is justified as follows. SLUGGER yielded up to 29.6\%
smaller summaries than SWeG with
similar execution times. MoSSo and SWeG  claim to achieve
better compression than SAGS \cite{khan2015set}.
GraphZip is shown to offer better compression than the Layered
Label Propagation (LLP) scheme \cite{boldi2011layered}.  
Results for all the baselines are
reported according to the parameter values that offer the best compression, as indicated
in the respective papers.

\subsubsection{\bf Lossy Baselines}
For benchmarking of lossy summarization schemes, 
we chose SSumM \cite{lee2020ssumm} that is shown to offer better summarization than \cite{beg2018scalable, riondato2017graph, lefevre2010grass}. Though there exist several other
competing techniques, as listed in Table~\ref{tab:related}, we could not 
compare with them due to the following reasons: (1)~In spite of multiple
requests, the source codes are not available. (2)~Though the lossless
variant of SweG is available, their lossy variant is not
publicly available as confirmed by the authors themselves. (3)~Few works such as LDME \cite{yong2021efficient} use
Web Graph format to input graph datasets. However, when the chosen SNAP
graph datasets are converted to the Web Graph format, they reported negative compression, i.e., instead of compression, the graphs expanded.

\subsubsection{\bf Evaluation Metrics}
The performance of \cn is
evaluated using \emph{compression ratio} $cr=|G_s|/|G|$, \emph{reconstruction error} $re$, 
\emph{neighborhood query loss}, \emph{error in shortest path}, \emph{reachability query accuracy}, \emph{query response time} and \emph{compression time} $ct$. The reconstruction error is measured using the $L_1$ error norm stated in \cite{lee2020ssumm}:
\begin{align}
	re =   \frac{ |E \setminus E_r| + |E_r \setminus E| }{{|V| \choose 2}}
	\end{align} 
where $V$ is the set of vertices, and $E$ and $E_r$ are the set of edges of the original and reconstructed graphs, respectively.

\subsubsection{\bf \cn Framework Parameters}
The lossy schemes of \cn, namely, \inter and \uni, are
parameterized by the set of \emph{neighborhood loss tolerance thresholds},
denoted by $\nlt=\{\eps_u| u \in V\}$, where  $0< \eps_u < 1$ is a
user-specified parameter that controls the maximum neighborhood loss of node $u$, given
by $rl(u)$, as stated in Eq.~\eqref{eq:varrl}.  To evaluate the impact of $\nlt$, we assume $\eps_u=\eps$ for each
node $u \in V$, where $0 < \eps < 1$ is a parameter specified by the user.
Though the experiments are evaluated on multiple thresholds in the range
$0<\eps<1$, three
representative values $\eps \in \{0.25,0.50,0.75\}$ are chosen.
Note that \eps only represents the \emph{maximum} possible neighborhood loss; the actual \emph{average} neighborhood losses are much smaller, as shown in
Table~\ref{tab:anl}.
 
\subsection{Comparison of Compression Ratio with Existing Baselines}
\label{sec:exp compression ratio}

This section addresses the research question RQ2. Firstly, we compare the compression performance of \cn with lossless baselines, followed by the lossy baselines.

\begin{figure*}[t]
	\includegraphics[width=0.65\linewidth]{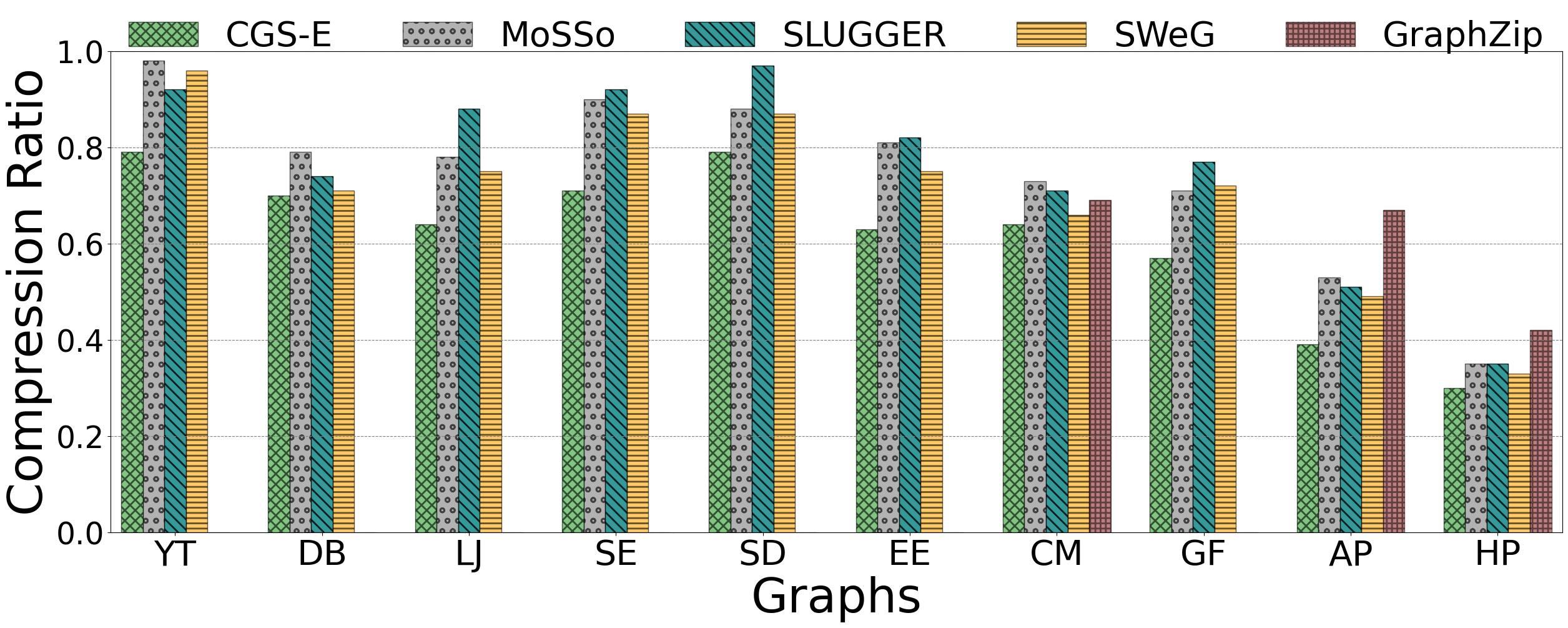}
        \Description[This figure shows the compression ratios $cr$ of various lossless schemes on the real graphs stated in Table~\ref{tab:datasets} where lower $cr$ is preferred. \iu offers the best compression]{\iu offers the best compression than all the baseline schemes on all the real graphs. As graphs become denser, the compression achieved by CGS-E is usually higher.}
	\caption{Compression ratios $cr$ of lossless schemes on real graphs listed in Table~\ref{tab:datasets} (lower $cr$ is better). \iu offers the best compression.}
	\label{fig:cr_lossless_results}
\end{figure*}

\begin{figure*}[t]
\minipage{0.45\textwidth}
	\includegraphics[width=\linewidth]{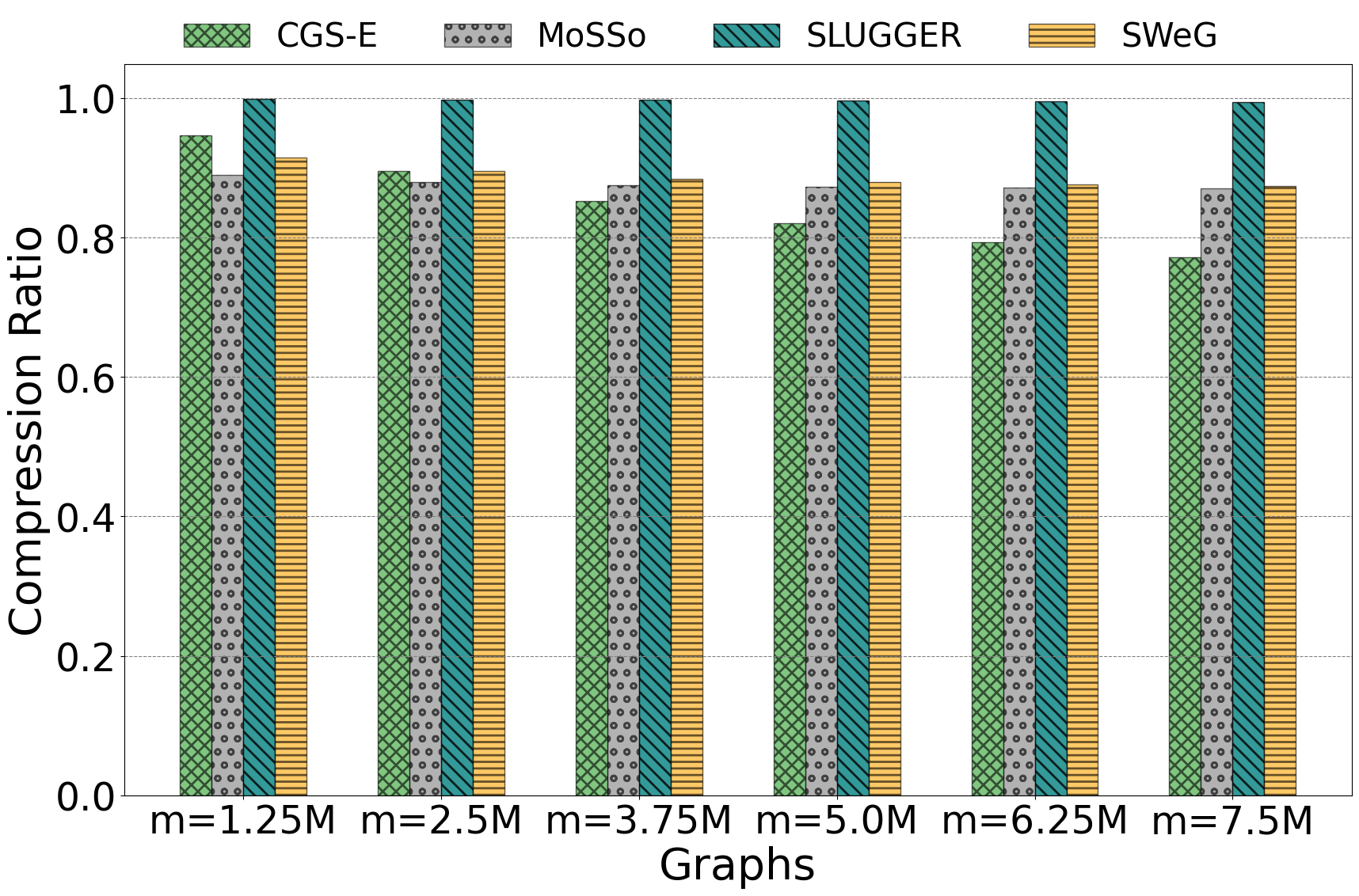}
	\Description[This figure shows the compression ratios $cr$ of various lossless schemes on the synthetic BA graphs with $n = 100,000$ nodes and the number of edges $m$ is shown in million (M) where lower $cr$ is preferred. \iu offers the best compression.]{\iu offers the best compression than all the baseline schemes on all the real graphs. As graphs become denser, the compression achieved by\iu is usually higher.}
	\caption{Compression ratios $cr$ of lossless schemes on synthetic BA graphs with $n = 100,000$ nodes and the number of edges $m$ is shown in million (M) (lower $cr$ is better). \iu offers better compression with increasing density.}
	\label{fig:AB_cr_baselines}
\endminipage
\hfill
	{\ }
\minipage{0.45\textwidth}
	\includegraphics[width=\linewidth]{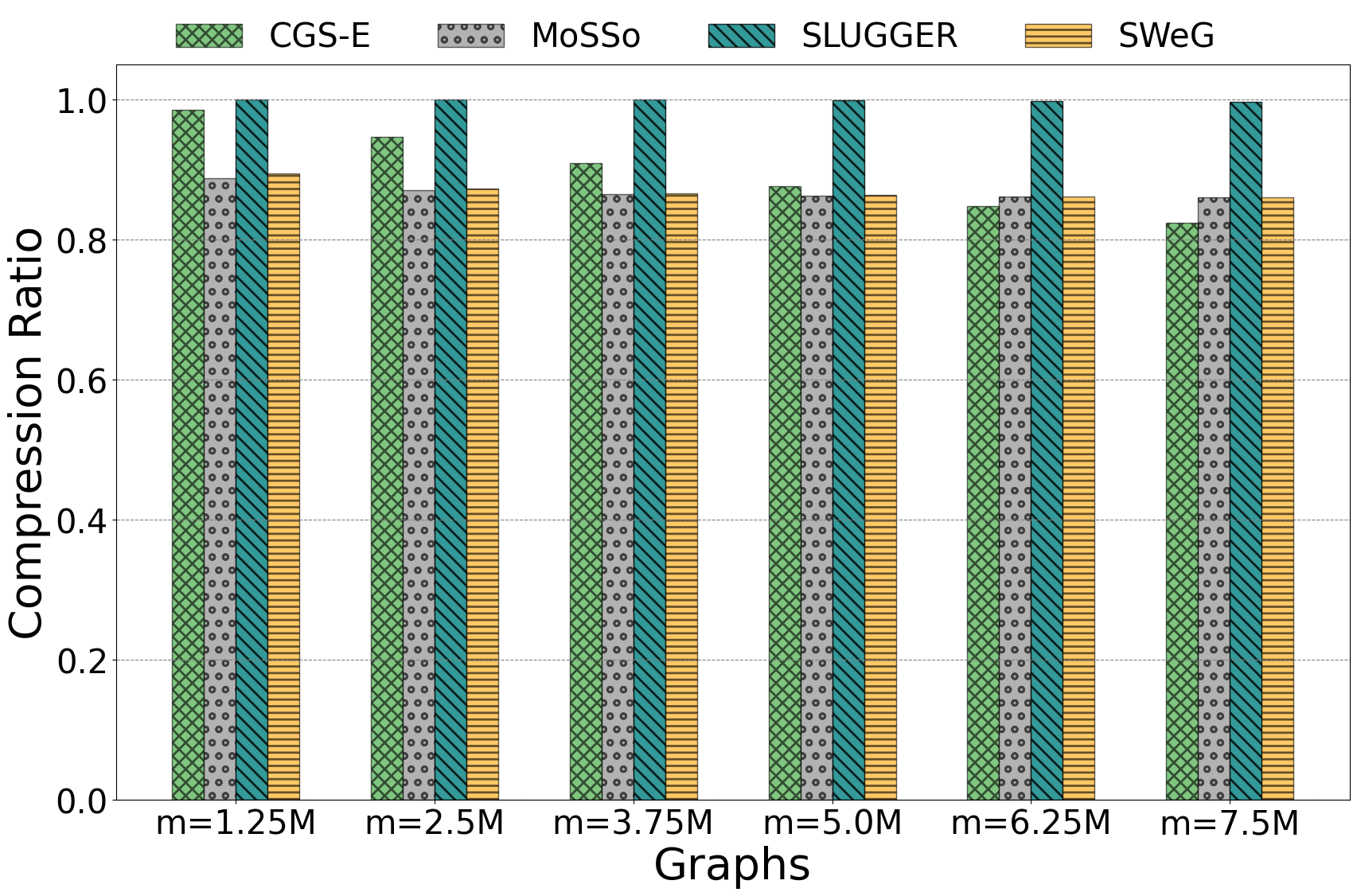}
       \Description[This figure shows the compression ratios $cr$ of various lossless schemes on the synthetic ER graphs with $n = 100,000$ nodes and the number of edges $m$ is shown in million (M) where lower $cr$ is preferred. \iu offers smaller summaries.]{\iu offers the smaller summaries than all the baseline schemes on all the real graphs. As graphs become denser, \iu offers smaller summaries.}
	\caption{Compression ratios $cr$ of lossless schemes on synthetic ER graphs with $n = 100,000$ nodes and the number of edges $m$ is shown in million (M) (lower $cr$ is better). \iu offers smaller summaries with increasing density.}
	\label{fig:RG_cr_baselines}
	\endminipage\hfill
\end{figure*}

\subsubsection{\bf Lossless Schemes}
Fig.~\ref{fig:cr_lossless_results} shows 
the compression ratio obtained by the lossless summarization schemes on the real datasets.
 \iu scheme offers the best compression ratio on all the graphs. More specifically, \iu achieves better compression than: SLUGGER by up to $27\%$  (on LJ dataset), MoSSo by up to $26\%$ (on AP dataset), SWeG by up to $20\%$ (on GF dataset) and GraphZip by up to  $41\%$ (on AP dataset).
Since the source code of GraphZip is not available\footnote{Despite multiple requests to the authors, we could not get their source codes.}, we show the results for only those datasets whose results are reported in \cite{rossi2018graphzip}.

In the figure, the graphs are ranked based on their density (lower to higher). We observe that as the graphs become denser, the compression achieved by \iu is usually higher. This is because as the density increases, the probability that any pair of nodes $(u,v)$ in a graph (where $v$ is a 2-hop neighbor of $u$) has high number of common neighbors, increases. This results in higher compression.

Fig.~\ref{fig:AB_cr_baselines} and Fig.~\ref{fig:RG_cr_baselines} show the compression results for the lossless schemes on synthetically generated BA graphs and ER graphs, respectively. For these experiments, the number of nodes is fixed to $10^5$, and the density is varied from $2.50 \times 10^{-4}$ to $1.50 \times 10^{-3}$. As in the case of real world SNAP data sets shown earlier, the \iu scheme is a clear winner for the synthetic datasets, as well. Notably, in both scenarios, ie., real and synthetic datasets, we observe that as the density of the graph increases, \iu generally yields better compression.

For some datasets such as GF and AP, the compression ratio of SSumM exceeds $1$. This is because $cr=|G_s|/|G|$, and as stated in Sec.~\ref{sec:formulation}, $|G_s|$ takes into account the space overhead of storing the super-node mappings in addition to the nodes and edges of the summary graph.  

Comparing the compression performance of \iu and \inter on the real world SNAP graphs, it appears that  \iu offers better compression on majority of the datasets. However, it must be accounted that this comparison is for the chosen value of the reconstruction error $re$ (or corresponding value of the neighborhood loss tolerance threshold \eps) that is chosen for fair comparison with SSumM. In fact, we shall shortly see that with higher values of \eps, \inter outperforms \iu (Sec.~\ref{sec:exp effect of neighborhood loss tolerance threshold}). Moreover, even for the chosen value of $re$, \inter offers better compression than \iu on CM and DB graphs. 

\subsubsection{\bf Lossy Schemes}
\label{sec:exp compression ratio lossy}
Fig.~\ref{fig:cr_lossy_results} shows the compression ratios obtained by 
the lossy summarization schemes on different real graph datasets.  Since the reconstruction error $re$ is an output parameter for SSumM, we cannot directly control it. Therefore, for uniform comparison, $re$ is chosen close to $10^{-4}$. 
Except for SE and YT datasets, \inter offers better or similar compression than SSumM on all the $8$ other datasets. In particular, \inter achieves up to $46\%$ smaller summaries than SSumM, as reported for HP dataset. The two graphs, YT and SE, where SSumM outperforms \iu, are relatively sparse. While YT is the least dense graph, SE is the fourth lowest dense graph. Therefore, based on the above observation, we can conclude that \inter outperforms SSumM on the relatively denser graphs. Among \inter and \uni, \inter invariably offers better compression on all the considered real graphs.
 
Fig.~\ref{fig:AB_cr_baselines_lossy} and Fig.~\ref{fig:RG_cr_baselines_lossy}  show the compression results for the lossy schemes on synthetically generated BA graphs and ER graphs, respectively. For experiments, we fix the number of nodes to $100,000$.
For fair comparison, both \cn and SSumM are evaluated at the similar reconstruction error $re$.
\uni outperforms \inter and SSumM for both BA and ER graphs. This is in contrast to the behaviour seen on the real graphs earlier, where \inter outperformed \uni on all the datasets. Although \inter provides reasonable compression for BA graphs, it yields no significant compression for ER graphs. This is because, in case of a merge operation for a given pair of nodes executed by the \inter, the exclusive neighors are lost. In ER graphs, the neighborhoods are uniformly distributed; hence, for a given pair of nodes $(u,v)$ where $v$ is a 2-hop neighbor of $u$, the number of exclusive neighbors is sufficiently high that prohibits meeting the neighborhood loss constraint. For both BA and ER graphs at lower densities, \inter achieve better compression than the SSumM baseline. As density increases, SSumM outperforms \inter. From the above we observe that there is no clear winner among \inter and \uni. Nevertheless, one of the lossy variants of \cn usually offers better compression than SSumM. The superior performance of \cn over the compared baselines is largely owing to its summarization approach that carefully examines the compression gains of all eligible pairs of nodes/supernodes that can be merged, and the greedy strategy to iteratively choose the pair that offers the maximum compression gain.

\begin{figure*}[t]
\minipage{0.45\textwidth}
	\includegraphics[width=\linewidth]{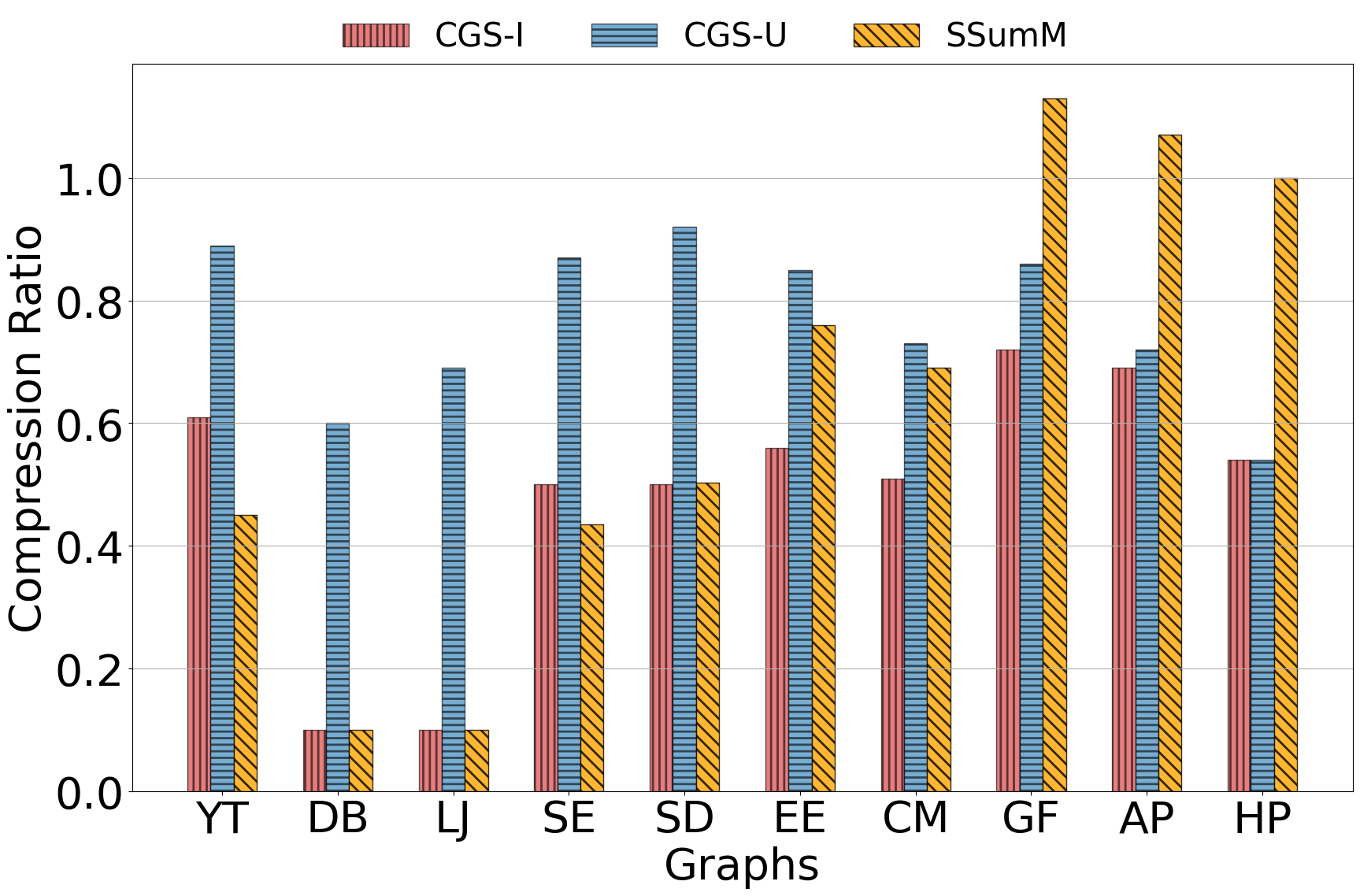}
	\cutfigabove
        \Description[The figure shows the compression ratios $cr$ of various lossy schemes on the real graphs mentioned in Table~\ref{tab:datasets} where lower $cr$ is preferred. \inter outperforms SSumM]{\inter outperperforms SSumM on all the real graphs at $re \sim 10^{-4}$. As graphs become denser, \inter performs better than SSumM.}
	\caption{Compression ratios $cr$ of lossy schemes on real graphs listed in Table~\ref{tab:datasets} at $re \sim 10^{-4}$ (lower $cr$ is better). \inter outperforms SSumM on the relatively denser graphs.}
	\label{fig:cr_lossy_results}
\endminipage
\hfill
	\minipage{0.45\textwidth}
	\includegraphics[width=\linewidth]{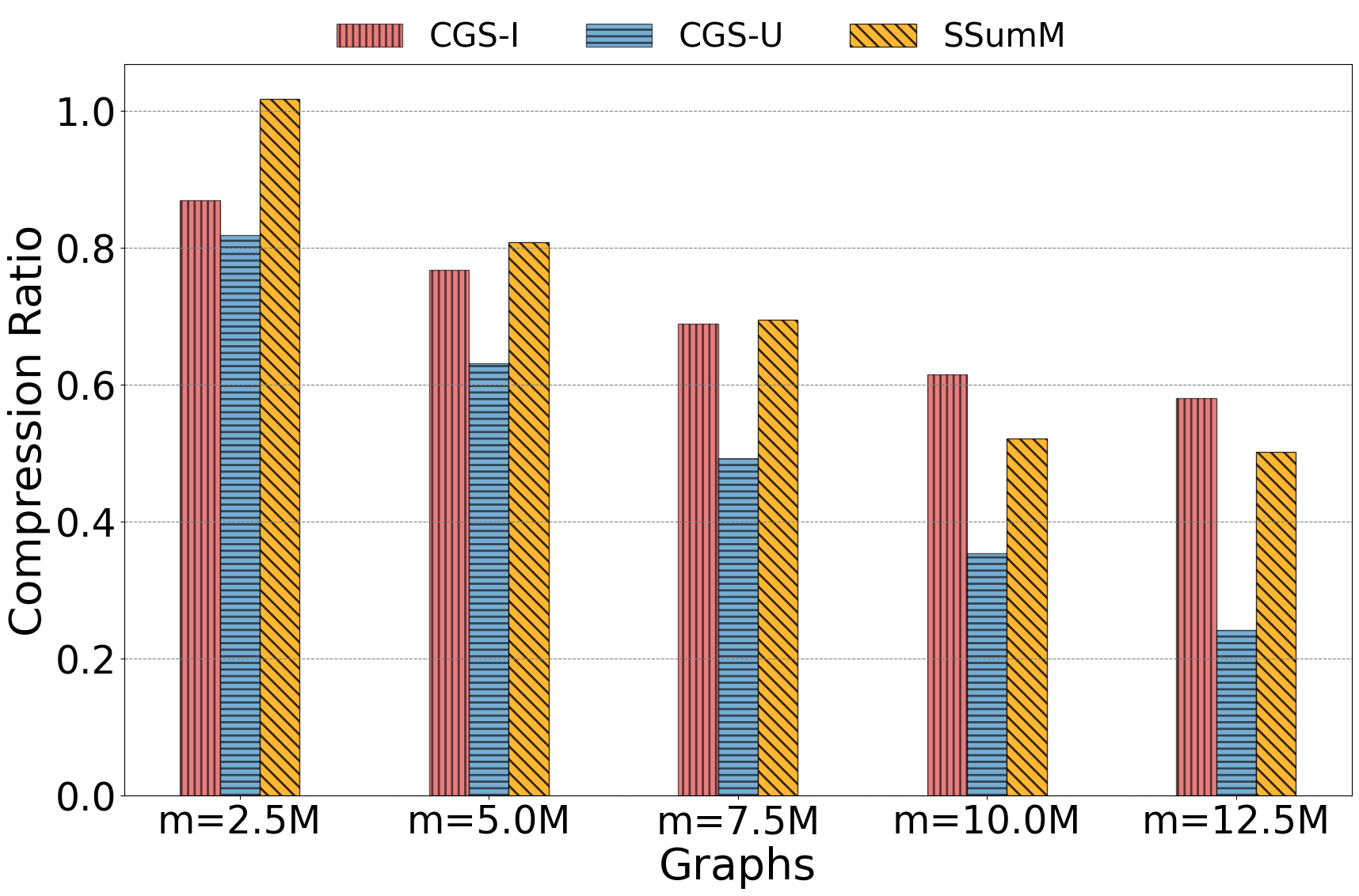}
      \Description[This figure shows the compression ratios $cr$ of various lossy schemes on the synthetic BA graphs with $n = 100,000$ nodes and the number of edges $m$ is shown in million (M) where lower $cr$ is preferred. \uni offers smaller summaries.]{\uni offers the smaller summaries than SSumM and \inter on the BA graphs. As number of edges increases, \uni offers smaller summaries.}
	\caption{Compression ratios $cr$ of lossy schemes on synthetic BA graphs with $n=100,000$
 nodes and the number of edges $m$ is shown in million (M) (lower $cr$ is better). Overall, \uni offers smaller summaries.}
	\label{fig:AB_cr_baselines_lossy}
\endminipage\hfill
\end{figure*}

\begin{figure*}[t]
\minipage{0.45\textwidth}
	\includegraphics[width=\linewidth]{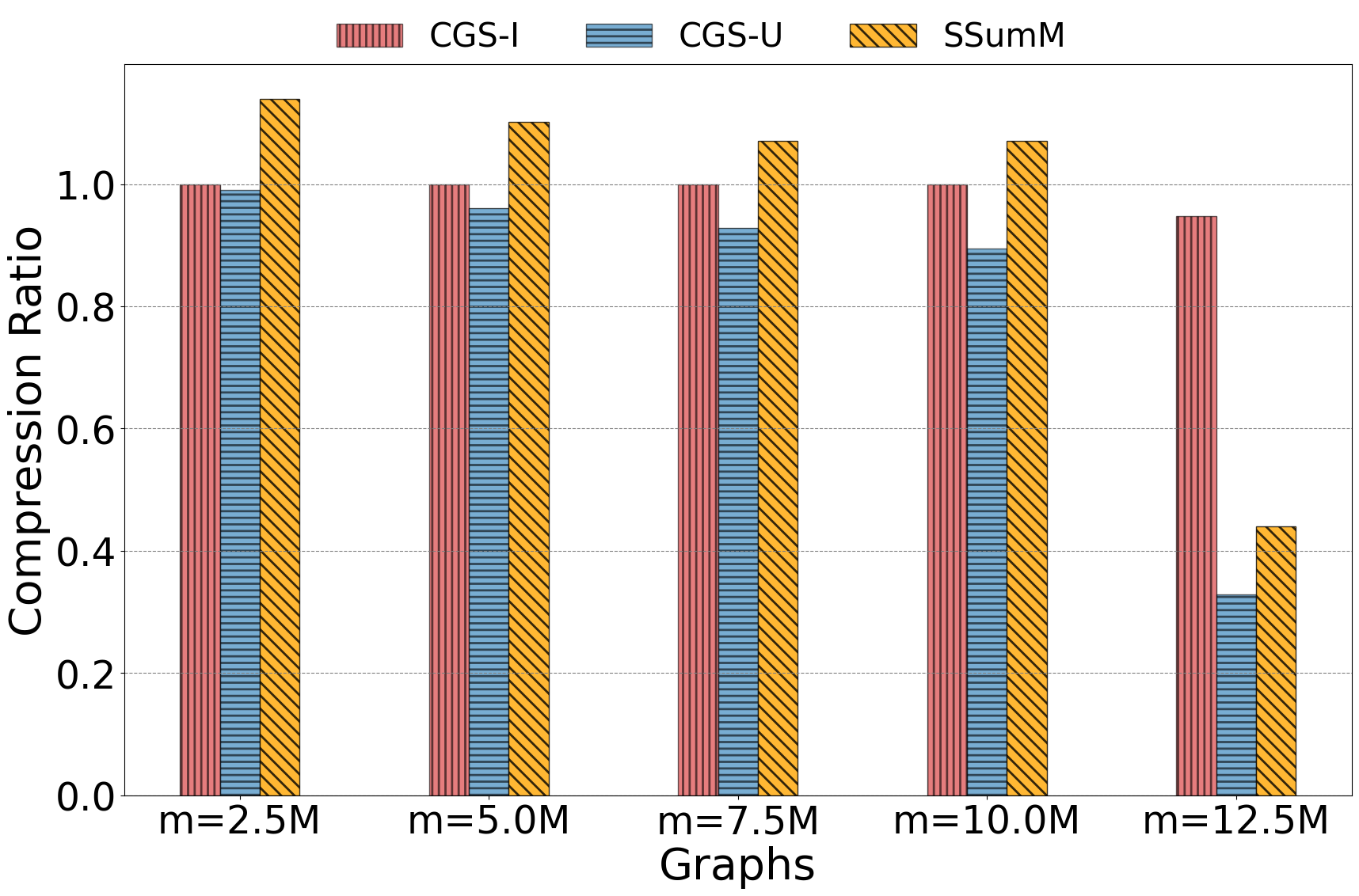}
         \Description[This figure shows the compression ratios $cr$ of various lossy schemes on the synthetic ER graphs with $n = 100,000$ nodes and the number of edges $m$ is shown in million (M) where lower $cr$ is preferred. \uni offers better compression.]{\uni offers better compression than SSumM and \inter on the ER graphs. As number of edges increases, \uni offers better compression.}
	\caption{Compression ratios $cr$ of lossy schemes on synthetic ER graphs with $n=100,000$ nodes and the number of edges $m$ is shown in million (M) (lower $cr$ is better). Overall, \uni offers better compression.}
	\label{fig:RG_cr_baselines_lossy}
	\endminipage
\hfill
\minipage{0.45\textwidth}
	\includegraphics[width=\linewidth]{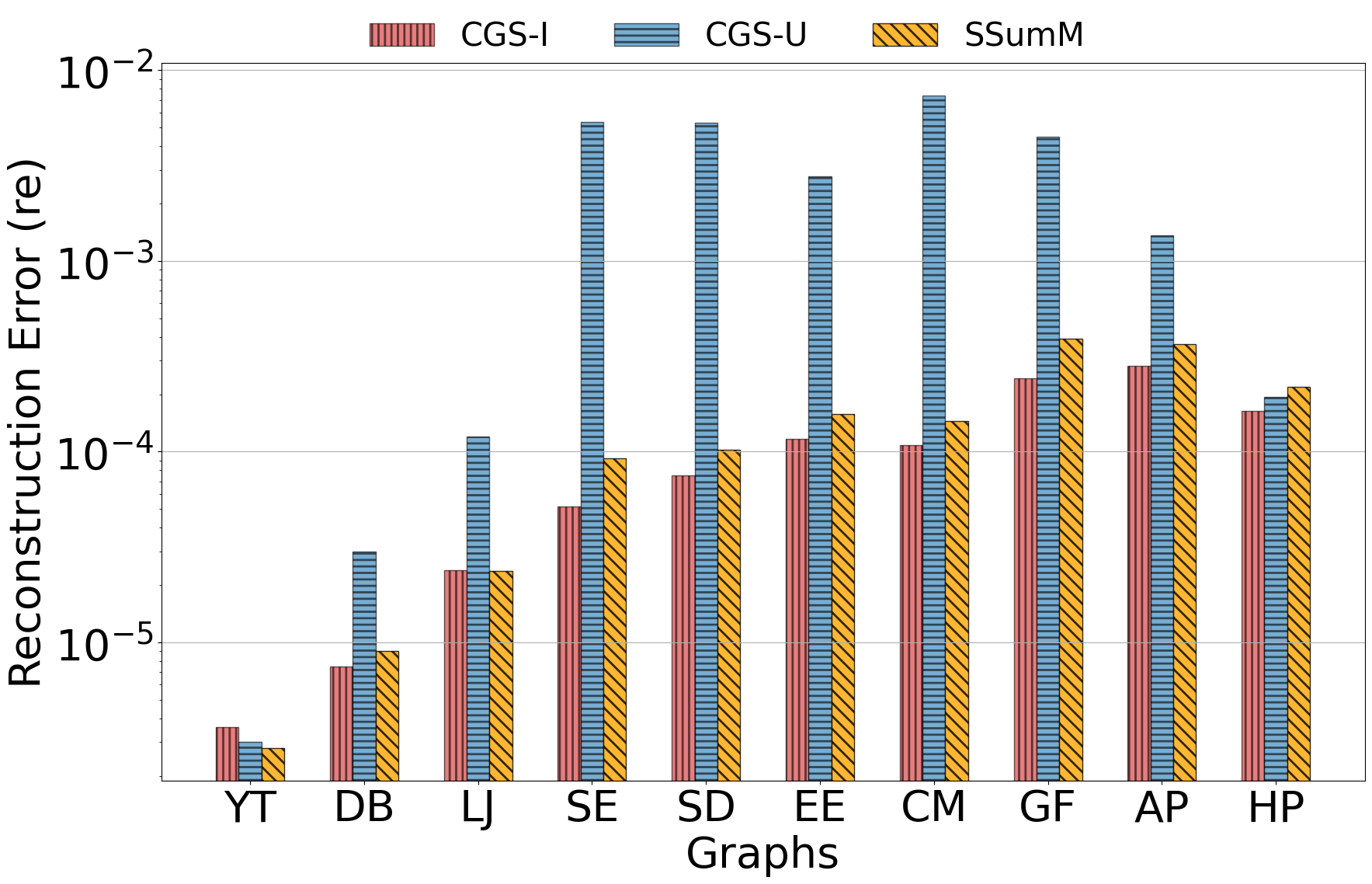}
       \Description[This figure shows the reconstruction error $re$ of various lossy schemes on all the real graphs stated in Table~\ref{tab:datasets}. where lower $re$ is better. \inter offers the lowest errors.]{$re$ results are evalauted for $cr$ close to 0.5. \inter provides the lowest reconstruction errors than SSumM and \uni on all the real graphs.}
	\caption{Reconstruction error $re$ of lossy schemes at $cr \sim 0.5$
	(lower $re$ is better): Overall, \inter offers lowest reconstruction errors.}
	\label{fig:re_result}
	\endminipage\hfill
\end{figure*}

\begin{figure*}[t]
	\subfloat[AP dataset]
	{
	\includegraphics[width=0.45\linewidth]{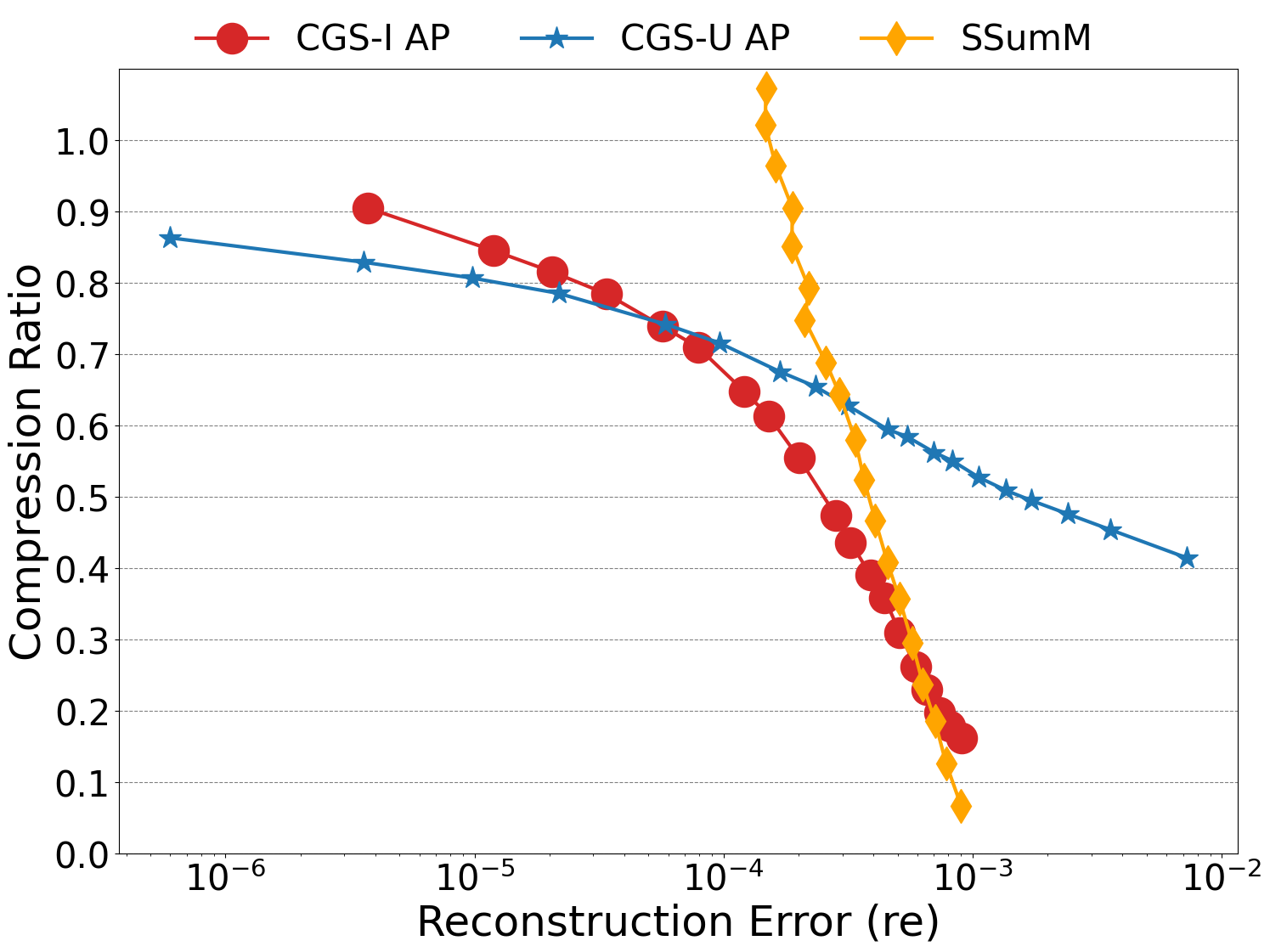}
	\Description[This figure shows the performance of Compression ratio $cr$ vs. reconstruction error $re$ for the AP dataset. \inter shows good $cr$.]{\inter offers the best compression than SSumM and \uni on the AP dataset. As $re$ increases, compression ratio decreases.}
	\label{fig:cr_re_AP}
	}
	\subfloat[GF dataset]
	{
	\includegraphics[width=0.45\linewidth]{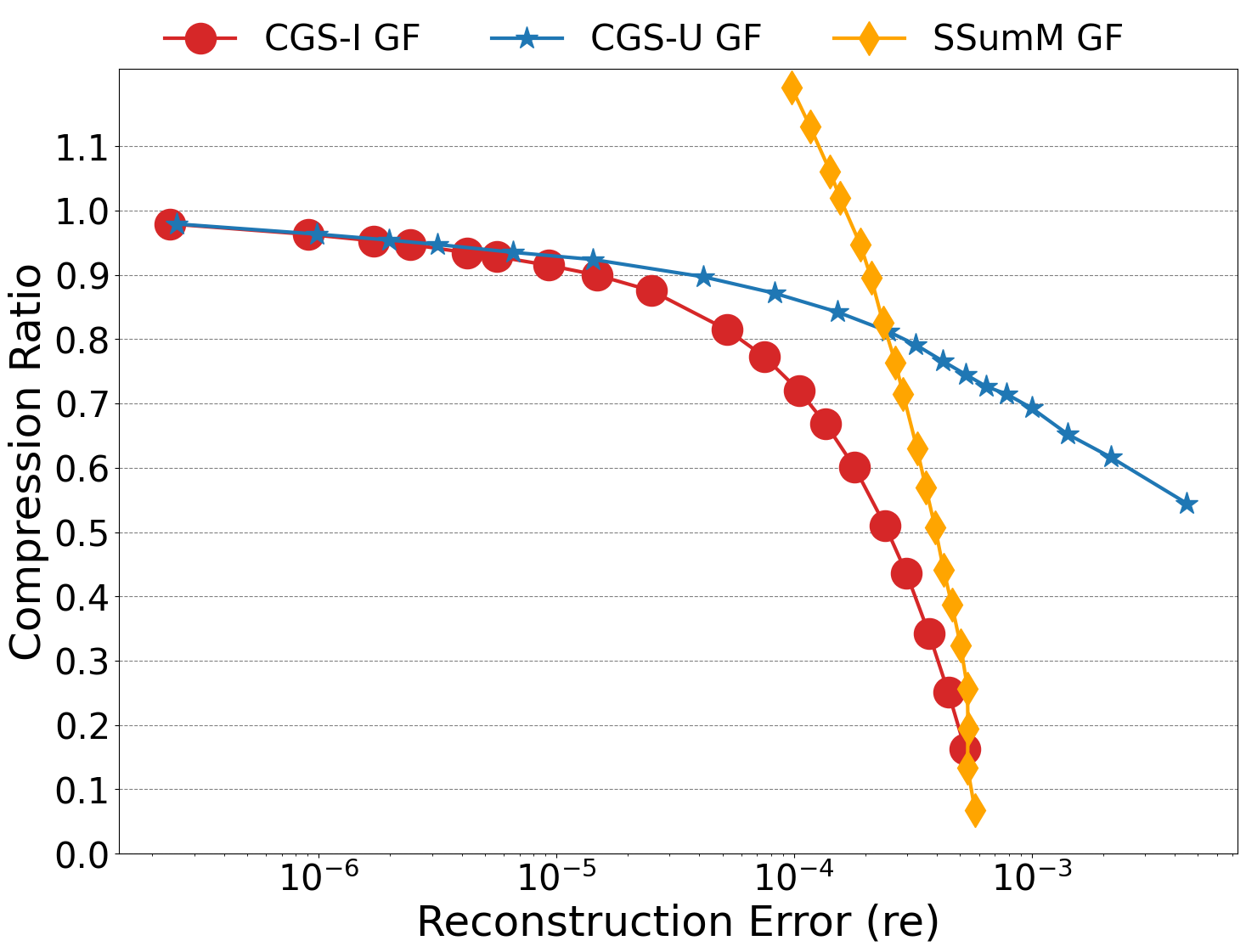}
       \Description[This figure shows the performance of Compression ratio $cr$ vs. reconstruction error $re$ for the GF dataset. \inter shows good $cr$.]{\inter offers the best compression than SSumM and \uni on the GF dataset. As $re$ increases, compression ratio decreases.}
	\label{fig:cr_re_GF}
	}
	\caption{Compression ratio $cr$ vs. reconstruction error $re$: At practical low reconstruction error rates, \inter shows good compression ratio.}
	\label{fig:cr vs re}
	\end{figure*}

\subsection{Comparison of Reconstruction Error with Existing Baselines}
\label{sec:exp reconstruction error}

We next consider the research question RQ3.
Fig.~\ref{fig:re_result} shows the reconstruction error $re$ obtained by 
the lossy summarization schemes.  Since $cr$ is an output parameter, we cannot directly control it. Hence, the reconstruction error results are evaluated for $cr$ close to $0.5$. \inter achieves lower or similar reconstruction errors than SSumM on all datasets.
The reconstruction error of \uni is, however, worse than SSumM on all the datasets other than HP.

Fig.~\ref{fig:cr vs re} shows the variation of compression ratio $cr$ with respect to the reconstruction error $re$ for the lossy schemes on two representative datasets, AP and GF.
The compression ratio $cr$ monotonically decreases (becomes better) as the reconstruction error $re$ increases, since more error allows more compression. For lower reconstruction errors, i.e., $10^{-6} \le re \le 10^{-4}$, both \inter and \uni offer considerably better compression ratios than SSumM. This gap reduces as $re$ exceeds $10^{-4}$. The results for other datasets are similar. 

\begin{figure*}[t]
	\subfloat[Compression ratio]
	{
		\includegraphics[width=0.45\linewidth]{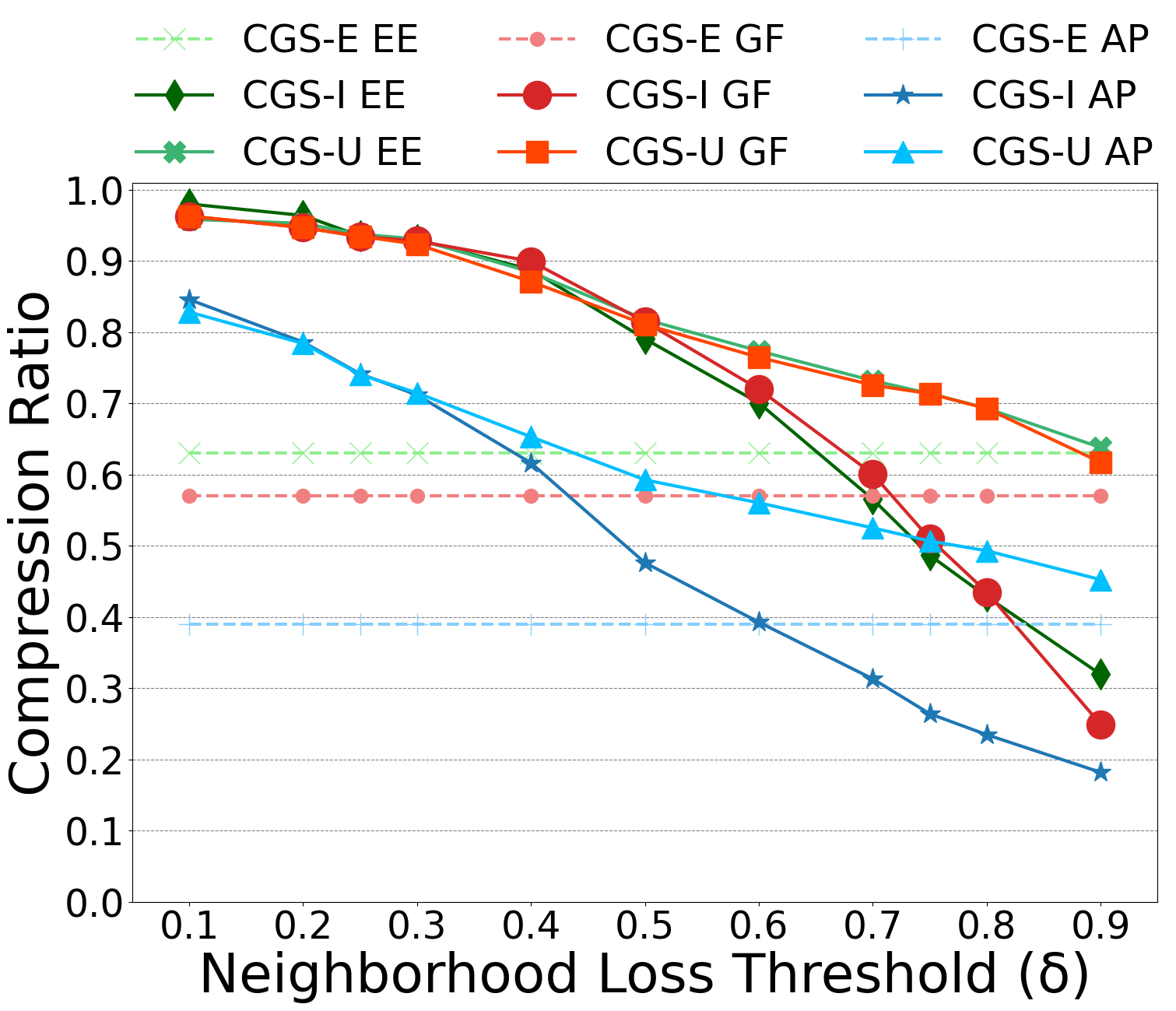}
		\Description[This figure demonstrates the effect of neighborhood loss threshold $\eps$ on compression ratio on the AP, GF and EE graphs. \iu offers smaller summaries.]{\iu shows smaller summaries than \uni and \inter on the AP, GF and EE graphs. As \eps increases, the compression ratios of both \inter and \uni show a steady decline. For lower $\eps$, \iu offer smaller summaries.}
		\label{fig:eps_cr}
	}
	\subfloat[Compression time]
	{
		\includegraphics[width=0.45\linewidth]{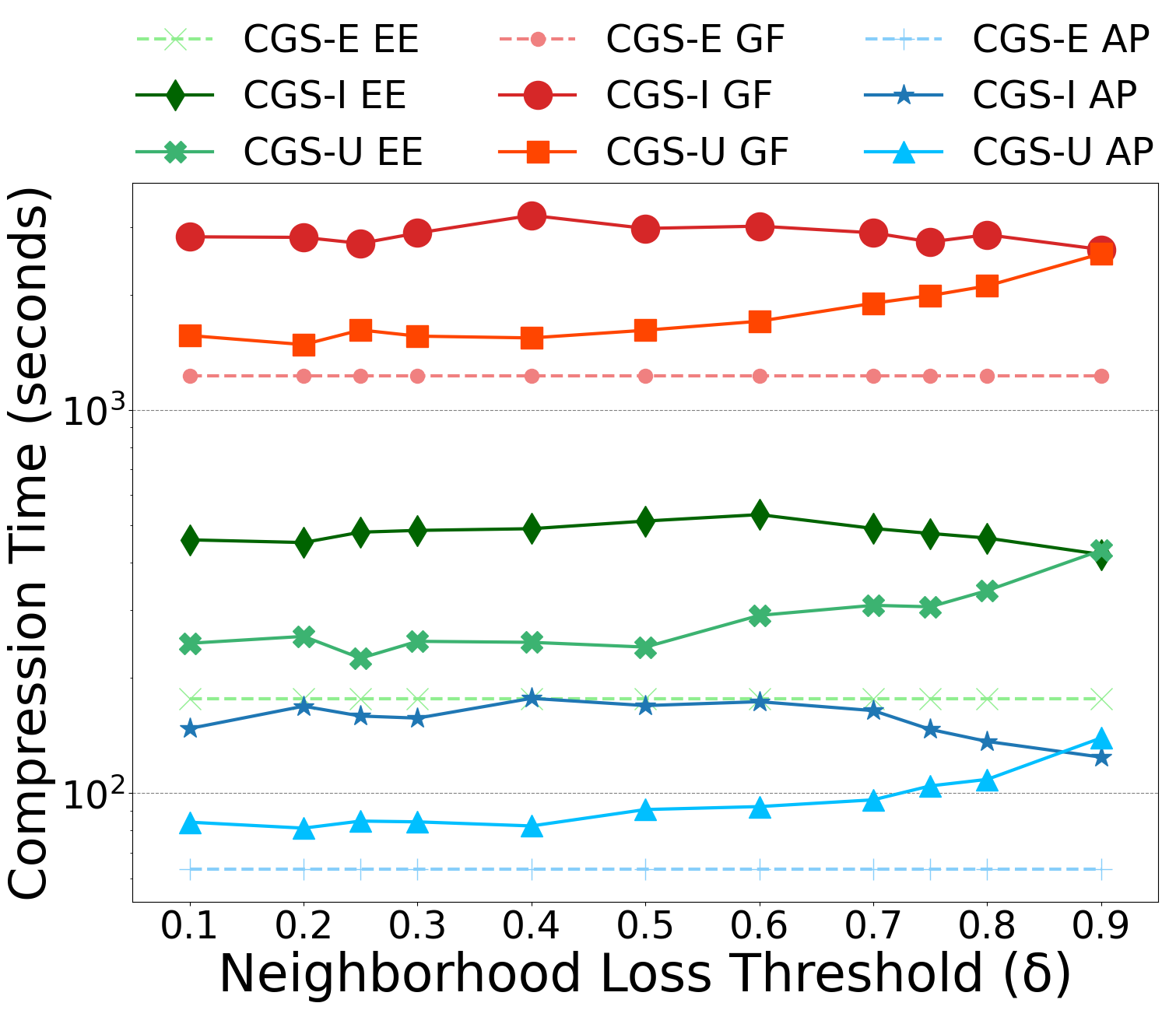}
             \Description[This figure demonstrates the effect of neighborhood loss threshold $\eps$ on compression time (seconds) on the AP, GF and EE graphs. \iu shows least compression time.]{As $\eps$ increases, \iu shows least compression time than \uni and \inter on the AP, GF and EE graphs.}
		\label{fig:eps_ct}
	}
	\caption{Effect of neighborhood loss threshold $\eps$ on compression
	ratio and compression time on real graphs. For lower loss thresholds, \iu offers smaller summaries. \iu takes least compression time. }
	\label{fig:effect of eps}
	\end{figure*}

\subsection{Effect of Neighborhood Loss Tolerance Threshold $\eps$} \label{sec:exp effect of neighborhood loss tolerance threshold}

Next, the research question RQ4 is probed.
Fig.~\ref{fig:effect of eps} shows the effect of neighborhood loss tolerance threshold $\eps$ for the lossy variants of \cn, namely \inter and \uni, on 3 representative datasets, EE, GF and AP. These datasets are chosen based on their average degrees, as stated in Table~\ref{tab:datasets}. As \eps increases, the compression ratios of both \inter and \uni show a steady decline. This is expected, since as \eps increases, it allows more compression at the cost of higher neighborhood loss. The results on other datasets are similar.

For comparison with \iu, the compression ratio of \iu is also shown. For lower values of \eps ($0 \leq \eps \leq 0.6$), \iu outperforms the lossy schemes. However, for higher values of \eps ($\eps \geq 0.75$), \inter outperforms \iu. Fig.~\ref{fig:eps_ct} shows that \iu has lower compression times than both the lossy variants of \cn. 
Refering to ~\ref{fig:eps_cr}, \inter gives more compression than \uni on the chosen datasets. This behavior is in line with that seen in Fig.~\ref{fig:cr_lossy_results}.
 Fig.~\ref{fig:eps_ct}   shows that there is no significant effect of \eps on the running time of either \inter or \uni.

\begin{figure*}[t]
	\subfloat[BA graphs]
	{
		\includegraphics[width=0.45\linewidth]{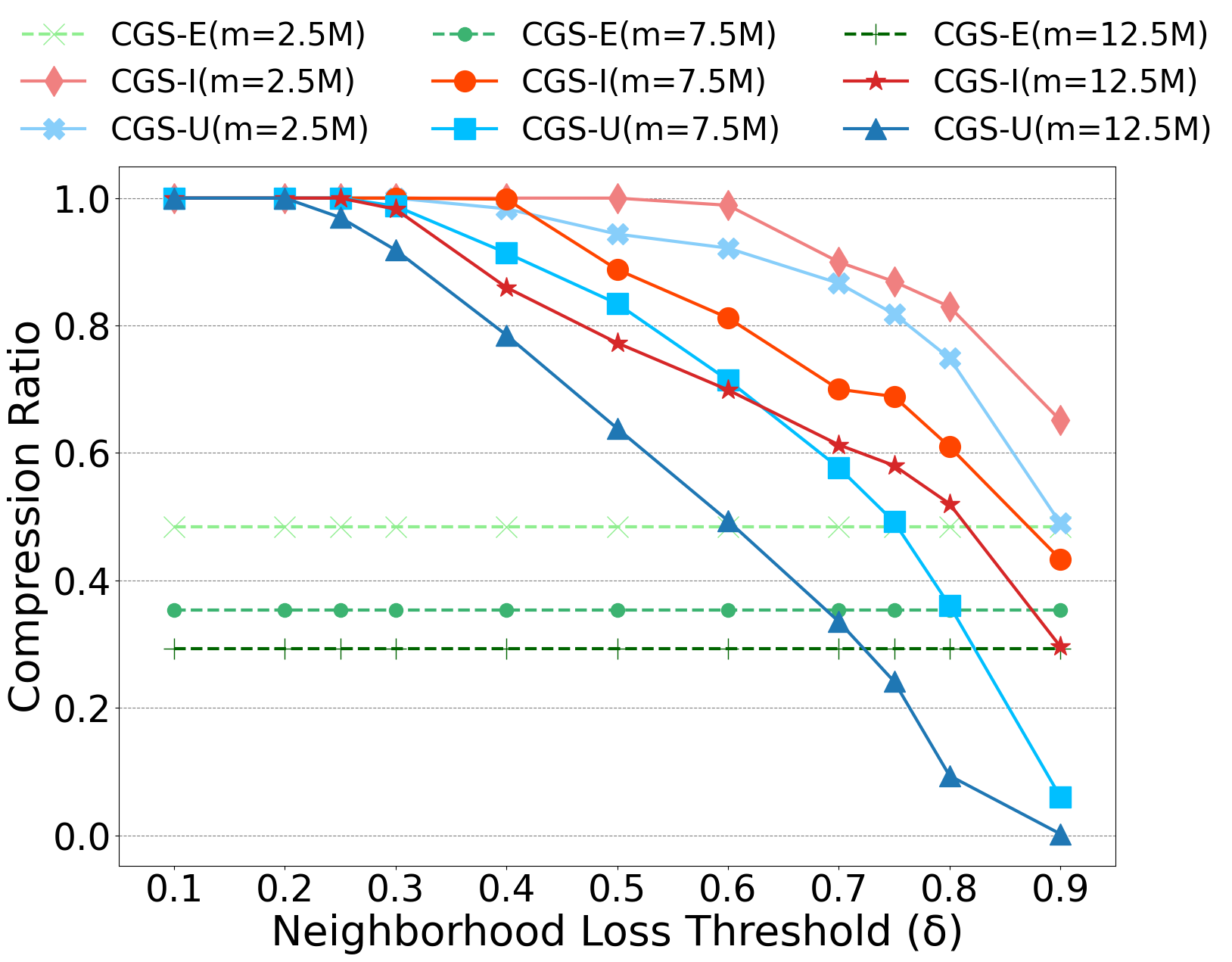}
		\Description[This figure demonstrates the effect of neighborhood loss threshold $\eps$ on compression ratio on the synthetic BA graphs with $n=10,000$ nodes and the number of edges. \uni offers better compression.]{\uni offers better compression than \inter. As $\eps$ increases, the compression ratio decreases, since larger $\eps$ values allow greater compression. For BA graph, denser graphs undergo more compression, leading to a lower compression ratio.}
		\label{fig:eps_cr_AB}
	}
	\subfloat[ER graphs]
	{
		\includegraphics[width=0.45\linewidth]{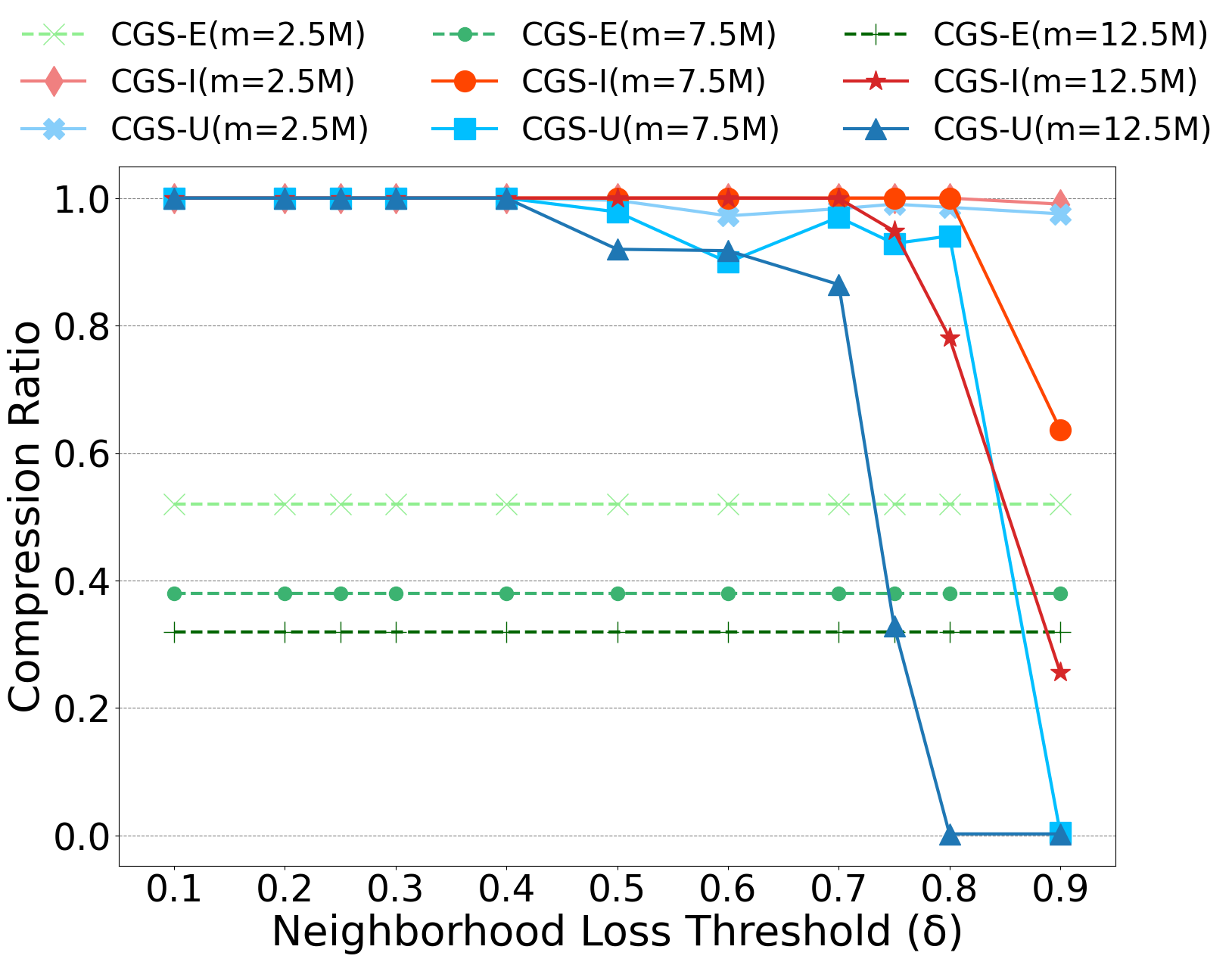}
             \Description[This figure demonstrates the effect of neighborhood loss threshold $\eps$ on compression ratio on the synthetic ER graphs with $n=10,000$ nodes and the number of edges. \uni offers better compression.]{\uni offers better compression than \inter. As $\eps$ increases, the compression ratio decreases, since larger $\eps$ values allow greater compression. For ER graph, significant compression occurs only beyond a higher $\eps$ threshold and primarily for dense graphs.}
		\label{fig:eps_cr_ER}
	}
	\caption{Effect of neighborhood loss threshold $\eps$ on compression ratio for the synthetic graphs with $n=10,000$ nodes and the number of edges as shown. Higher compression is achieved at higher loss thresholds.}
	\label{fig:effect of eps synthetic graphs}  
\end{figure*}

\begin{figure*}[t]
	\subfloat[BA graphs]
	{
		\includegraphics[width=0.45\linewidth]{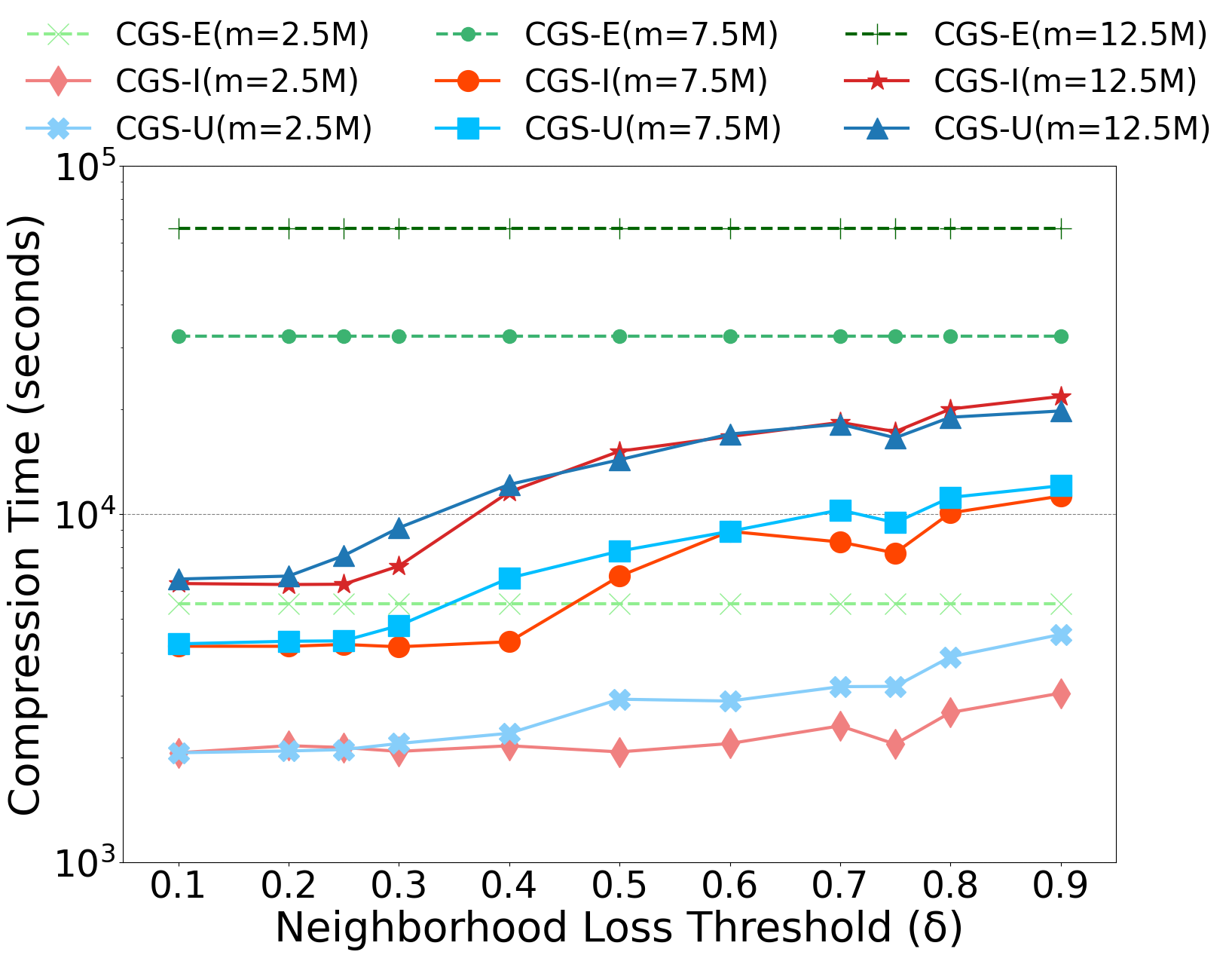}
		\Description[This figure demonstrates the effect of neighborhood loss threshold $\eps$ on compression time(seconds) on the synthetic BA graphs with $n=10,000$ nodes and the number of edges. Compression time increases with density.]{For all the three \cn variants, the compression time increases as graphs become larger and denser, due to the greater number of eligible node pairs with positive compression gain.}
		\label{fig:eps_ct_AB}
	}
	\subfloat[ER graphs]
	{
		\includegraphics[width=0.45\linewidth]{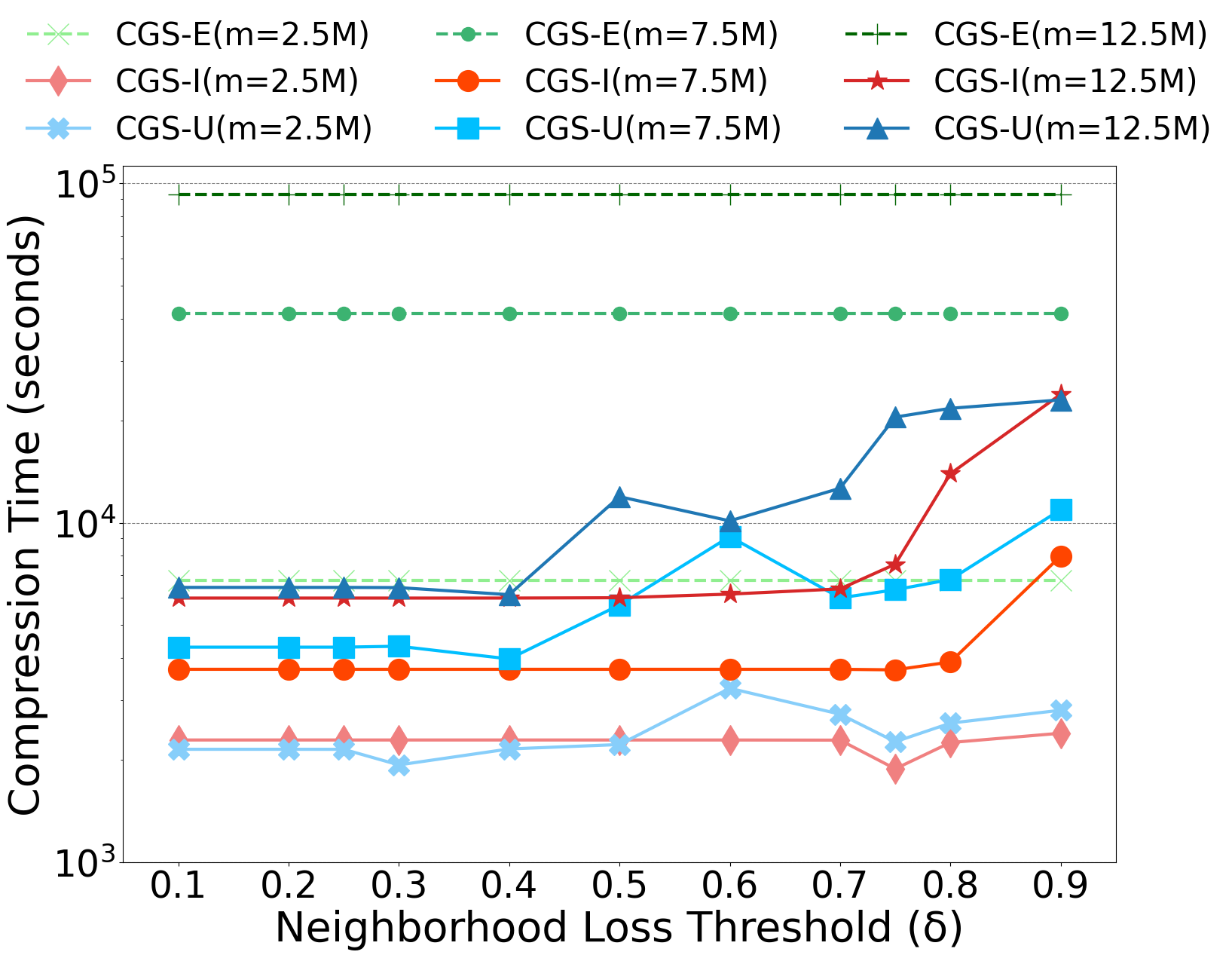}
             \Description[This figure demonstrates the effect of neighborhood loss threshold $\eps$ on compression time(seconds) on the synthetic ER graphs with $n=10,000$ nodes and the number of edges. Compression time increases with density.]{For all the three \cn variants, the compression time increases as graphs become larger and denser, due to the greater number of eligible node pairs with positive compression gain.}
		\label{fig:eps_ct_ER}
	}
	\caption{Effect of neighborhood loss threshold $\eps$ on compression time for the synthetic graphs with $n=10,000$ nodes and the number of edges as shown. Compression time increases with density.}
	\label{fig:effect of eps on ct using synthetic graphs}
\end{figure*}

\begin{figure*}[t]
	\subfloat[BA graphs]
	{
		\includegraphics[width=0.45\linewidth]{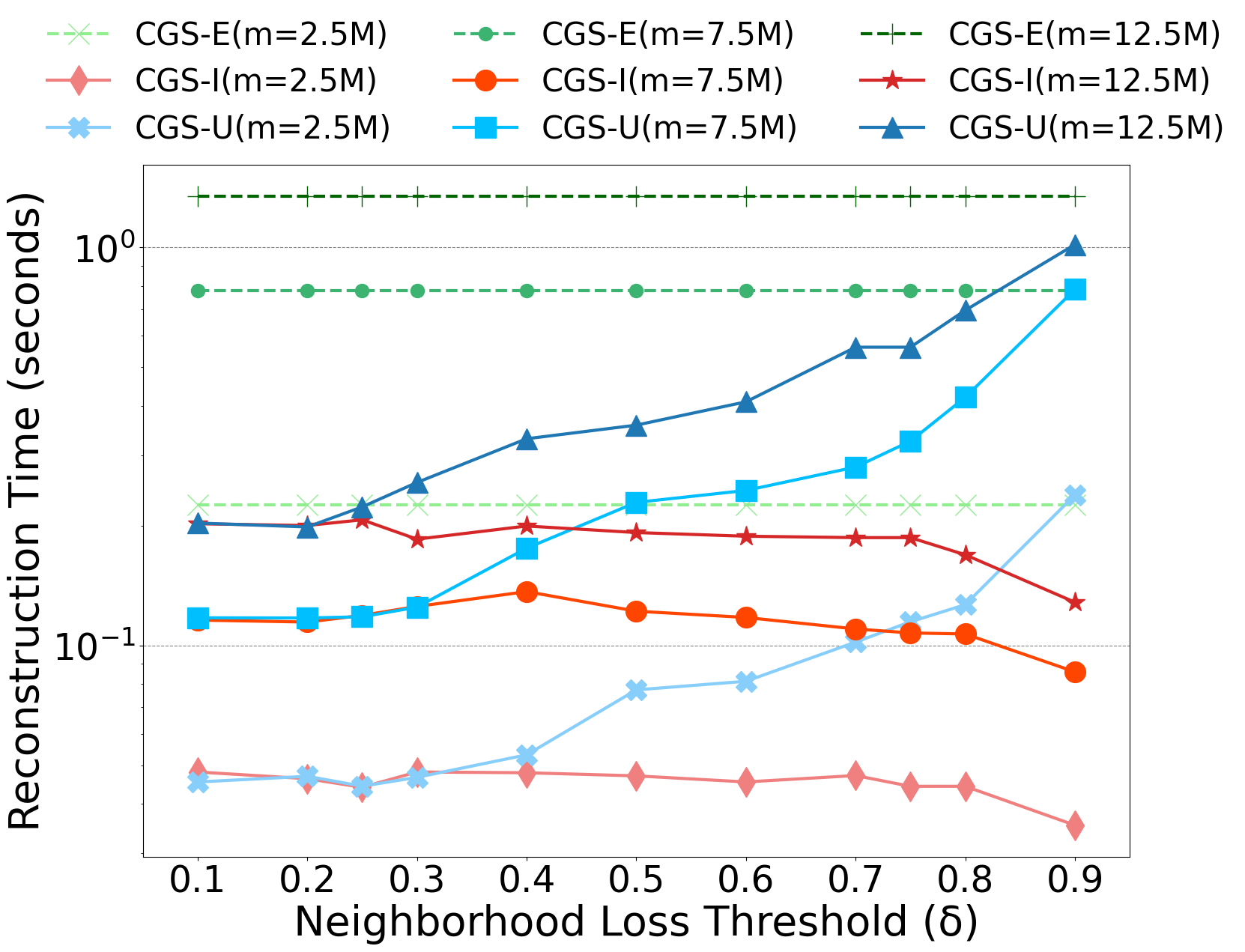}
		 \Description[This figure demonstrates the effect of neighborhood loss threshold $\eps$ on reconstruction time(seconds) on the synthetic BA graphs with $n=10,000$ nodes and the number of edges. \inter has the lowest reconstruction time.]{The reconstruction time increases as the graphs become larger and denser. Among the three \cn variants, \inter has the lowest reconstruction time.}
		\label{fig:eps_rt_AB}
	}
	\subfloat[ER graphs]
	{
		\includegraphics[width=0.45\linewidth]{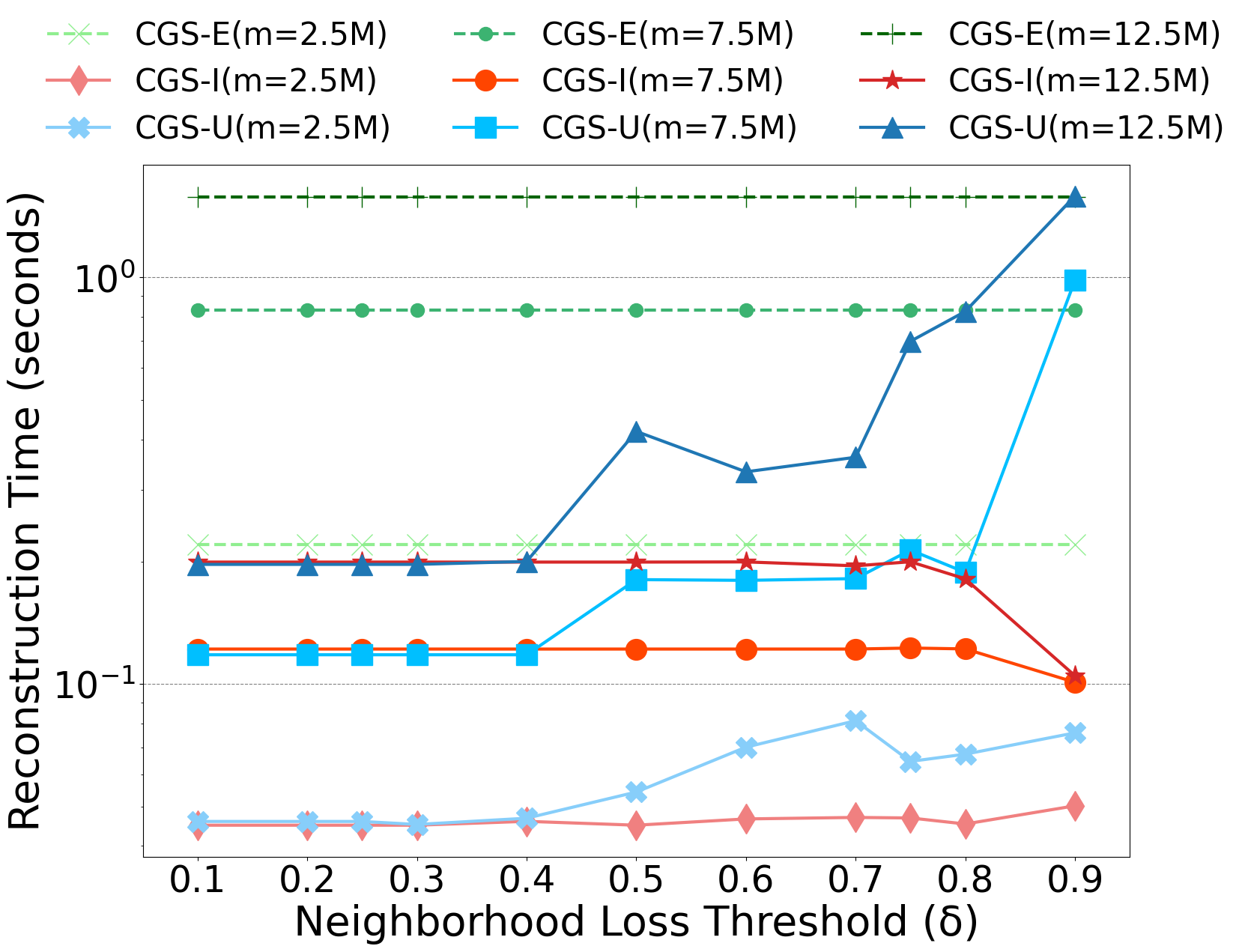}
		 \Description[This figure demonstrates the effect of neighborhood loss threshold $\eps$ on reconstruction time(seconds) on the synthetic ER graphs with $n=10,000$ nodes and the number of edges. \inter has the lowest reconstruction time.]{The reconstruction time increases as the graphs become larger and denser. Among the three \cn variants, \inter has the lowest reconstruction time. }
		\label{fig:eps_rt_ER}
	}
	\caption{Effect of neighborhood loss tolerance threshold $\eps$ on reconstruction time for the synthetic graphs with $n=10,000$ nodes and the number of edges as shown. \inter has the lowest reconstruction time.}
	\label{fig:effect of eps on reconstruction time using synthetic graphs}
\end{figure*}

Fig.~\ref{fig:effect of eps synthetic graphs} shows the effect of \eps on the compression ratio of synthetic graphs, BA and ER. As $\eps$ increases, the compression ratio decreases, since larger $\eps$ values allow greater compression. For BA graphs (Fig.\ref{fig:eps_cr_AB}), denser graphs undergo more compression, leading to a lower compression ratio. For ER graphs (Fig.~\ref{fig:eps_cr_ER}), significant compression occurs only beyond a higher $\eps$ threshold and primarily for dense graphs. For both BA graphs and ER graphs, \uni offers better compression than \inter.  Interestingly, \uni outperforms \iu as the \eps increases. The poor performance of \inter in case of ER graphs is similar to what we observed in Sec.~\ref{sec:exp compression ratio lossy} and the explanation for the same is already presented there.

Next, we analyze the effect of $\eps$ on the compression time and the reconstruction times of synthetic BA and ER graphs. Fig.~\ref{fig:effect of eps on ct using synthetic graphs} shows the compression time results for BA and ER graphs. For all the three \cn variants, the compression time increases as graphs become larger and denser, due to the greater number of eligible node pairs with positive compression gain. Fig.~\ref{fig:effect of eps on reconstruction time using synthetic graphs} shows that the reconstruction time increases as the graphs become larger and denser. Among the three \cn variants, \inter has the lowest reconstruction time. This is because  during reconstruction, \inter regenerates a subgraph of the original graph, while producing the same node set. On the other hand, \iu reconstructs the original graph exactly and \uni produces the original graph along with additional false positive edges. This effect gets more pronounced with higher loss thresholds. 

Comparing the three variants of \cn in terms of compression, it is important to note that while the loss of the lossy schemes is configurable (both in terms of loss type as well as maximum tolerable neighborhood loss), the lossless scheme \iu is not configurable. Interestingly, as shown in Sec.~\ref{sec:exp compression ratio}, there are graphs where \iu outperforms \inter; then, there are graphs where \inter outperforms \uni; and, there are graphs where \uni outperforms \inter and \iu. Therefore, none of the three variants is universally the best or universally the worst. This is an important finding since this shows why the configurability offered by the \cn framework (in terms of \inter, \uni and \iu) is necessary for graph summarization.

\subsection{Query Processing through \cn}
\label{sec:exp query processing}

The next set of experiments evaluates the query performance and the query response times of \cn (research question RQ5). Since \iu is lossless, it answers all the queries exactly with no loss. Thus, for evaluation of query performance, we only consider the lossy variants of \cn, i.e., \inter and \uni.

Given that the chosen lossy baseline scheme, SSumM \cite{lee2020ssumm} does not support queries, and the source codes of other lossy baseline schemes that support queries are not available, it is not possible to compare the query processing results against any existing baseline. 
As stated in Sec.~\ref{sec:query}, though the \cn framework can be employed to answer arbitrary graph queries, we evaluate its query performance on 3 representative graph queries, namely, the neighborhood queries, the shortest path queries, and the reachability queries.

\subsubsection{\bf Neighborhood Queries}

To evaluate the neighborhood queries, we measure the neighborhood loss
$\rl(u)$ (stated in Eq.~\eqref{eq:varrl}) for each node $u$ in the
reconstructed graph $G_r$, and then compute the \emph{average
neighborhood loss}, denoted by $anl$. The $anl$ values along with the
standard deviation are shown against the maximum possible neighborhood
loss \eps in Table~\ref{tab:anl}. The average neighborhood loss $anl$
(of either \inter or \uni) is typically much lower than $\eps$.  For
example, when $\eps = 0.5$, the average loss for \inter for the three
datasets are $0.09$, $0.25$ and $0.18$, i.e., at most half of the
threshold.  \uni also shows an average loss of only at most $0.26$ for the
threshold $0.5$.  The maximum value of the loss (values not shown in the
table) actually reaches the limit \eps for the three datasets and the
two variants.  This indicates that both \inter and \uni strive to offer
the best compression at the cost of maximum neighborhood loss that is
bounded by $\eps$.

\begin{table}[ht]
	\centering
	\begin{tabular}{c cc cc cc}
		\toprule
		\multirow{2}{*}{$\eps$} & \multicolumn{6}{c}{\bf Average neighborhood loss $anl$} \\
		\cline{2-7}
		& \inter EE & \uni EE & \inter GF & \uni GF & \inter AP & \uni AP \\
		\midrule
		0.10 & $0.00 \pm 0.00$ & $0.00 \pm 0.00$ & $0.00 \pm 0.01$ & $0.00 \pm 0.01$ & $0.01 \pm 0.02$ & $0.00 \pm 0.01$ \\
		0.20 & $0.00 \pm 0.03$ & $0.00 \pm 0.02$ & $0.00 \pm 0.02$ & $0.00 \pm 0.02$ & $0.02 \pm 0.05$ & $0.01 \pm 0.03$ \\
		0.25 & $0.02 \pm 0.07$ & $0.02 \pm 0.06$ & $0.01 \pm 0.03$ & $0.01 \pm 0.04$ & $0.05 \pm 0.08$ & $0.04 \pm 0.07$ \\
		0.30 & $0.02 \pm 0.07$ & $0.03 \pm 0.07$ & $0.01 \pm 0.04$ & $0.02 \pm 0.05$ & $0.06 \pm 0.09$ & $0.06 \pm 0.08$ \\
		0.40 & $0.07 \pm 0.12$ & $0.09 \pm 0.13$ & $0.02 \pm 0.06$ & $0.09 \pm 0.12$ & $0.12 \pm 0.13$ & $0.14 \pm 0.12$ \\
		0.50 & $0.18 \pm 0.20$ & $0.26 \pm 0.20$ & $0.09 \pm 0.15$ & $0.26 \pm 0.15$ & $0.25 \pm 0.18$ & $0.26 \pm 0.16$ \\
		0.60 & $0.22 \pm 0.22$ & $0.34 \pm 0.21$ & $0.16 \pm 0.18$ & $0.37 \pm 0.16$ & $0.32 \pm 0.19$ & $0.34 \pm 0.18$ \\
		0.70 & $0.31 \pm 0.26$ & $0.41 \pm 0.24$ & $0.27 \pm 0.22$ & $0.46 \pm 0.17$ & $0.42 \pm 0.21$ & $0.43 \pm 0.20$ \\
		0.75 & $0.37 \pm 0.29$ & $0.47 \pm 0.26$ & $0.37 \pm 0.25$ & $0.51 \pm 0.18$ & $0.49 \pm 0.22$ & $0.48 \pm 0.21$ \\
		0.80 & $0.41 \pm 0.31$ & $0.51 \pm 0.27$ & $0.46 \pm 0.25$ & $0.56 \pm 0.18$ & $0.53 \pm 0.23$ & $0.53 \pm 0.21$ \\
		0.90 & $0.47 \pm 0.35$ & $0.62 \pm 0.31$ & $0.65 \pm 0.25$ & $0.69 \pm 0.19$ & $0.63 \pm 0.26$ & $0.65 \pm 0.23$ \\
		\bottomrule
	\end{tabular}
	\caption{Average neighborhood loss $anl$ vs. $\eps$: Average neighborhood losses are much lower than the maximum allowed neighborhood loss threshold.}
	\label{tab:anl}
\end{table}

\begin{figure*}[t]
	\subfloat[\inter]
	{
		\includegraphics[width=0.45\linewidth]{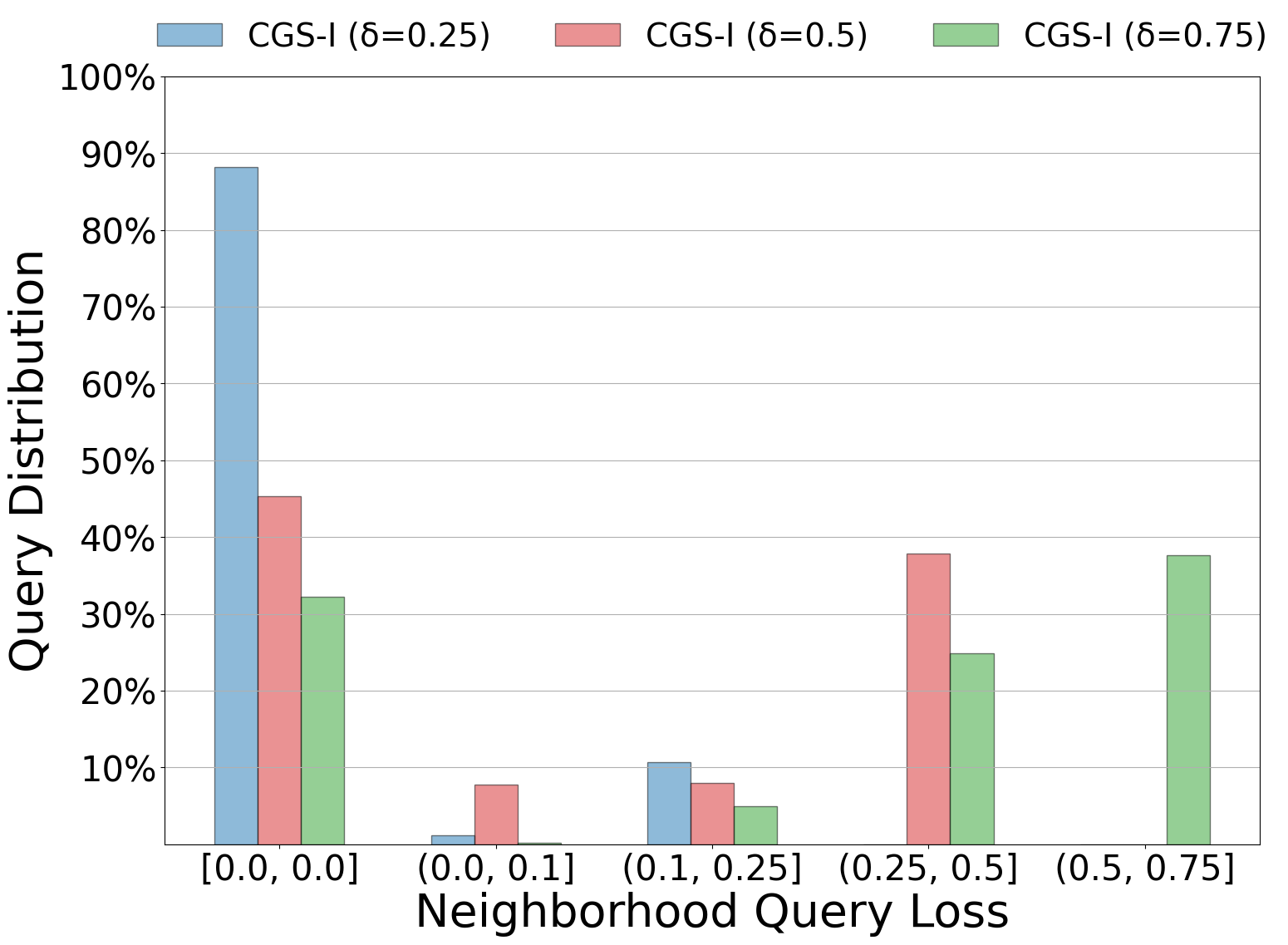}
		\Description[This figure shows the effect of Neighborhood query loss and the Query Distribution (in percentage) for \inter. The $x$-axis bins the different values of loss $\eps_u$ (Eq.~\ref{eq:nlt}) for the vertices, while the $y$-axis shows the percentage of queries that show the corresponding error.]{A high percentage of queries exhibit zero error (the first set of bars). \inter generally performs better or equivalent to \uni.}
		\label{fig:nq_histo_NOFAP}
	}
	\subfloat[\uni]
	{
		\includegraphics[width=0.45\linewidth]{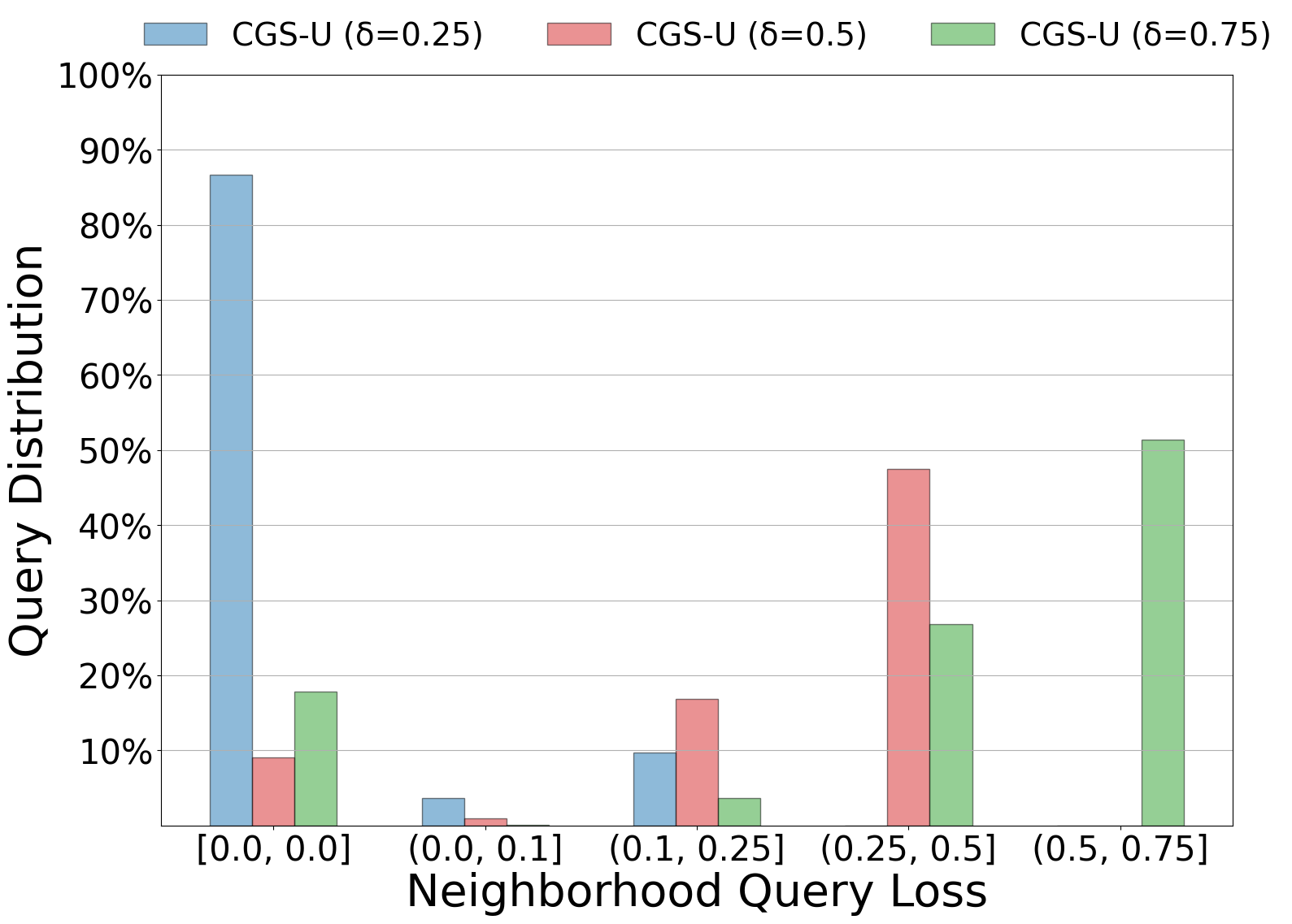}
		\Description[This figure shows the effect of Neighborhood query loss and the Query Distribution (in percentage) for \uni. The $x$-axis bins the different values of loss $\eps_u$ (Eq.~\ref{eq:nlt}) for the vertices, while the $y$-axis shows the percentage of queries that show the corresponding error.]{A high percentage of queries exhibit zero error (the first set of bars). \inter generally performs better or equivalent to \uni.}
		\label{fig:nq_histo_NOFAN}
	}
	\caption{Neighborhood query loss, EE: Most queries are answered with zero loss.}
	\label{fig:NQ Accuracy}
\end{figure*}

Fig.~\ref{fig:NQ Accuracy} shows the neighborhood query loss of \inter
and \uni for three different values of $\eps$.  The $x$-axis bins the different
values of loss $\eps_u$ (Eq.~\ref{eq:nlt}) for the vertices, while the $y$-axis
shows the percentage of queries that show the corresponding error.  A high
percentage of queries exhibit zero error (the first set of bars).  \inter generally performs better or equivalent to \uni.  When \eps
increases, the loss increases and hence, the number of queries with larger errors also increases.
For instance, for $\eps=0.25$, both \inter and \uni retrieve more than $85\%$
of neighborhood queries without any error, and 
$\sim10\%$ of queries with error
$< 0.25$.

\begin{figure*}[t]
	\includegraphics[width=0.45\linewidth]{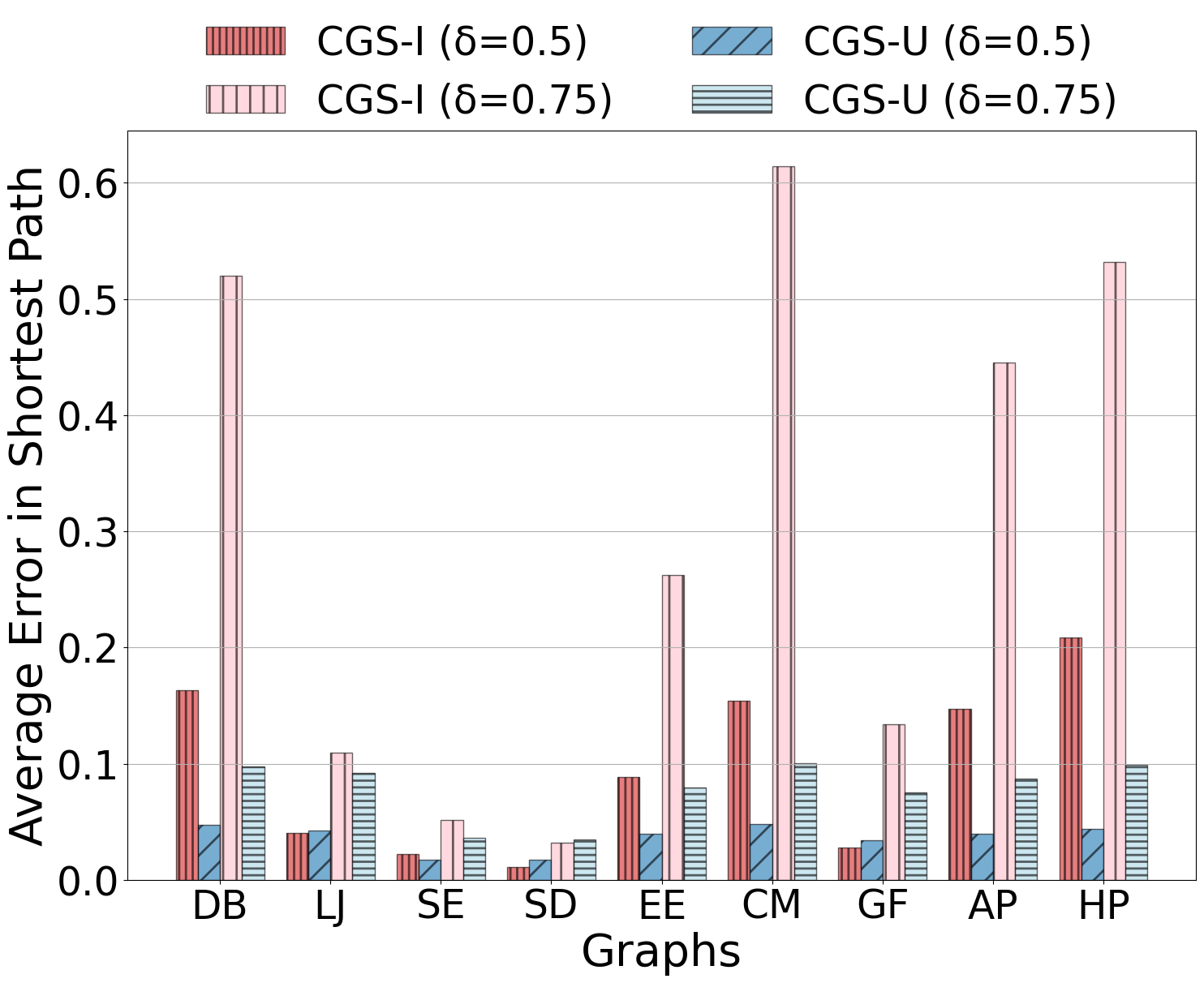}
	\Description[This figure demonstrates the average errors in shortest path query for \inter and \uni on all the real graphs.]{\inter performs better than \uni. For $\eps=0.5$, error in most datasets $< 0.2$.}
	\caption{Average error in shortest path query: Overall, \inter performs better than \uni.}
	\label{fig:sp_error}
\end{figure*}

\begin{figure*}[t]
	\subfloat[\inter]
	{
		\includegraphics[width=0.45\linewidth]{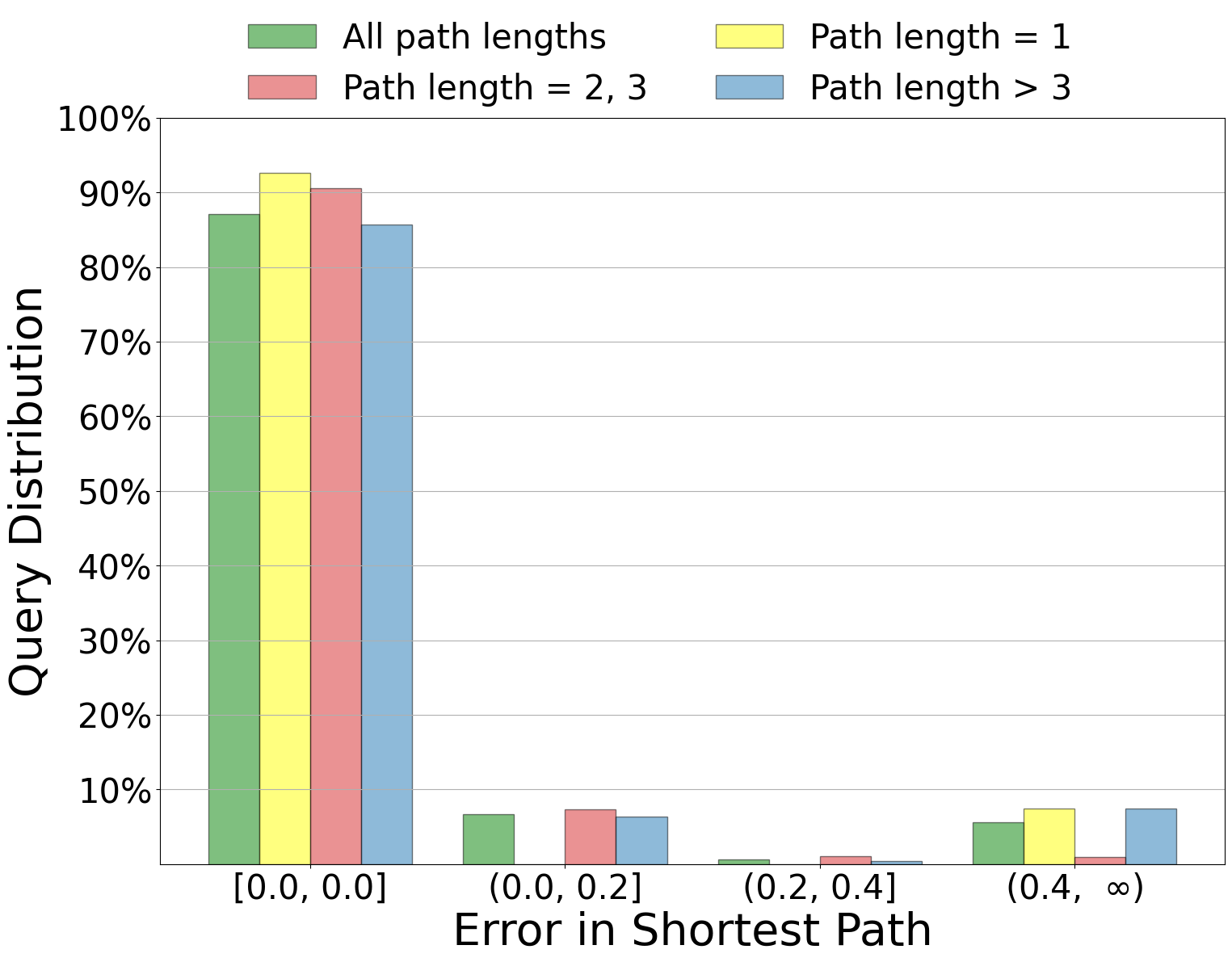}
		\Description[This figure shows the error in shortest path at $\eps=0.5$ for the EE dataset for the \inter version. Most queries return the exact shortest path length.]{A very high percentage of queries
(more than 80\% for both \inter and \uni) are evaluated exactly. Moreover, error in shorter path lengths are less.}
		\label{fig:sp_freq_EE_NOFAP_05}
	}
	\subfloat[\uni]
	{
		\includegraphics[width=0.45\linewidth]{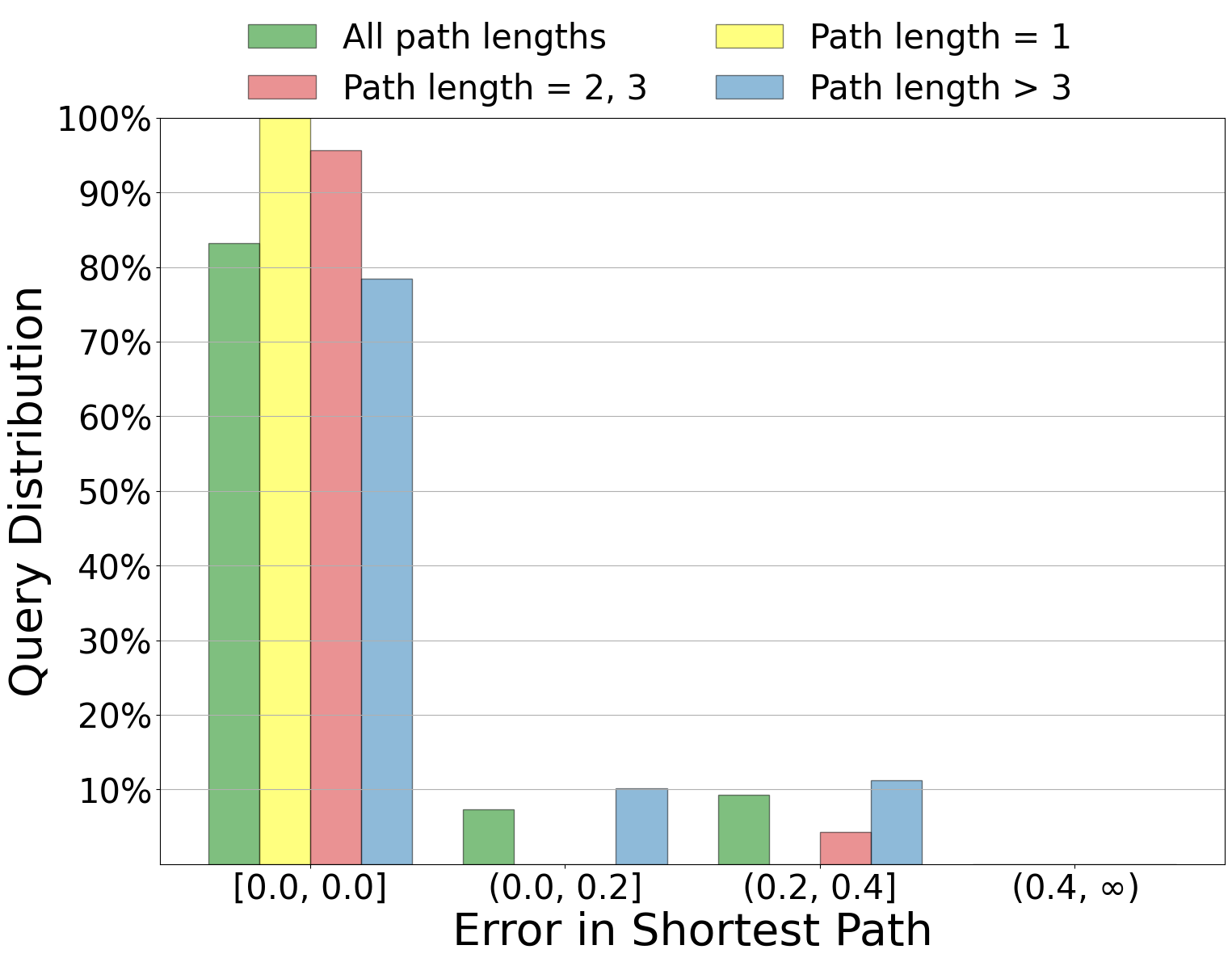}
		\Description[This figure shows the error in shortest path at $\eps=0.5$ for the EE dataset for the \uni version. Most queries return the exact shortest path length.]{A very high percentage of queries
(more than 80\% for both \inter and \uni) are evaluated exactly. Moreover, error in shorter path lengths are less.}
		\label{fig:sp_freq_EE_NOFAN_05}
	}
	\caption{Error in shortest path, $\eps=0.5$, EE: Most queries return the exact shortest path length.}
	\label{fig:sp query accuracy}
\end{figure*}

\begin{figure*}[t]
  	\includegraphics[width=0.45\linewidth]{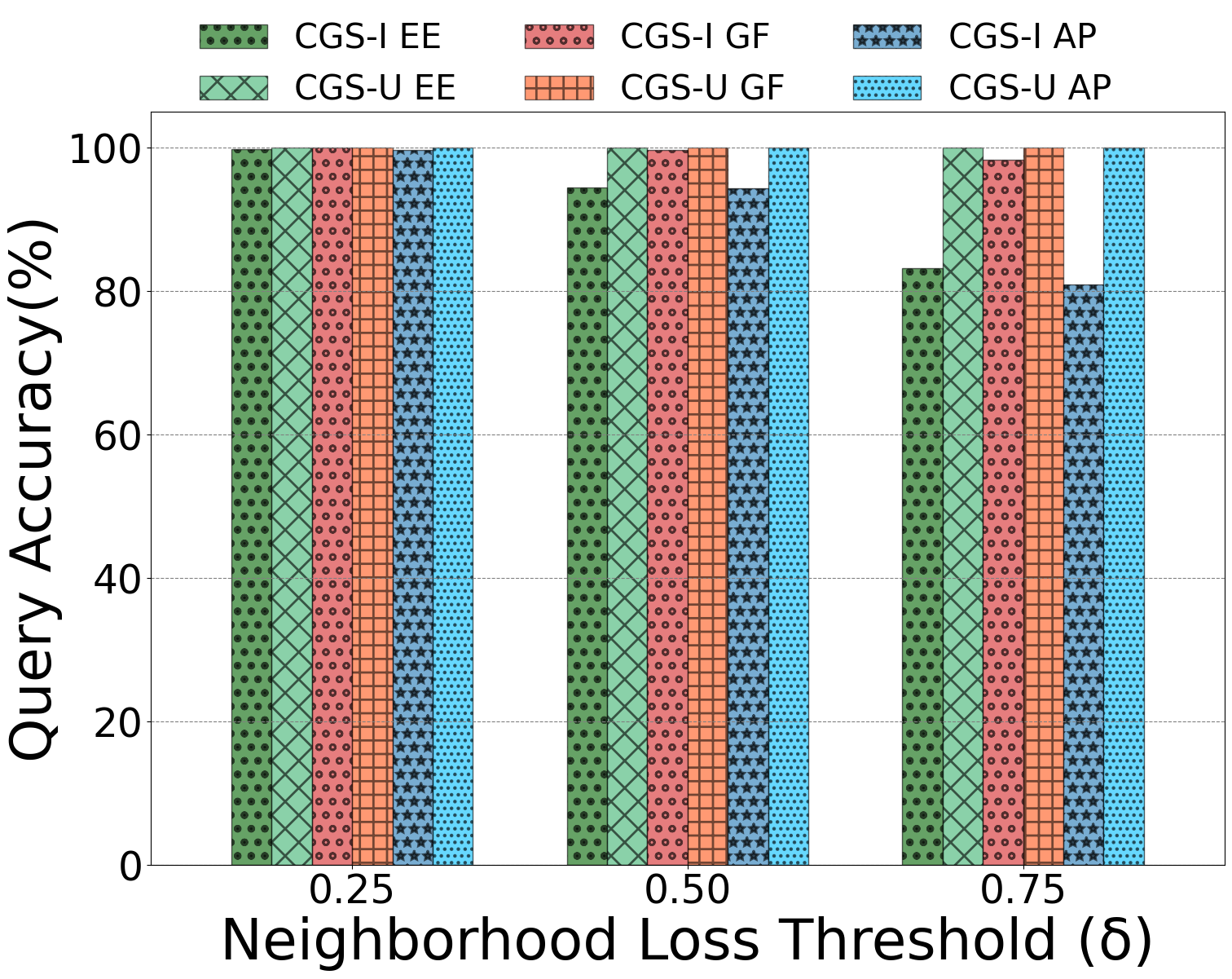}
	\Description[This figure shows the percentage of reachability queries that are reported correctly over a set of 100,000 random
reachability queries for the EE, GF and AP datasets. \inter is accurate at low error thresholds while \uni is always accurate.]{The accuracy of \uni is always $100\%$ for all values of $\eps$ and all datasets.The accuracy of \inter declines as \eps increases.}
	\caption{Reachability query accuracy: \inter is accurate at low error thresholds while \uni is always accurate.}
	\label{fig:rq_acc}
\end{figure*}

\begin{figure*}[t]
	\subfloat[Neighborhood query]
	{
		\includegraphics[width=0.45\linewidth]{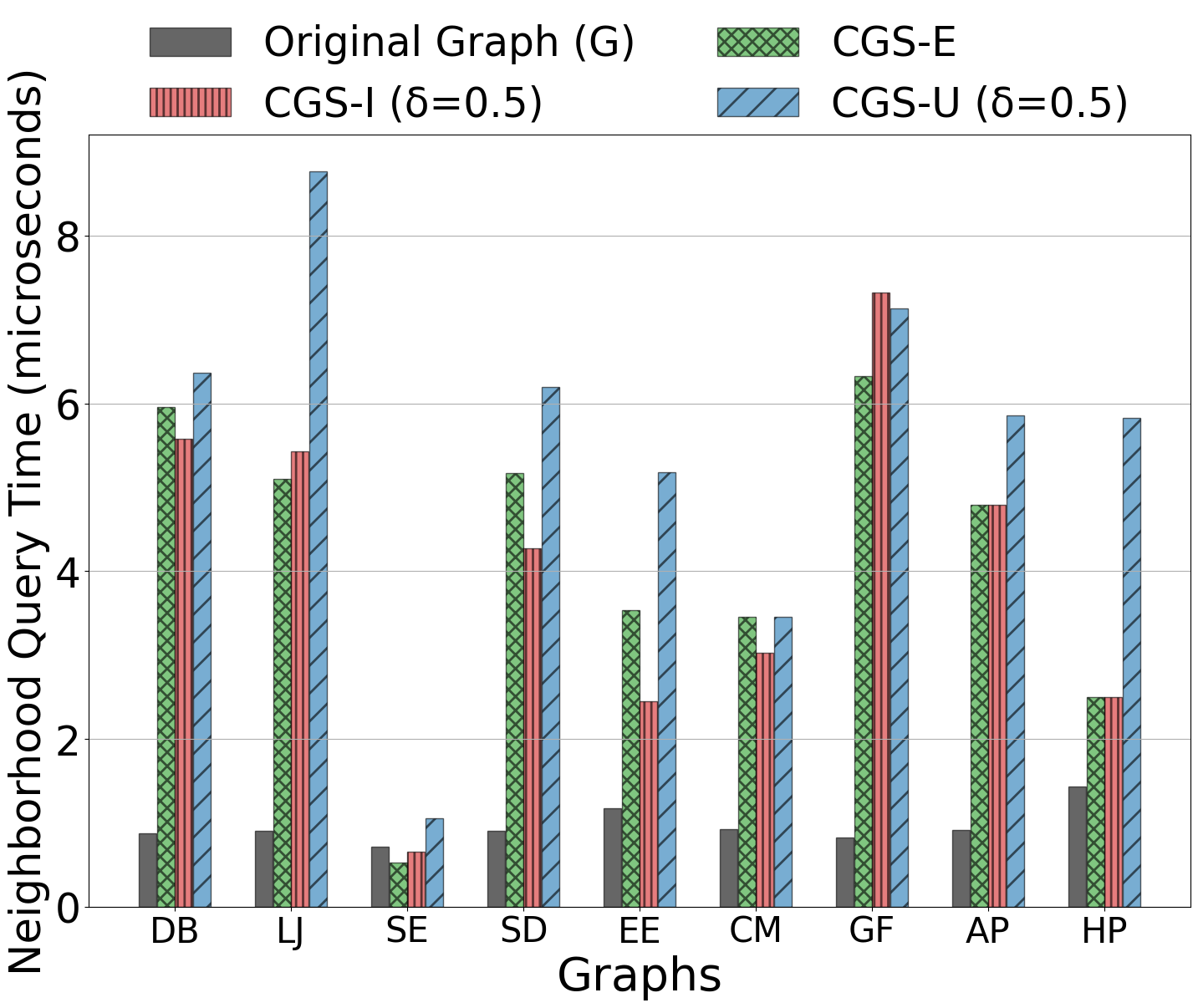}
		\Description[This figure demonstrates the query response time for the neighborhood query in microseconds for all the real graphs.]{This is done for all the three variants using local decompression i.e. when the queries are run directly on
the summary graph.}
		\label{fig:nq_time}
	}
	\subfloat[Shortest path query]
	{
		\includegraphics[width=0.45\linewidth]{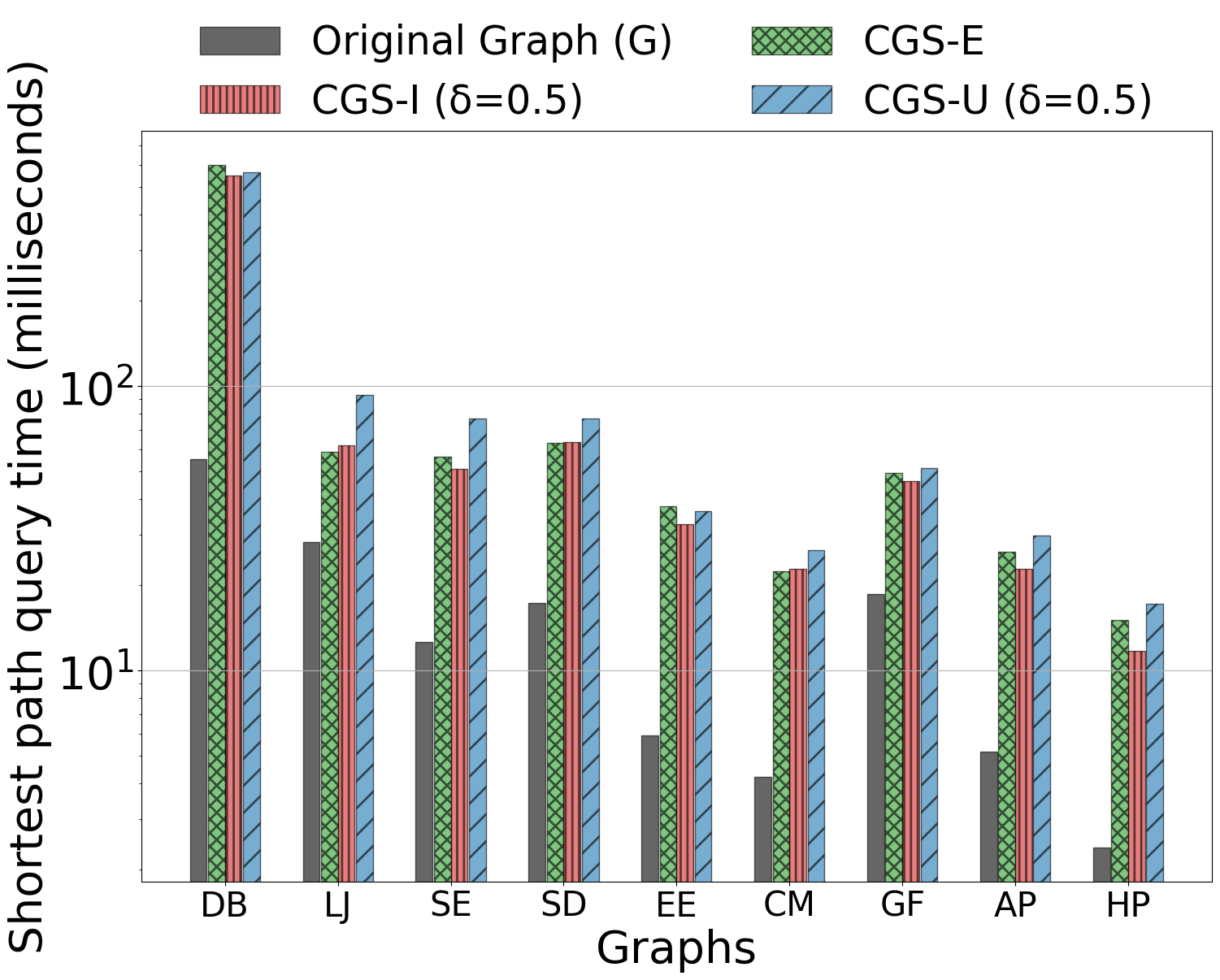}
		\Description[This figure demonstrates the query response time for the shortest path query for all the real graphs.]{This is done for all the three variants using local decompression i.e. when the queries are run directly on
the summary graph. While the neighborhood queries took microseconds to run, the shortest
path queries finish in milliseconds.}
		\label{fig:sp_time}
	}
	\caption{Query response times: Queries on the summary graphs are slower due to local decompression.}
	\label{fig:query performance}
\end{figure*}
	
\begin{figure*}[t]
	\includegraphics[width=0.45\linewidth]{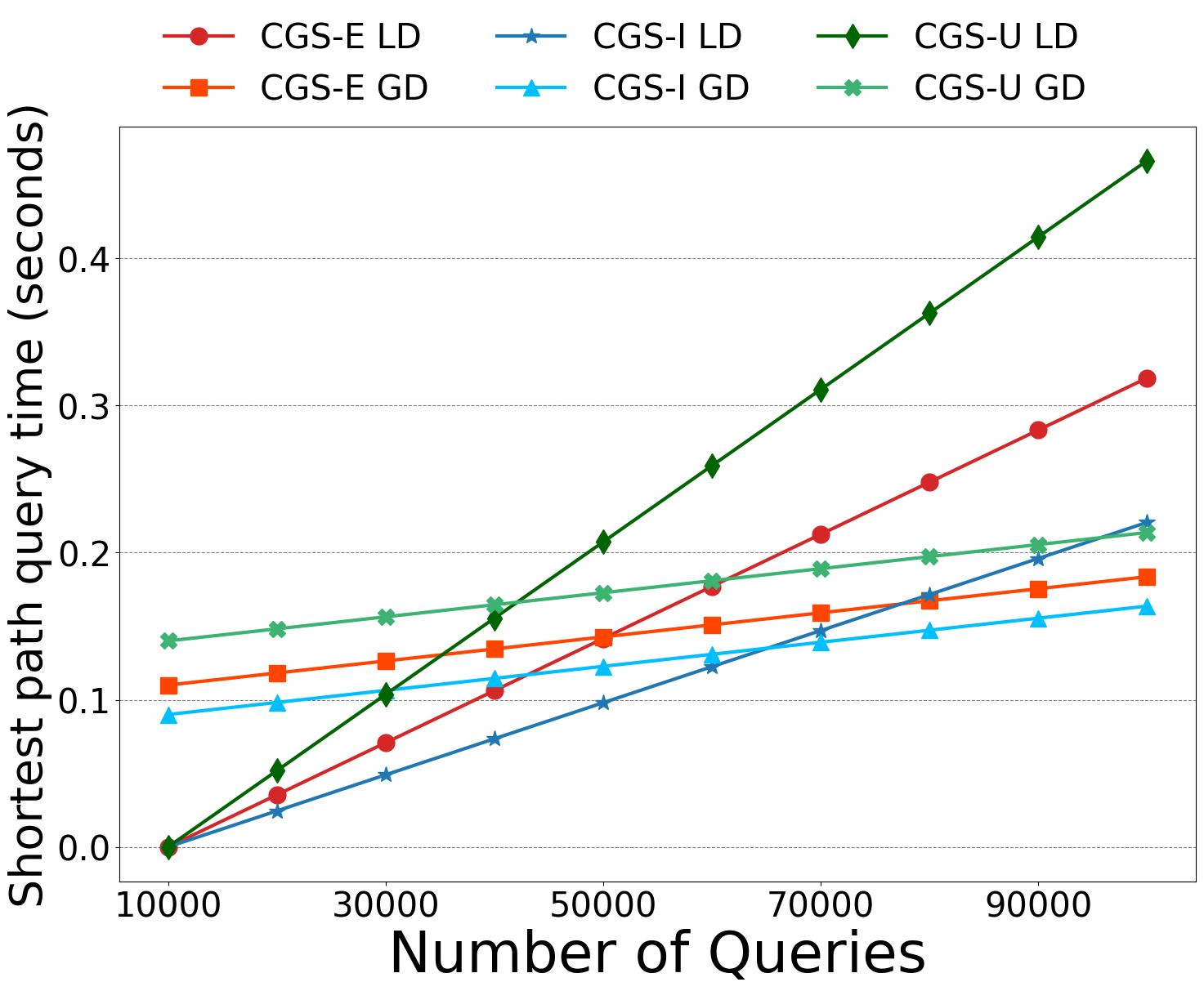}
	\Description[This figure demonstrates the time required for varying number of shortest path queries in seconds for the Local Decompression(LD) vs. Global Decompression(GD) for all the three variants. When the number of queries is large, global decompression is better than local decompression.]{For \iu, the savings in global decompression starts only when 50,000 queries are run. The cutoff points for \uni and \inter are around 40,000 and 65,000, respectively.}
	\caption{Local (LD) vs. global decompression (GD), EE: When the number of queries is large, global decompression is better than local decompression.}
	\label{fig:sp_time_GD_LD}
\end{figure*}

\subsubsection{\bf Shortest Path Queries}

To evaluate shortest path queries, 100,000 random pairs of nodes are chosen for
each graph dataset. For each chosen pair $(u,v)\in G$, the error in shortest
path is computed as $error(u,v)=|\dist_{G}(u,v) -
\dist_{G_r}(u,v)|/\dist_{G}(u,v)$, where $\dist_{G}$ is the (actual) shortest path
distance in $G$, and $\dist_{G_r}$ is the (approximate) shortest path in the reconstructed graph $G_r$.
Fig.~\ref{fig:sp_error} shows the average errors for \inter and \uni on all the
graphs.  For $\eps=0.5$,
error in most datasets $< 0.2$.
Fig.~\ref{fig:sp query accuracy} shows that a very high percentage of queries
(more than 80\% for both \inter and \uni) are evaluated exactly.  Moreover,
error in shorter path lengths are less. This shows the effectiveness of query
processing of \cn.

\subsubsection{\bf Reachability queries}

Fig.~\ref{fig:rq_acc} shows the percentage of reachability queries that are reported correctly over a set of 100,000 random
reachability queries. These queries are constructed by selecting 100,000 random and unique node pairs $(u, v)$ from the graph, ensuring $u \ne v$. For each pair, reachability is evaluated on both the original and the reconstructed graph. The results are then compared to obtain the accuracy of reachability preservation in the summary graph. The reachability query accuracy measures how often the summary graph correctly preserves the reachability compared to the original graph for node pairs that are actually reachable. The accuracy of \uni is always $100\%$ for all values of
$\eps$ and datasets. This follows from Table~\ref{tab:query properties}.
On the other hand, the accuracy of \inter declines as \eps increases, since it
leads to more disconnected components, as explained in Sec.~\ref{sec:query
characteristics}. Nevertheless, \inter offers fairly high accuracy for all
values of $\eps$.

\subsubsection{\bf Query Response Times}

Fig.~\ref{fig:query performance} shows the response times of the different
queries using local decompression, i.e., when the queries are run directly on
the summary graph.  (Times for reachability queries are not shown as they are similar to those for the shortest
path queries.) While the neighborhood queries take microseconds to run, the shortest
path queries finish in milliseconds. The queries run faster on the original graph, since they do not require local decompression.

\subsubsection{\bf Local versus Global Decompression}

The next set of experiments seeks to answer the question: which decompression
method to use -- local or global. While local decompression does not require reconstructing the entire graph; if, however, the graph is reconstructed, the queries run faster, since local decompression operations are not needed. There is, thus, a trade-off.  Fig.~\ref{fig:sp_time_GD_LD} shows
the times required for varying number of shortest path queries.  For \iu, the
savings in global decompression starts only when 50,000 queries are run.
Assuming, the time taken for running a query on the locally decompressed graph is $t_l$, that in the reconstructed graph is $t_g$ and the time to reconstruct the graph is $t_r$, we get $50000 \times t_l \approx t_r + 50000 \times t_g$. The cutoff points for \uni and \inter are around 40,000 and 65,000, respectively.

\begin{figure*}[t]
\minipage{0.45\textwidth}
	\includegraphics[width=\linewidth]{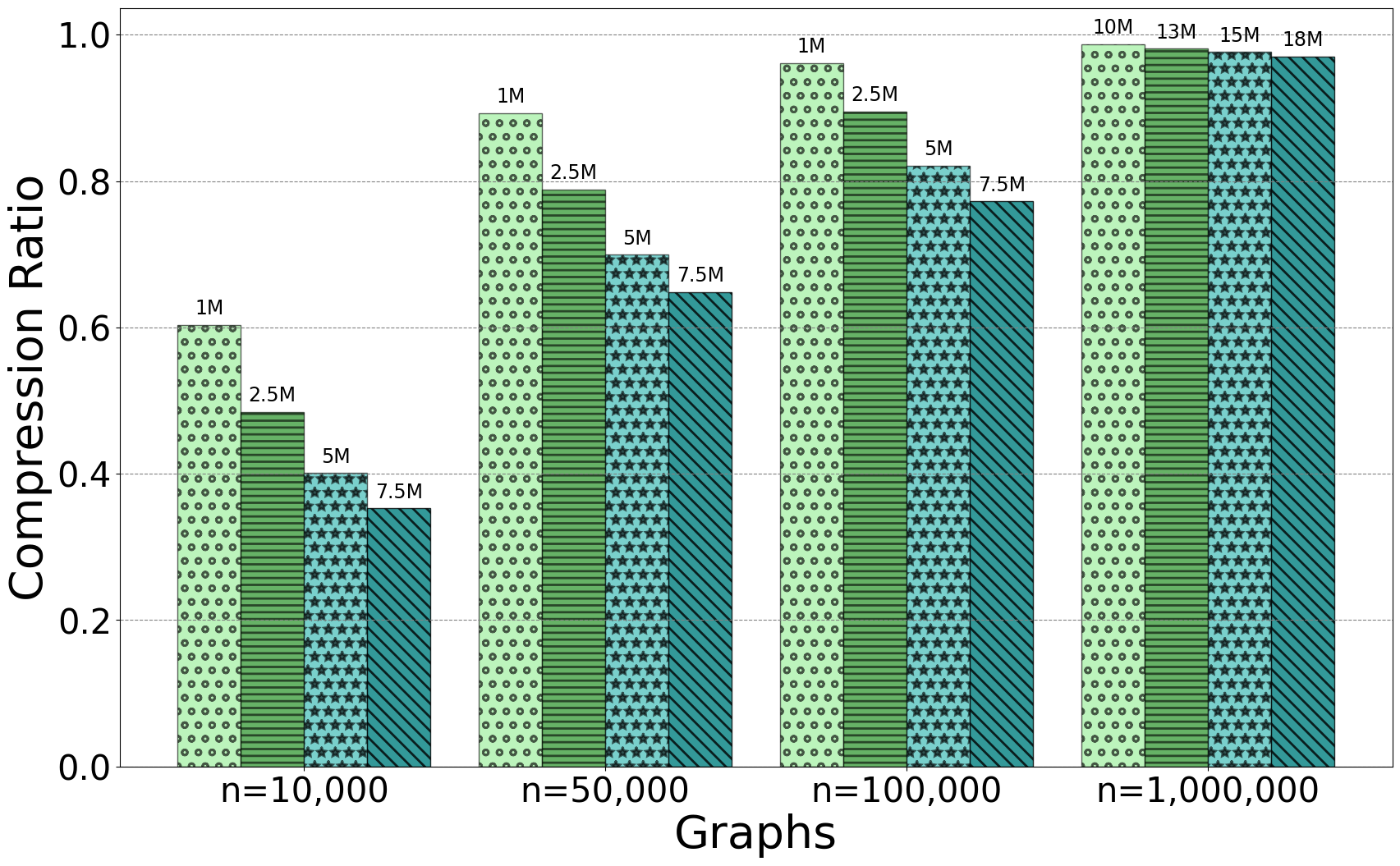}
	\Description[This figure shows the scalability performance of \iu in terms of compression ratio on BA graphs of varying sizes of nodes. Compression is better for denser graphs where the number of edges $m$ is shown above each bar, where $M$ denotes one million).]{For a given number of nodes such as $10^5$, the compression improves as the number of edges increases. As mentioned earlier, \cn offers better compression for denser graphs.}
	\caption{Scalability evaluation of CGS-E in terms of compression ratio on synthetic BA graphs: Compression is better for denser graphs (The number of edges $m$ is shown above each bar, where $M$ denotes one million).}
	\label{fig:AB_cr_scalability}
\endminipage
\hfill
	{\ }
\minipage{0.45\textwidth}
	\includegraphics[width=\linewidth]{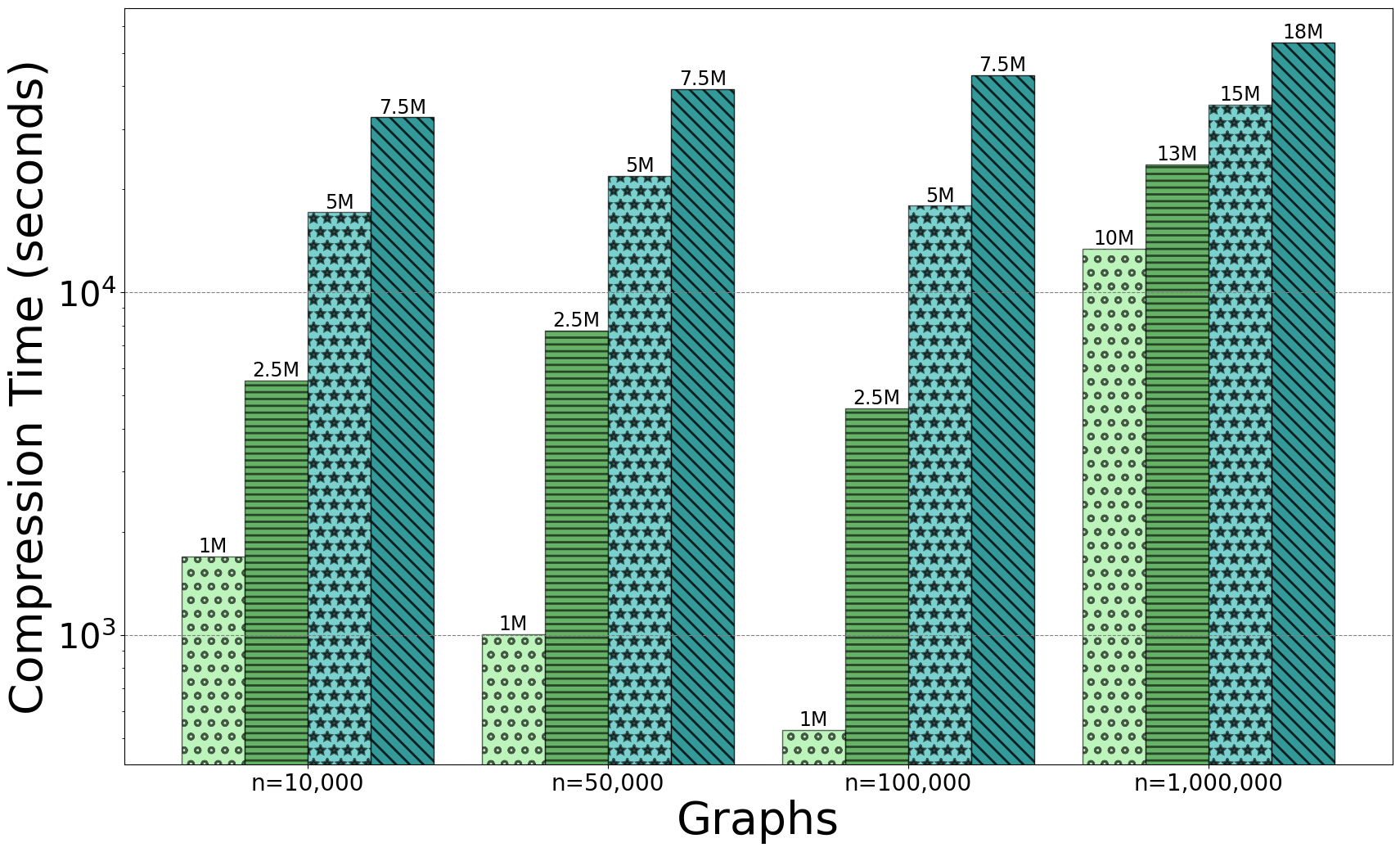}
	\Description[This figure demonstrates the compression time(in seconds) results of \iu on the BA graphs. As the graphs become larger and denser, the compression time increases.]{The compression time increases owing to larger number of eligible pairs of nodes with positive compression gain.}
	\caption{Scalability evaluation of \iu in terms of compression time on synthetic BA graphs: More time is required to compress the denser graphs.}
	\label{fig:AB_ct_scalability}
	\endminipage\hfill
\end{figure*}

\begin{figure*}[t]
\minipage{0.45\textwidth}
	\includegraphics[width=\linewidth]{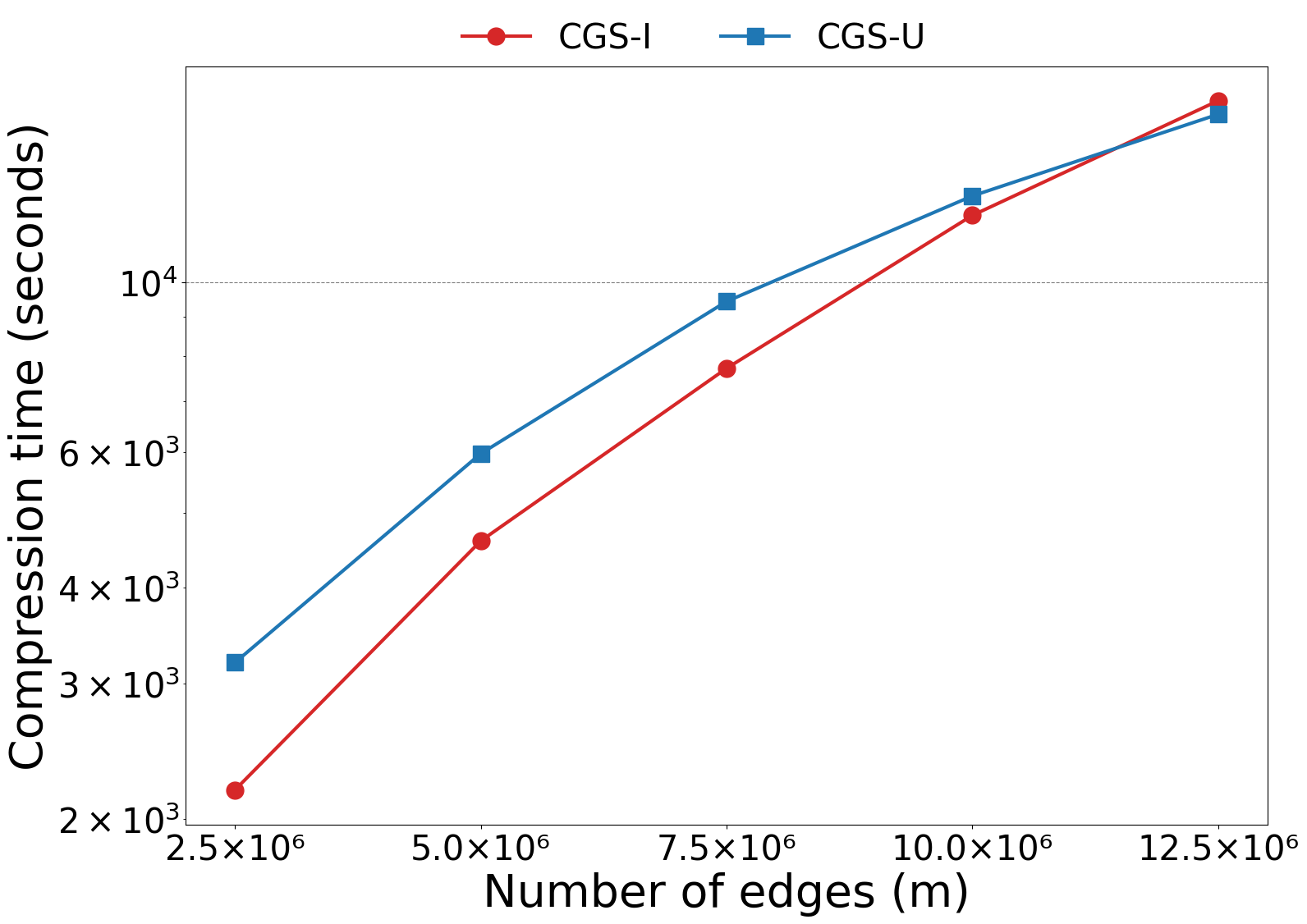}
	\Description[This figure demonstrates the scalability evaluation of lossy schemes - \uni and \inter in terms of compression time (in seconds) on synthetic BA graphs. Denser graphs take more time to compress.]{As the number of edges increases, the compression time increases.  Overall, the cost of \cn framework on million scale graph varies from few minutes to few hours.}
	\cutfigabove
\caption{Scalability evaluation of lossy schemes in terms of compression time on synthetic BA graphs: Denser graphs take more time to compress.}
	\label{fig:AB_lossy_ct_scalability}
\endminipage
\hfill
	{\ }
\minipage{0.45\textwidth}
	\includegraphics[width=\linewidth]{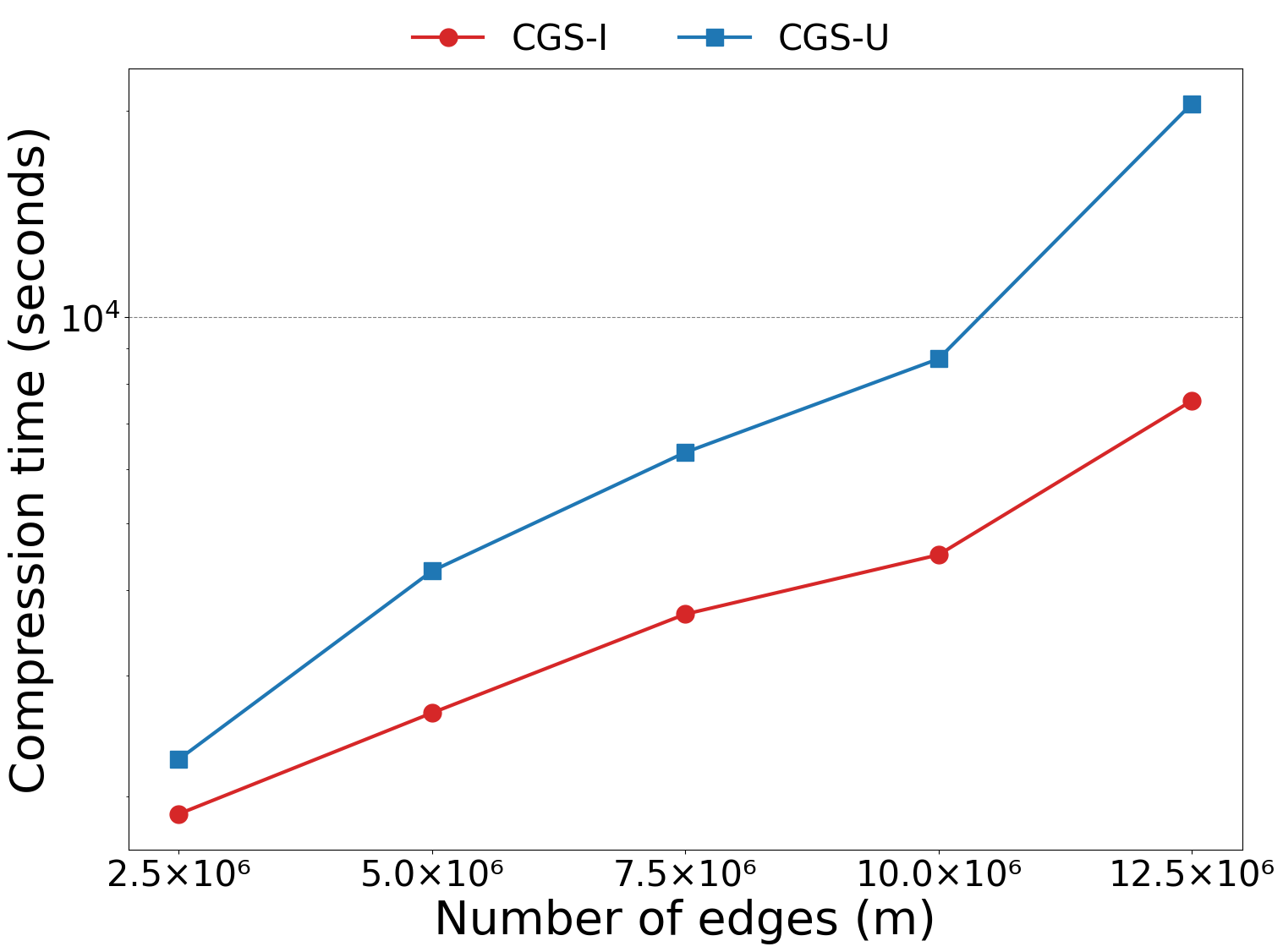}
	\Description[This figure demonstrates the scalability evaluation of lossy schemes - \uni and \inter in terms of compression time (in seconds) on synthetic ER graphs. Denser graphs take more time to compress.]{As the number of edges increases, the compression time increases.  Overall, the cost of \cn framework on million scale graph varies from few minutes to few hours.}
	\caption{Scalability evaluation of lossy schemes in terms of compression time on synthetic ER graphs: Denser graphs take more time to compress.}
	\label{fig:RG_lossy_ct_scalability}
	\endminipage\hfill
\end{figure*}

\begin{figure}[t]
	\subfloat[Lossless schemes]
	{
		\includegraphics[width=0.49\linewidth]{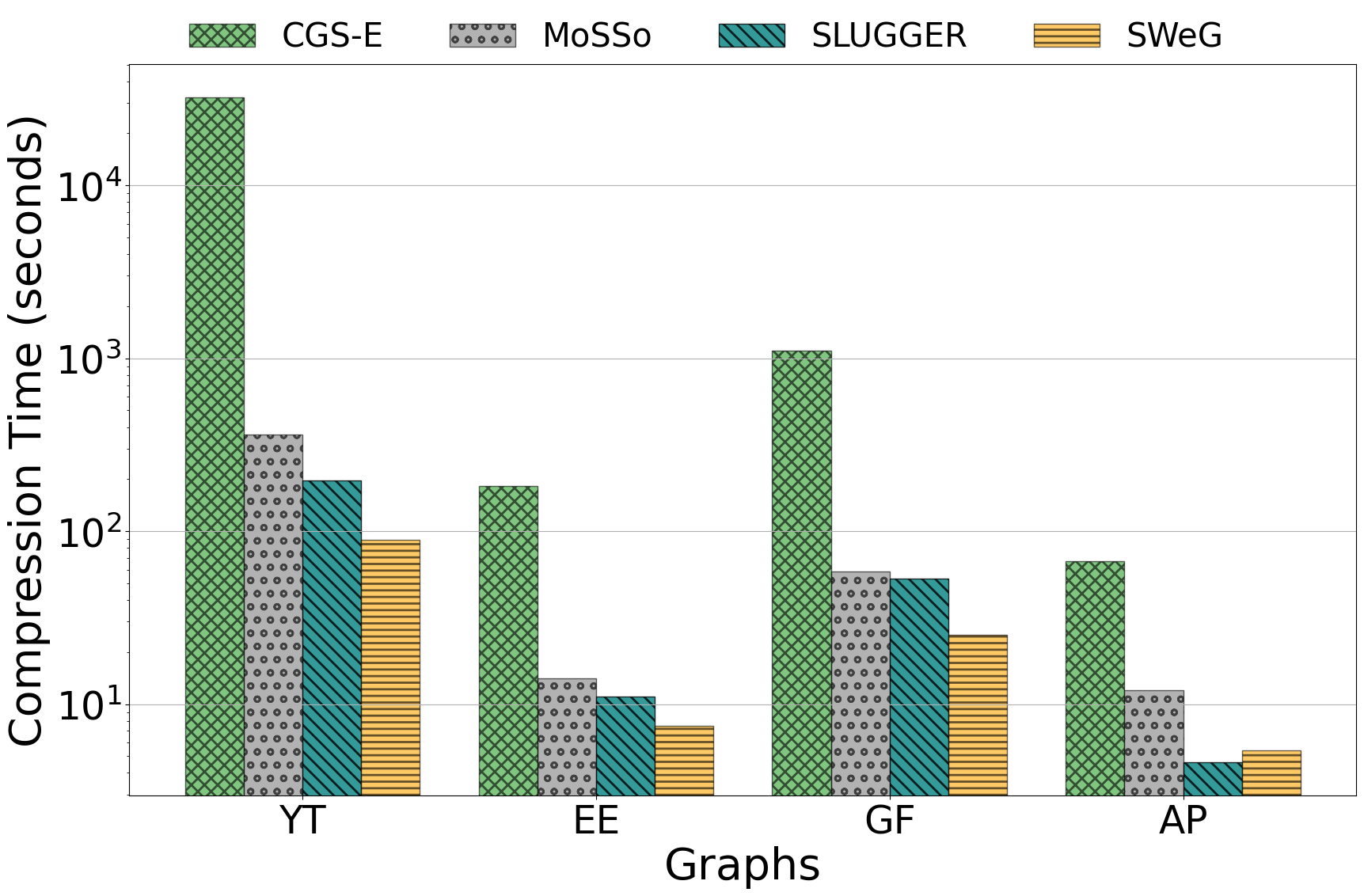}
		\Description[This figure shows the compression time(in seconds) for lossless schemes on 4 real datasets i.e. YT, EE, GF and AP. The results on other datasets are similar. The compression times of the \cn variants are higher than its competitors i.e. MoSSo, SLUGGER and SWeG.] {Though the compression times are higher, the absolute times are practical. Since summarization is typically an offline and infrequent activity, the higher running time of \cn does not reduce its practical applicability.} 
	}
	\subfloat[Lossy schemes]
	{
		\includegraphics[width=0.49\linewidth]{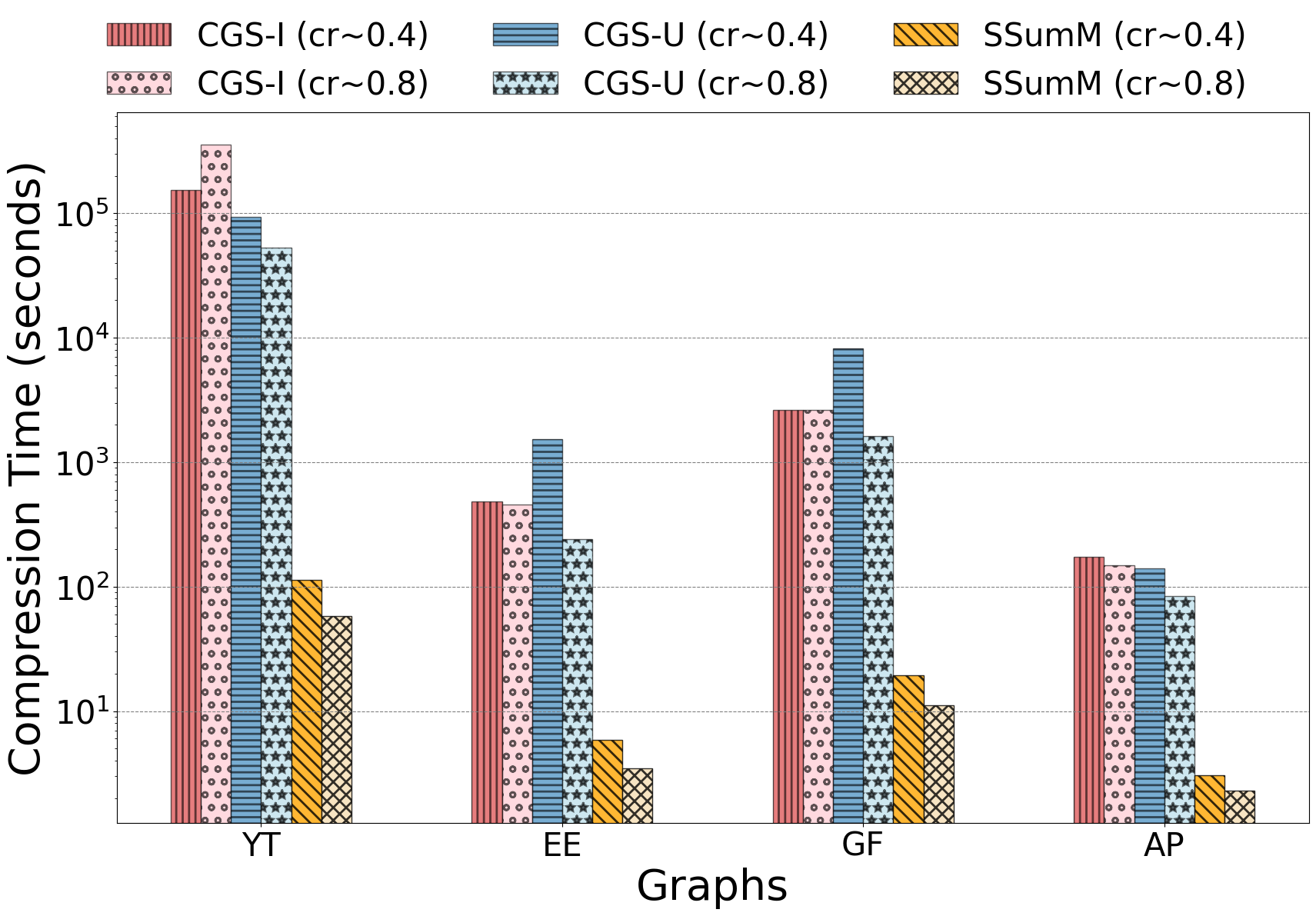}
		\Description[This figure shows the compression time(in seconds) for lossy schemes on 4 real datasets i.e. YT, EE, GF and AP. The results on other datasets are similar. The compression times of the \cn variants are higher than its competitor i.e. SSumM.] {Though the compression times are higher, the absolute times are practical. Since summarization is typically an offline and infrequent activity, the higher running time of \cn does not reduce its practical applicability. For the lossy schemes, higher compression, i.e., lower compression ratio typically comes at a cost of higher compression time.} }
	\caption{Scalability evaluation in terms of compression time on real graphs: \cn algorithms are slower but practical for real graphs.}
	\label{fig:ct results}
\end{figure}

\subsection{Scalability Results}
\label{sec:exp scalability}

This section addresses the research question RQ6. To assess the scalability of the \cn framework, we evaluate on both synthetic as well as real world graphs. For evaluation on synthetic datasets, we consider the BA graphs and the ER graphs of different sizes.  Fig.~\ref{fig:AB_cr_scalability} shows the scalability performance of \iu in terms of compression ratio on BA graphs of varying sizes. For a given number of nodes such as $10^5$, the compression improves as the number of edges increases. On the other hand, for a given number of edges such as $10^6$, the compression decreases as the number of nodes increases. This agrees with our earlier observation in Sec.~\ref{sec:exp compression ratio}, that \cn offers better compression for denser graphs.

Fig.~\ref{fig:AB_ct_scalability} shows the compression time results of \iu on the same BA graphs. As the graphs become larger and denser, the compression time increases owing to larger number of eligible pairs of nodes with positive compression gain. 
Fig.~\ref{fig:AB_lossy_ct_scalability}  and Fig.~\ref{fig:RG_lossy_ct_scalability} shows the compression time results for the lossy variants on BA and ER graphs, respectively. The number of nodes is fixed to $10,000$ and number of edges varies from $2.5 \times 10^{6}$ to $12.5 \times 10^{6}$. Overall, the cost of \cn framework on million scale graph varies from few minutes to few hours.

Fig.~\ref{fig:ct results} shows the compression time results for lossless schemes and lossy schemes on 4 representative real datasets. The results on other datasets are similar. While the compression times of the \cn variants are higher than its competitors, the absolute times are practical. 
Given that \cn offers high configurability, along with superior
compression and support for arbitrary queries with fairly high accuracy, its high compression time comes as a trade-off. Moreover, since summarization is typically an
offline and infrequent activity, the higher running time of \cn does not reduce its practical applicability. 
For the lossy schemes, higher compression, i.e., lower compression ratio typically comes at a cost of higher compression time.

\begin{figure*}[t]
	\includegraphics[width=0.45\linewidth]{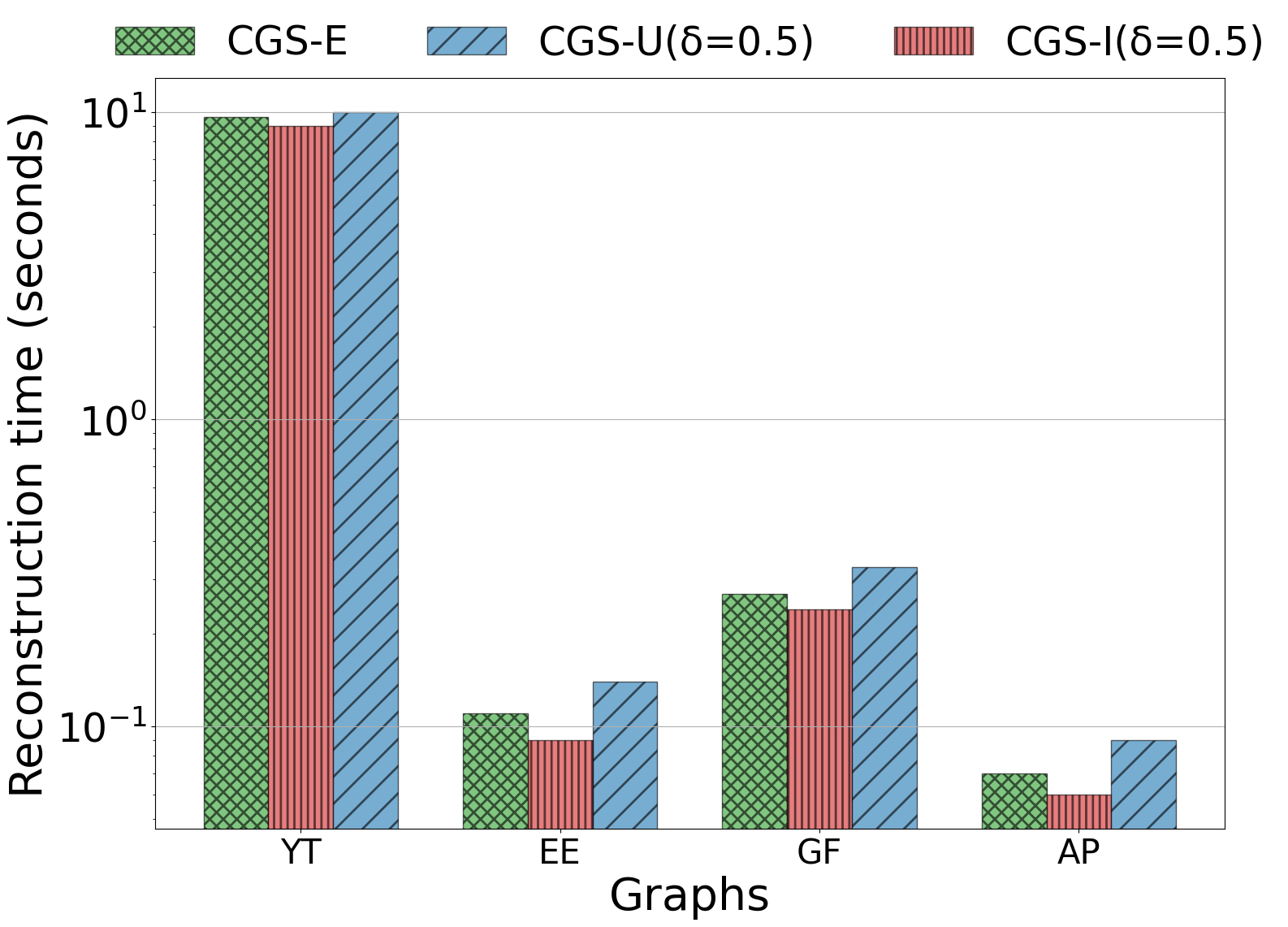}
	\Description[This figure shows the reconstruction time(in seconds) for the YT, EE, GF and AP graphs for all the three variants. Among the three \cn variants, \uni has the highest reconstruction time, followed by \iu, and \inter has the lowest.]{This behavior is justified in Sec.~\ref{sec:exp effect of neighborhood loss tolerance threshold}.}
	\caption{Reconstruction time results: \cn reconstruction times are practical.}
	\label{fig:reconstruction_time}
\cutfigbelow
\end{figure*}

Fig.~\ref{fig:reconstruction_time} shows the reconstruction time for the same graphs. Among the three \cn variants, \uni has the highest reconstruction time, followed by \iu, and \inter has the lowest. This behavior is justified in Sec.~\ref{sec:exp effect of neighborhood loss tolerance threshold}. 

\subsection{Summary of the Experiments}
\label{sec:exp summary}

We conclude the section by listing the key takeaways. \iu offers superior summarization on both real and synthetic graphs than all the state-of-the-art lossless schemes. For denser graphs, the compression is usually more. Among the lossy variants, \inter usually 
offers smaller summaries (by up to $46\%$) and lower reconstruction errors (by up to $44\%$) than SSumM \cite{lee2020ssumm} on real graphs. However, on synthetic BA and ER graphs, \uni offers the best compression.

Queries for \iu, by nature, are exact.  For neighborhood queries, while \inter is usually better than \uni, for shortest path queries, \uni is usually better than \inter.  For reachability queries, \uni is 100\% accurate, while \inter has a fairly high accuracy.
Though compression times of \cn are higher than its competitors, they are practical. Evaluation on several synthetic and real world graph datasets show that the \cn framework is scalable.  Further,  \inter and \uni cannot substitute each other because of their distinct summarization and query characteristics, as stated in Sec.~\ref{sec:formulation}. Finally, as observed in Sec.~\ref{sec:exp effect of neighborhood loss tolerance threshold}, among the three variants of \cn, there is no scheme that universally offers the most compressed summaries on all graphs. Therefore, each variant uniquely contributes towards the goal of configurable graph summarization.

\section{Conclusions}
\label{sec:conc}

In this paper, we proposed a configurable graph summarization framework \cn, based upon the common neighborhood of nodes. The framework offers three summarization variants. While \inter and \uni avoid false positive and false negative edges respectively, \iu is a lossless scheme. The lossy variants of \cn, \inter and \uni, allow a parameterized control over the neighborhood loss, by using a maximum loss tolerance threshold.
This enables the lossy schemes to reconstruct the input graph with bounded neighborhood loss. Moreover, \cn  can answer neighborhood queries with bounded quality guarantees, and other graph queries such as, shortest path queries and reachability queries with fairly high accuracy. Empirical evaluation on several real-world graphs confirm the efficacy and efficiency of \cn. In future, we plan to extend this framework to handle  dynamic graphs and streaming graphs.

\bibliographystyle{ACM-Reference-Format}
\bibliography{papers}

\end{document}